\newcommand{\footremember}[2]{%
	\footnote{#2}
	\newcounter{#1}
	\setcounter{#1}{\value{footnote}}%
}
\newcommand{\footrecall}[1]{%
	\footnotemark[\value{#1}]%
}
\DeclareMathOperator*{\argmax}{arg\,max}
\DeclareMathOperator*{\argmin}{arg\,min}
\newcommand{\betab}{\bm{\beta}}
\newcommand{\epsilonb}{\bm{\varepsilon}}
\newcommand{\Xb}{\bm{X}}
\newcommand{\Yb}{\bm{Y}}
\newcommand{\yb}{\bm{Y}}
\newcommand{\zb}{\bm{z}}
\newcommand{\R}{\mathbb{R}}
\newcommand{\N}{\mathcal{N}}
\newcommand{\zerob}{\mathbf{0}}
\newcommand{\Ib}{\mathbf{I}}
\numberwithin{equation}{section}
\newtheorem{definition}{Definition}
\theoremstyle{plain}
\newtheorem{theorem}{Theorem}
\newtheorem{lemma}{Lemma}
\newtheorem{proposition}{Proposition}
\newtheorem{remark}{Remark}
\newcommand{\numbereqn}{\addtocounter{equation}{1}\tag{\theequation}} 
\begin{document}

\hypersetup{linkcolor=blue}

\date{\today}

\author{Ray Bai\footremember{USCStat}{Department of Statistics, University of South Carolina, Columbia, SC 29208.}\thanks{Co-first author. Email: \href{mailto:RBAI@mailbox.sc.edu}{\tt RBAI@mailbox.sc.edu} }, Gemma E. Moran\footremember{Wharton}{Data Science Institute, Columbia University, New York, NY 10027.}\thanks{Co-first author. Email: \href{gm2918@columbia.edu}{\tt gm2918@columbia.edu} } , Joseph L. Antonelli\footremember{UFStat}{Department of Statistics, University of Florida, Gainesville, FL 32611.}\thanks{Co-first author. Email: \href{mailto:jantonelli@ufl.edu}{\tt jantonelli@ufl.edu} } , \\
Yong Chen\footremember{DBEI}{Department of Biostatistics, Epidemiology, and Informatics, University of Pennsylvania, Philadelphia, PA 19104.}, Mary R. Boland\footrecall{DBEI}}

\title{Spike-and-Slab Group Lassos for Grouped Regression and Sparse Generalized Additive Models 
}

\maketitle

\begin{abstract}
We introduce the spike-and-slab group lasso (SSGL) for Bayesian estimation and variable selection in linear regression with grouped variables. We further extend the SSGL to sparse generalized additive models (GAMs), thereby introducing the first nonparametric variant of the spike-and-slab lasso methodology. Our model simultaneously performs group selection and estimation, while our fully Bayes treatment of the mixture proportion allows for model complexity control and automatic self-adaptivity to different levels of sparsity. We develop theory to uniquely characterize the global posterior mode under the SSGL and introduce a highly efficient block coordinate ascent algorithm for maximum a posteriori (MAP) estimation. We further employ de-biasing methods to provide uncertainty quantification of our estimates. Thus, implementation of our model avoids the computational intensiveness of Markov chain Monte Carlo (MCMC) in high dimensions. We derive posterior concentration rates for both grouped linear regression and sparse GAMs when the number of covariates grows at nearly exponential rate with sample size. Finally, we illustrate our methodology through extensive simulations and data analysis.
\end{abstract}

\section{Introduction} \label{intro}

\subsection{Regression with Grouped Variables} \label{groupedregression}

Group structure arises in many statistical applications. For example, in multifactor analysis of variance, multi-level categorical predictors are each represented by a group of dummy variables. In genomics, genes within the same pathway may form a group at the pathway or gene set level and act in tandem to regulate a biological system. In each of these scenarios, the response $\yb_{n \times 1}$ can be modeled as a linear regression problem with $G$ groups:
\begin{equation} \label{groupmodel}
\yb = \displaystyle \sum_{g=1}^{G} \Xb_g \betab_g + \bm{\varepsilon},
\end{equation}
where $\bm{\varepsilon} \sim \N_n ( \bm{0}, \sigma^2 \bm{I}_n)$, $\betab_g$ is a coefficients vector of length $m_g$, and $\bm{X}_g$ is an $n \times m_g$ covariate matrix corresponding to group $g = 1, \ldots G$.  Even in the absence of grouping information about the covariates, the model (\ref{groupmodel}) subsumes a wide class of important nonparametric regression models called \textit{generalized additive models} (GAMs). In GAMs, continuous covariates may be represented by groups of basis functions which have a nonlinear relationship with the response.  We defer further discussion of GAMs to Section \ref{NPSSLIntro}.

It is often of practical interest to select groups of variables that are most significantly associated with the response. To facilitate this group-level selection, \citet{YuanLin2006} introduced the group lasso, which solves the optimization problem,
\begin{equation} \label{grouplassoobj}
\displaystyle \argmin_{\betab} \frac{1}{2 } \lVert \yb - \displaystyle \sum_{g=1}^{G} \Xb_g \betab_g \rVert_2^2 + \lambda \displaystyle \sum_{g=1}^{G} \sqrt{m_g} \lVert \betab_g \rVert_2,
\end{equation}
where $|| \cdot ||_2$ is the $\ell_2$ norm. In the frequentist literature, many variants of model (\ref{grouplassoobj}) have been introduced, which use some combination of $\ell_1$ and $\ell_2$ penalties on the coefficients of interest (e.g., \cite{JacobObozinskiVert2009, LiNanZhu2015, SimonFriedmanHastieTibshirani2013}).  

In the Bayesian framework, selection of relevant groups under model (\ref{groupmodel}) is often done by placing spike-and-slab priors on each of the groups $\betab_g$ (e.g., \cite{XuGhosh2015, LiquetMengersenPettittSutton2017, YangNarisetty2019, NingGhosal2018}). These priors typically take the form,
\begin{align} \label{pointmassspikeandslab}
\begin{array}{l}
\pi(\bm{\beta} \vert \bm{\gamma}) = \displaystyle \prod_{g=1}^{G} [ (1-\gamma_g) \delta_0 (\betab_g) + \gamma_g \pi(\betab_g) ],\\
\pi(\bm{\gamma} | \theta) = \displaystyle \prod_{g=1}^{G} \theta^{\gamma_g} (1-\theta)^{1-\gamma_g}, \\
\theta \sim \pi(\theta),
\end{array}
\end{align}
where $\bm{\gamma}$ is a binary vector that indexes the $2^G$ possible models, $\theta \in (0, 1)$ is the mixing proportion, $\delta_{0}$ is a point mass at $\bm{0}_{m_g} \in \R^{m_g}$ (the ``spike''), and $\pi(\betab_g)$ is an appropriate ``slab'' density (typically a multivariate normal distribution or a scale-mixture multivariate normal density). With a well-chosen prior on $\theta$, this model will favor parsimonious models in very high dimensions, thus avoiding the curse of dimensionality.

\subsection{The Spike-and-Slab Lasso} \label{spikeandslablasso}
For Bayesian variable selection, point mass spike-and-slab priors (\ref{pointmassspikeandslab}) are interpretable, but they are computationally intractable in high dimensions, due in large part to the combinatorial complexity of updating the discrete indicators $\bm{\gamma}$. As an alternative, fully continuous variants of spike-and-slab models have been developed. For continuous spike-and-slab models, the point mass spike $\delta_{0}$ is replaced by a continuous density heavily concentrated around $\bm{0}_{m_g}$. This not only mimics the point mass but it \textit{also} facilitates more efficient computation, as we describe later.

In the context of sparse normal means estimation and univariate linear regression, \citet{Rockova2018} and \citet{RockovaGeorge2018} introduced the univariate spike-and-slab lasso (SSL). The SSL places a mixture prior of two Laplace densities on the individual coordinates $\beta_j$, i.e.
\begin{equation} \label{SSlasso}
\pi( \betab | \theta) = \displaystyle \prod_{j=1}^{p} [( 1 - \theta) \psi (\beta_j | \lambda_0 ) + \theta \psi (\beta_j | \lambda_1 )],
\end{equation}
where $\theta \in (0,1)$ is the mixing proportion and $\psi(\cdot | \lambda )$ denotes a univariate Laplace density indexed by hyperparameter $\lambda$, i.e. $\psi(\beta | \lambda ) = \frac{\lambda}{2} e^{-\lambda | \beta | }$.  Typically, we set $\lambda_0 \gg \lambda_1$ so that the spike is heavily concentrated about zero. Unlike \eqref{pointmassspikeandslab}, the SSL model \eqref{SSlasso} does not place any mass on exactly sparse vectors. Nevertheless, the global posterior mode under the SSL prior may be exactly sparse. Meanwhile, the slab stabilizes posterior estimates of the larger coefficients so they are not downward biased. Thus, the SSL posterior mode can be used to perform variable selection and estimation simultaneously.

The spike-and-slab lasso methodology has now been adopted for a wide number of statistical problems. Apart from univariate linear regression, it has been used for factor analysis \cite{RockovaGeorge2016, MoranRockovaGeorge2019}, multivariate regression \cite{DeshpandeRockovaGeorge2018}, covariance/precision matrix estimation \cite{DeshpandeRockovaGeorge2018, GanNarisettyLiang2018, LiMcCormickClark2019}, causal inference \cite{AntonelliParmigianiDominici2019}, generalized linear models (GLMs) \cite{TangShenZhangYi2017GLM, TangShenLiZhangWenQianZhuangShiYi2018}, and Cox proportional hazards models \cite{TangShenZhangYi2017Cox}. 

While the SSL (\ref{SSlasso}) induces sparsity on individual coefficients (through the posterior mode), it does not account for group structure of covariates. For inference with structured data in GLMs, \citet{TangShenLiZhangWenQianZhuangShiYi2018} utilized the univariate spike-and-slab lasso prior \eqref{SSlasso} for grouped data where each group had a group-specific sparsity-inducing parameter, $\theta_g$, instead of a single $\theta$ for all coefficients. However, this univariate SSL prior does not feature the ``all in, all out'' selection property of the original group lasso of \citet{YuanLin2006} or the \emph{grouped} and \emph{multivariate} SSL prior, which we develop in this work. 

In this paper, we introduce the \textit{spike-and-slab group lasso} (SSGL) for Bayesian grouped regression and variable selection. Under the SSGL prior, the global posterior mode is exactly sparse, thereby allowing the mode to automatically threshold out insignificant groups of coefficients. To widen the use of spike-and-slab lasso methodology for situations where the linear model is too inflexible, we extend the SSGL to sparse generalized additive models by introducing the \textit{nonparametric spike-and-slab lasso} (NPSSL). To our knowledge, our work is the first to apply the spike-and-slab lasso methodology outside of a parametric setting. Our contributions can be summarized as follows:
\begin{enumerate}
	\item We propose a new group spike-and-slab prior for estimation and variable selection in both parametric and nonparametric settings. Unlike frequentist methods which rely on separable penalties, our model has a \textit{non}-separable and self-adaptive penalty which allows us to automatically adapt to ensemble information about sparsity. 
	\item We introduce a highly efficient block coordinate ascent algorithm for global posterior mode estimation. This allows us to rapidly identify significant groups of coefficients, while thresholding out insignificant ones.
	\item We show that de-biasing techniques that have been used for the original lasso \citep{Tibshirani1996} can be extended to our SSGL model to provide valid inference on the estimated regression coefficients. 
	\item For both grouped regression and sparse additive models, we derive near-optimal posterior contraction rates for both the regression coefficients $\betab$ \textit{and} the unknown variance $\sigma^2$ under the SSGL prior. 
\end{enumerate}
The rest of the paper is structured as follows. In
Section \ref{SSGroupLassoIntro}, we introduce the spike-and-slab group lasso (SSGL). In Section \ref{optimizationSSGL}, we characterize the global posterior mode and introduce efficient algorithms for fast maximum \emph{a posteriori} (MAP) estimation and variable selection. In Section \ref{sec:inference}, we utilize ideas from the de-biased lasso to perform inference on the SSGL model. In Section \ref{NPSSLIntro}, we extend the SSGL to nonparametric settings by proposing the nonparametric spike-and-slab lasso (NPSSL). In Section \ref{asymptotictheory}, we present asymptotic theory for the SSGL and the NPSSL. Finally, in Sections \ref{Simulations} and \ref{dataanalysis}, we provide extensive simulation studies and use our models to analyze real data sets.

\subsection{Notation}
We use the following notations. For two nonnegative sequences $\{ a_n \}$ and $\{ b_n \}$, we write $a_n \asymp b_n$ to denote $0 < \lim \inf_{n \rightarrow \infty} a_n/b_n \leq \lim \sup_{n \rightarrow \infty} a_n/b_n < \infty$. If $\lim_{n \rightarrow \infty} a_n/b_n = 0$, we write $a_n = o(b_n)$ or $a_n \prec b_n$. We use $a_n \lesssim b_n$ or $a_n = O(b_n)$ to denote that for sufficiently large $n$, there exists a constant $C >0$ independent of $n$ such that $a_n \leq Cb_n$. For a vector $\bm{v} \in \mathbb{R}^p$, we let $\lVert \bm{v} \rVert_1 := \sum_{i=1}^p |v_i|$, $\lVert \bm{v} \rVert_2 := \sqrt{ \sum_{i=1}^p v_i^2}$, and $\lVert \bm{v} \rVert_{\infty} := \max_{1 \leq i \leq p} | v_i |$  denote the $\ell_1$, $\ell_2$, and $\ell_{\infty}$ norms respectively. For a symmetric matrix $\bm{A}$, we let $\lambda_{\min} (\bm{A})$ and $\lambda_{\max} (\bm{A})$ denote its minimum and maximum eigenvalues.

\section{The Spike-and-Slab Group Lasso} \label{SSGroupLassoIntro}

Let $\betab_g$ denote a real-valued vector of length $m_g$. We define the \textit{group lasso density} as 
\begin{equation} \label{grouplassoprior}
\bm{\Psi} ( \betab_g | \lambda ) = C_g \lambda^{m_g} \exp \left( - \lambda \lVert \betab_g \rVert_2 \right),
\end{equation}
where $C_g = 2^{-m_g} \pi^{-(m_g-1)/2} \left[ \Gamma \left( (m_g+1)/2 \right) \right]^{-1}$. This prior has been previously considered by \cite{KyungGillGhoshCasella2010, XuGhosh2015} for Bayesian inference in the grouped regression model (\ref{groupmodel}). \citet{KyungGillGhoshCasella2010} considered a single prior (\ref{grouplassoprior}) on each of the $\betab_g$'s, while \citet{XuGhosh2015} employed (\ref{grouplassoprior}) as the slab in the point-mass mixture (\ref{pointmassspikeandslab}). These authors implemented their models using MCMC.

In this manuscript, we introduce a \textit{continuous} spike-and-slab prior with the group lasso density (\ref{grouplassoprior}) for both the spike \textit{and} the slab. The continuous nature of our prior is critical in facilitating efficient coordinate ascent algorithms for MAP estimation that allow us to bypass the use of MCMC. Letting $\betab = ( \betab_1^T, \ldots, \betab_G^T)^T$ under model (\ref{groupmodel}), the \textit{spike-and-slab group lasso} (SSGL) is defined as:
\begin{equation} \label{ssgrouplasso}
\pi (\betab | \theta ) = \displaystyle \prod_{g=1}^{G} \left[ (1- \theta) \bm{\Psi} ( \betab_g | \lambda_0 ) + \theta \bm{\Psi} ( \betab_g | \lambda_1 ) \right],
\end{equation}
where $\bm{\Psi}( \cdot | \lambda)$ denotes the group lasso density (\ref{grouplassoprior}) indexed by hyperparameter $\lambda$, and $\theta \in (0, 1)$ is a mixing proportion. $\lambda_0$ corresponds to the spike which shrinks the entire vector $\bm{\beta}_g$ towards $\bm{0}_{m_g}$, while $\lambda_1$ corresponds to the slab. For shorthand notation, we denote $\bm{\Psi} ( \betab_g | \lambda_0 )$ as $\bm{\Psi}_0 (\betab_g)$ and $\bm{\Psi} ( \betab_g | \lambda_1 )$  as $\bm{\Psi}_1 (\betab_g)$ going forward.

Under the grouped regression model (\ref{groupedregression}), we place the SSGL prior (\ref{ssgrouplasso}) on $\betab$. In accordance with the recommendations of \cite{MoranRockovaGeorge2018}, we do not scale our prior by the unknown $\sigma$. Instead, we place an independent Jeffreys prior on $\sigma^2$, i.e. 
\begin{equation} \label{jeffreys}
\pi(\sigma^2) \propto \sigma^{-2}.
\end{equation}
The mixing proportion $\theta$ in (\ref{ssgrouplasso}) can either be fixed deterministically or endowed with a prior $\theta \sim \pi(\theta)$. We will discuss this in detail in Section \ref{optimizationSSGL}.

\section{Characterization and Computation of the Global Posterior Mode}\label{optimizationSSGL}

Throughout this section, we let $p$ denote the total number of covariates, i.e. $p = \sum_{g=1}^{G} m_g$. Our goal is to find the maximum \emph{a posteriori} estimates of the regression coefficients $\betab \in \mathbb{R}^p$.  This optimization problem is equivalent to a penalized likelihood method in which the logarithm of the prior \eqref{ssgrouplasso} may be reinterpreted as a penalty on the regression coefficients.  Similarly to \citet{RockovaGeorge2018}, we will leverage this connection between the Bayesian and frequentist paradigms and introduce the SSGL penalty.  This strategy combines the adaptivity of the Bayesian approach with the computational efficiency of existing algorithms in the frequentist literature. 

A key component of the SSGL model is  $\theta$, the prior expected proportion of groups with large coefficients. Ultimately, we will pursue a fully Bayes approach and place a prior on $\theta$, allowing the SSGL to adapt to the underlying sparsity of the data and perform an automatic multiplicity adjustment \cite{SB10}.  For ease of exposition, however, we will first consider the case where $\theta$ is fixed, echoing the development of \citet{RockovaGeorge2018}. In this situation, the regression coefficients $\betab_g$ are conditionally independent \emph{a priori}, resulting in a separable SSGL penalty.  Later we will consider the fully Bayes approach, which will yield the \emph{non-separable} SSGL penalty. 

\begin{definition}
	Given $\theta \in (0, 1)$, the separable SSGL penalty is defined as
	\begin{align} \label{penSbetag}
	pen_S(\betab|\theta) &=  \log\left[\frac{\pi(\betab|\theta)}{\pi(\zerob_p|\theta)}\right] = -\lambda_1\sum_{g =1}^G \lVert \betab_g\rVert_2 + \sum_{g =1}^G \log\left[\frac{p^*_{\theta}(\zerob_{m_g})}{p^*_{\theta}(\betab_g)}\right] \numbereqn
	\end{align}
	where 
	\begin{align}
	p_{\theta}^*(\betab_g) = \frac{\theta \bm{\Psi}_1(\betab_g)}{\theta\bm{\Psi}_1(\betab_g) + (1-\theta)\bm{\Psi}_0(\betab_g)}.
	\end{align}
\end{definition}
The separable SSGL penalty is almost the logarithm of the original prior \eqref{ssgrouplasso}; the only modification is an additive constant to ensure that $pen_S(\zerob_p|\theta) = 0$. The connection between the SSGL and penalized likelihood methods is made clearer when considering the derivative of the separable SSGL penalty, given in the following lemma.
\begin{lemma} \label{derivativeseparableSSGL}
	The derivative of the separable SSGL penalty satisfies
	\begin{align}
	\frac{\partial pen_S(\betab|\theta)}{\partial \lVert \betab_g\rVert_2} = -\lambda_{\theta}^*(\betab_g)
	\end{align}
	where 
	\begin{align}
	\lambda^*_{\theta}(\betab_g) = \lambda_1p_{\theta}^*(\betab_g) + \lambda_0[1-p_{\theta}^*(\betab_g)].
	\end{align}
\end{lemma}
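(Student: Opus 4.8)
The plan is to differentiate the expression for $pen_S(\betab|\theta)$ given in \eqref{penSbetag} with respect to $\lVert\betab_g\rVert_2$ directly, treating each group's contribution separately since the separable penalty decomposes as a sum over $g$. The first term $-\lambda_1\sum_g \lVert\betab_g\rVert_2$ contributes $-\lambda_1$ immediately, so the work reduces to computing $\frac{\partial}{\partial \lVert\betab_g\rVert_2}\log\left[\frac{p^*_\theta(\zerob_{m_g})}{p^*_\theta(\betab_g)}\right] = -\frac{\partial}{\partial \lVert\betab_g\rVert_2}\log p^*_\theta(\betab_g)$, since the numerator is a constant not depending on $\betab_g$.

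First I would observe that $\bm{\Psi}_0(\betab_g)$ and $\bm{\Psi}_1(\betab_g)$ depend on $\betab_g$ only through $\lVert\betab_g\rVert_2$, via $\bm{\Psi}(\betab_g|\lambda) = C_g\lambda^{m_g}e^{-\lambda\lVert\betab_g\rVert_2}$; this is the key structural fact that makes the derivative with respect to the scalar $\lVert\betab_g\rVert_2$ well-defined. Writing $t = \lVert\betab_g\rVert_2$, we have $p^*_\theta = \frac{\theta\bm{\Psi}_1}{\theta\bm{\Psi}_1 + (1-\theta)\bm{\Psi}_0}$, and $\log p^*_\theta = \log(\theta\bm{\Psi}_1) - \log(\theta\bm{\Psi}_1 + (1-\theta)\bm{\Psi}_0)$. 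Then $\frac{\partial}{\partial t}\log(\theta\bm{\Psi}_1) = -\lambda_1$ and $\frac{\partial}{\partial t}\log(\theta\bm{\Psi}_1 + (1-\theta)\bm{\Psi}_0) = \frac{-\lambda_1\theta\bm{\Psi}_1 - \lambda_0(1-\theta)\bm{\Psi}_0}{\theta\bm{\Psi}_1 + (1-\theta)\bm{\Psi}_0} = -\lambda_1 p^*_\theta(\betab_g) - \lambda_0[1-p^*_\theta(\betab_g)]$. Subtracting gives $\frac{\partial}{\partial t}\log p^*_\theta(\betab_g) = -\lambda_1 + \lambda_1 p^*_\theta(\betab_g) + \lambda_0[1-p^*_\theta(\betab_g)] = -\lambda_1[1-p^*_\theta(\betab_g)] + \lambda_0[1-p^*_\theta(\betab_g)]$, which after combining with the $-\lambda_1$ from the first term of the penalty yields $-\lambda_1 p^*_\theta(\betab_g) - \lambda_0[1-p^*_\theta(\betab_g)] = -\lambda^*_\theta(\betab_g)$, as claimed.

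There is not really a hard obstacle here — the result is an elementary chain-rule computation — but the one point requiring a little care is the legitimacy of differentiating with respect to the non-smooth quantity $\lVert\betab_g\rVert_2$ at $\betab_g = \zerob_{m_g}$, and more generally the interpretation of $\partial/\partial\lVert\betab_g\rVert_2$ as differentiation of a univariate function of $t = \lVert\betab_g\rVert_2 \geq 0$. Since both $\bm{\Psi}_0$ and $\bm{\Psi}_1$ are genuinely functions of $t$ alone and are smooth in $t$ on $[0,\infty)$ (with one-sided derivative at $t=0$), the computation is valid as a statement about this induced univariate function, which is exactly what is needed for the coordinate ascent / refined characterization of the mode developed subsequently. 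I would state this reduction explicitly at the outset and then carry out the three-line differentiation above.
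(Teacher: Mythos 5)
Your computation is correct and is exactly the elementary chain-rule calculation the paper relies on (the paper states this lemma without proof, following the analogous derivation in Ro\v{c}kov\'a and George). Your remark about interpreting $\partial/\partial\lVert\betab_g\rVert_2$ as differentiation of the induced univariate function of $t=\lVert\betab_g\rVert_2$ is a sensible clarification and does not change the argument.
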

Similarly to the SSL, the SSGL penalty is a weighted average of the two regularization parameters, $\lambda_1$ and $\lambda_0$. The weight $p^*_{\theta}(\betab_g)$ is the conditional probability that $\betab_g$ was drawn from the slab distribution rather than the spike. Hence, the SSGL features an adaptive regularization parameter which applies different amounts of shrinkage to each group, unlike the group lasso which applies the same shrinkage to each group. 

\subsection{The Global Posterior Mode} \label{globalmodetheorems}

Similarly to the group lasso \citep{YuanLin2006}, the separable nature of the penalty \eqref{penSbetag} lends itself naturally to a block coordinate ascent algorithm which cycles through the groups.  In this section, we first outline the group updates resulting from the Karush-Kuhn-Tucker (KKT) conditions. The KKT conditions provide necessary conditions for the global posterior mode. We then derive a more refined condition for the global mode to aid in optimization for multimodal posteriors. 

Following \citet{HBM12}, we assume that within each group, covariates are orthonormal, i.e. $\Xb_g^T\Xb_g = n\Ib_{m_g}$ for $g= 1,\dots, G$. If this assumption does not hold, then the $\Xb_g$ matrices can be orthonormalized before fitting the model. As noted by \citet{BrehenyHuang2015}, orthonormalization can be done without loss of generality since the resulting solution can be transformed back to the original scale.

\begin{proposition}
	The necessary conditions for $\widehat{\betab} = (\widehat{\betab}_1^T, \dots, \widehat{\betab}_G^T)^T$ to be a global mode are:
	\begin{align}
	\Xb^T_g(\Yb - \Xb\widehat{\betab}) = \sigma^2\lambda_{\theta}^*(\widehat{\betab}_g)\frac{\widehat{\betab}_g}{\lVert \betab_g\rVert_2} \quad &\text{for} \quad \widehat{\betab}_g \neq \zerob_{m_g},\\
	\lVert \Xb_g^T(\Yb - \Xb\widehat{\betab}) \rVert_2 \leq \sigma^2\lambda_{\theta}^*(\widehat{\betab}_g) \quad &\text{for}\quad \widehat{\betab}_g = \zerob_{m_g}.
	\end{align}
	Equivalently, 
	\begin{align}
	\widehat{\betab}_g = \frac{1}{n}\left(1-\frac{\sigma^2\lambda_{\theta}^*(\widehat{\betab}_g)}{\lVert \zb_g\rVert_2}\right)_+\zb_g \label{soft_thresh}
	\end{align}
	where $\zb_g = \Xb_g^T\left[\Yb - \sum_{l\neq g}\Xb_l\widehat{\betab}_l\right].$
\end{proposition}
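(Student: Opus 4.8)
The plan is to reduce the joint maximization over $(\betab,\sigma^2)$ to $G$ decoupled scalar problems. At a global posterior mode $(\widehat{\betab},\widehat{\sigma}^2)$, the coordinate $\widehat{\betab}$ must itself maximize $\betab \mapsto -\frac{1}{2\sigma^2}\lVert\Yb-\Xb\betab\rVert_2^2 + pen_S(\betab|\theta)$ with $\sigma^2$ fixed at $\widehat{\sigma}^2$ (otherwise one could strictly increase the joint posterior), and in particular $\widehat{\betab}$ is a blockwise global maximizer: for each $g$, $\widehat{\betab}_g$ maximizes this objective over $\betab_g\in\R^{m_g}$ with the other blocks held at $\widehat{\betab}_l$. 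Using the within-group orthonormality $\Xb_g^T\Xb_g = n\Ib_{m_g}$, the $g$-th block objective equals, up to a $\betab_g$-free constant, $\frac{1}{\sigma^2}\betab_g^T\zb_g - \frac{n}{2\sigma^2}\lVert\betab_g\rVert_2^2 + pen_S(\betab|\theta)$, with $\zb_g = \Xb_g^T[\Yb - \sum_{l\neq g}\Xb_l\widehat{\betab}_l]$.

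Next I would use that the group lasso density (\ref{grouplassoprior}), and hence $pen_S(\betab|\theta)$ and $\lambda_{\theta}^*(\betab_g)$, depend on $\betab_g$ only through $\lVert\betab_g\rVert_2$. Consequently, for any fixed value of $\lVert\betab_g\rVert_2$ the block objective is maximized by choosing $\betab_g$ parallel to $\zb_g$ (if $\zb_g=\zerob$ the linear term is absent and the optimal norm is $0$), which collapses the block problem to maximizing over the scalar $t\ge 0$ the function $\phi(t) = \frac{t\lVert\zb_g\rVert_2}{\sigma^2} - \frac{nt^2}{2\sigma^2} + h_g(t)$, where $h_g(\lVert\betab_g\rVert_2)$ is the $\betab_g$-dependent part of $pen_S$ and, by Lemma \ref{derivativeseparableSSGL}, $h_g'(t) = -\lambda_{\theta}^*(\betab_g)$ for $\lVert\betab_g\rVert_2 = t$.

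Then I would read off the first-order conditions of the scalar problem and translate back. If the maximizing $t$ is positive, $\phi'(t)=0$ gives $n\lVert\widehat{\betab}_g\rVert_2 + \sigma^2\lambda_{\theta}^*(\widehat{\betab}_g) = \lVert\zb_g\rVert_2$; combined with $\widehat{\betab}_g$ being a nonnegative multiple of $\zb_g$ this yields both the gradient identity $\Xb_g^T(\Yb - \Xb\widehat{\betab}) = \zb_g - n\widehat{\betab}_g = \sigma^2\lambda_{\theta}^*(\widehat{\betab}_g)\,\widehat{\betab}_g/\lVert\widehat{\betab}_g\rVert_2$ (using $\zb_g - n\widehat{\betab}_g = \Xb_g^T(\Yb - \Xb\widehat{\betab})$ by orthonormality) and, after solving for the scaling factor, the soft-thresholding form (\ref{soft_thresh}). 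If the maximizing $t$ is $0$, i.e. $\widehat{\betab}_g = \zerob_{m_g}$, then the one-sided directional derivative of the block objective at the origin must be $\le 0$ in every direction; since the origin-gradient of the quadratic term vanishes, the origin-gradient of $\frac{1}{\sigma^2}\betab_g^T\zb_g$ is $\frac{1}{\sigma^2}\zb_g$, and $h_g'(0^+) = -\lambda_{\theta}^*(\zerob_{m_g})$ multiplies $\lVert\cdot\rVert_2$ (whose subdifferential at $\zerob$ is the unit ball), this reduces to $\lVert\zb_g\rVert_2 \le \sigma^2\lambda_{\theta}^*(\zerob_{m_g})$, i.e. $\lVert\Xb_g^T(\Yb-\Xb\widehat{\betab})\rVert_2 \le \sigma^2\lambda_{\theta}^*(\widehat{\betab}_g)$. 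Finally, the $(\cdot)_+$ in (\ref{soft_thresh}) merges the two regimes because the thresholding factor is positive exactly when $\lVert\zb_g\rVert_2 > \sigma^2\lambda_{\theta}^*(\widehat{\betab}_g)$.

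I expect the main obstacle to be the nonsmooth point $\widehat{\betab}_g = \zerob_{m_g}$: one must pass carefully from the scalar derivative of Lemma \ref{derivativeseparableSSGL} to the correct multivariate subgradient / directional-derivative statement in $\R^{m_g}$ and keep the inequality oriented correctly for a maximization. A secondary subtlety is that $pen_S(\cdot|\theta)$ is nonconcave, so the derived relations are only \emph{necessary}; this suffices here, since any global maximizer is automatically a blockwise global maximizer and therefore satisfies these first-order conditions. Everything else — the quadratic expansion via orthonormality, the alignment argument, and solving the scalar stationarity equation for the scaling factor — is routine.
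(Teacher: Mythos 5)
Your argument is correct and is essentially the paper's proof written out in full: the paper disposes of this proposition in one line by invoking Lemma \ref{derivativeseparableSSGL} together with subdifferential calculus, and your blockwise reduction, alignment of $\widehat{\betab}_g$ with $\zb_g$, scalar stationarity equation, and directional-derivative treatment of the nonsmooth point at $\zerob_{m_g}$ are exactly the steps that one-line proof is compressing. No gaps; the only difference is the level of detail.
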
 
\begin{proof}
	Follows immediately from Lemma \ref{derivativeseparableSSGL} and subdifferential Calculus.
\end{proof}
The above characterization for the global mode is necessary, but not sufficient.  A more refined characterization may be obtained by considering the group-wise optimization problem, noting that the global mode is also a maximizer of the $g$th group, keeping all other groups fixed. 

\begin{proposition} \label{globalmodeseparable}
	The global mode $\widehat{\betab}_g = \zerob_{m_g}$ if and only if $\lVert \zb_g \rVert_2 \leq \Delta$, where
	\begin{align}
	\Delta = \inf_{\betab_g} \left\{ \frac{n\lVert \betab_g \rVert_2}{2} - \frac{\sigma^2pen_S(\betab|\theta)}{\lVert \betab_g \rVert_2} \right\}.
	\end{align}
\end{proposition}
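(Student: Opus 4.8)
The plan is to exploit the block-separable structure of $pen_S(\cdot\,|\theta)$ together with the within-group orthonormality $\Xb_g^T\Xb_g = n\Ib_{m_g}$ to reduce the problem to a one-dimensional optimization in the scalar $\lVert\betab_g\rVert_2$. First I would write the objective whose maximizer is the global mode (for $\sigma^2$ held fixed, as in the preceding Proposition), namely $Q(\betab) = -\tfrac12\lVert\Yb-\Xb\betab\rVert_2^2 + \sigma^2\, pen_S(\betab|\theta)$, and note that a global mode is in particular a maximizer in block $g$ with all other blocks fixed at their modal values. Expanding the quadratic term and using orthonormality, maximizing $Q$ over $\betab_g$ alone is equivalent to maximizing
\[
h(\betab_g) \;:=\; \betab_g^T\zb_g - \frac{n}{2}\lVert\betab_g\rVert_2^2 + \sigma^2\, pen_S(\betab|\theta),
\]
where $\zb_g = \Xb_g^T\!\left(\Yb - \sum_{l\neq g}\Xb_l\widehat{\betab}_l\right)$ is exactly as in the statement and $pen_S$ is now read as a function of $\betab_g$ only. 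Since $pen_S$ is separable and its $g$th summand vanishes at $\zerob_{m_g}$, we have $h(\zerob_{m_g}) = 0$, so $\widehat{\betab}_g = \zerob_{m_g}$ if and only if $h(\betab_g) \le 0$ for every $\betab_g \in \R^{m_g}$.

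Next I would optimize over directions at fixed radius. For $t = \lVert\betab_g\rVert_2 > 0$ fixed, Cauchy--Schwarz gives $\sup_{\lVert\betab_g\rVert_2 = t}\betab_g^T\zb_g = t\lVert\zb_g\rVert_2$; moreover $\bm{\Psi}_0$ and $\bm{\Psi}_1$, and hence $p^*_\theta$ and the restriction of $pen_S(\betab|\theta)$ to block $g$, depend on $\betab_g$ only through $\lVert\betab_g\rVert_2$. Hence $\sup_{\betab_g} h(\betab_g) \le 0$ is equivalent to $t\lVert\zb_g\rVert_2 - \tfrac{n}{2}t^2 + \sigma^2\, pen_S(\betab|\theta) \le 0$ for all $t>0$, i.e. $\lVert\zb_g\rVert_2 \le \tfrac{nt}{2} - \tfrac{\sigma^2\, pen_S(\betab|\theta)}{t}$ for all $t>0$; taking the infimum over $t>0$ (equivalently over $\betab_g$) gives precisely $\lVert\zb_g\rVert_2 \le \Delta$. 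Along the way I would also record, using $pen_S(\betab|\theta)/\lVert\betab_g\rVert_2 \to -\lambda^*_\theta(\zerob_{m_g})$ as $\lVert\betab_g\rVert_2 \to 0^+$ (which follows from Lemma \ref{derivativeseparableSSGL}), that $\Delta \le \sigma^2\lambda^*_\theta(\zerob_{m_g})$, so that this refined condition indeed sharpens the KKT threshold of the previous Proposition.

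The main obstacle is not the algebra but the non-concavity of $h$: because the mixture $(1-\theta)\bm{\Psi}_0 + \theta\bm{\Psi}_1$ is not log-concave, $h$ need not be concave, so a first-order (KKT) argument cannot certify $\betab_g = \zerob_{m_g}$ as the \emph{global} maximizer --- this is exactly why the threshold must be the infimum of $\tfrac{nt}{2} - \sigma^2\, pen_S/t$ over all radii rather than a derivative evaluated at the origin, and why this proposition is needed beyond the earlier KKT characterization. A secondary point requiring care is the boundary case $\lVert\zb_g\rVert_2 = \Delta$: there $\zerob_{m_g}$ is still a maximizer of the block problem (the supremum of $h$ equals $0$, attained at $\zerob_{m_g}$), which is why equality is included in the ``if'' direction, whereas $\lVert\zb_g\rVert_2 < \Delta$ forces $h(\betab_g) < 0$ for all $\betab_g \ne \zerob_{m_g}$ and thus makes $\zerob_{m_g}$ the unique block maximizer. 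The converse ``only if'' direction is immediate, since any coordinate block of a global mode must itself maximize $Q$ over that block with the others held fixed.
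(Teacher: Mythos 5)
Your proposal is correct and follows essentially the same route as the paper's proof: both reduce the block-$g$ problem to a one-dimensional problem in the radius $\lVert\betab_g\rVert_2$ by noting that the penalty is radial and that the linear term is maximized when $\betab_g$ aligns with $\zb_g$ (your Cauchy--Schwarz step is the paper's $\cos\varphi$ factorization), and then read off $\Delta$ as the infimum over radii. Your treatment is in fact slightly more careful than the paper's on the boundary case $\lVert\zb_g\rVert_2=\Delta$ and in keeping the factor $n$ from $\Xb_g^T\Xb_g=n\Ib_{m_g}$ consistent with the stated definition of $\Delta$.
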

The proof for Proposition \ref{globalmodeseparable} can be found in Appendix \ref{App:D2}. Unfortunately, the threshold $\Delta$ is difficult to compute. We instead find an approximation to this threshold. An upper bound is simply that of the soft-threshold solution \eqref{soft_thresh}, with $\Delta \leq \sigma^2\lambda^*(\betab_g)$. However, when $\lambda_0$ is large, this bound may be improved. Similarly to \citet{RockovaGeorge2018}, we provide improved bounds on the threshold in Theorem \ref{delta_bounds}. This result requires the  function  $h: \R^{m_g} \to \R$,  defined as:
\begin{align*}
h(\betab_g) = [\lambda_{\theta}^*(\betab_g) - \lambda_1]^2 + \frac{2n}{\sigma^2}\log p_{\theta}^*(\betab_g).
\end{align*}

\begin{theorem}\label{delta_bounds}
	When  $(\lambda_0 - \lambda_1) > 2\sqrt{n}/\sigma$ and $h(\zerob_{m_g})> 0$, the threshold $\Delta$ is bounded by:
	\begin{align}
	\Delta^L < \Delta < \Delta^U
	\end{align}
	where 
	\begin{align}
	\Delta^L &= \sqrt{2n \sigma^2\log[1/p_{\theta}^*(\zerob_{m_g})] - \sigma^4d} + \sigma^2\lambda_1, \\
	\Delta^U &= \sqrt{2n \sigma^2\log[1/p_{\theta}^*(\zerob_{m_g})] } + \sigma^2\lambda_1,\label{delta_u}
	\end{align}
	and
	\begin{align}
	0 < d < \frac{2n}{\sigma^2} - \left(\frac{n}{\sigma^2(\lambda_0-\lambda_1)} - \frac{\sqrt{2n}}{\sigma}\right)^2
	\end{align}
\end{theorem}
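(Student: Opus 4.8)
\emph{Reduction to a scalar problem.} The plan is to rewrite $\Delta$ as a one–dimensional infimum, locate its minimizer exactly, and read off an explicit formula for $\Delta$. Since the block-$g$ part of $pen_S(\betab|\theta)$ depends on $\betab_g$ only through $t:=\lVert\betab_g\rVert_2$ — as do $p^*_{\theta}$, $\lambda^*_{\theta}$, and $h$, which I henceforth treat as functions of $t$ — Proposition~\ref{globalmodeseparable} says $\Delta=\inf_{t>0}\phi(t)$, where
\begin{align*}
\phi(t)=\frac{nt}{2}+\sigma^2\lambda_1+\frac{\sigma^2}{t}\,\log\!\left[\frac{p^*_{\theta}(t)}{p^*_{\theta}(\zerob_{m_g})}\right],\qquad \psi(t):=t\,\phi(t).
\end{align*}
I would first record the facts that drive the argument: from Lemma~\ref{derivativeseparableSSGL} (equivalently from $1/p^*_{\theta}(t)=1+\tfrac{1-\theta}{\theta}(\lambda_0/\lambda_1)^{m_g}e^{-(\lambda_0-\lambda_1)t}$) one has $\tfrac{d}{dt}\log p^*_{\theta}(t)=(\lambda_0-\lambda_1)\big(1-p^*_{\theta}(t)\big)$, whence $\psi'(t)=nt+\sigma^2\lambda^*_{\theta}(t)$, $\psi''(t)=n-\sigma^2(\lambda_0-\lambda_1)^2p^*_{\theta}(t)\big(1-p^*_{\theta}(t)\big)$, and $\phi'(t)=\big(t\psi'(t)-\psi(t)\big)/t^2$. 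Writing $A:=\log[1/p^*_{\theta}(\zerob_{m_g})]$ and $t_0:=\sqrt{2\sigma^2A/n}$ (the minimizer of $\tfrac{nt}{2}+\tfrac{\sigma^2A}{t}$), and using $\log[p^*_{\theta}(t_0)/p^*_{\theta}(\zerob_{m_g})]=A-\log[1/p^*_{\theta}(t_0)]<A$ since $p^*_{\theta}(t_0)<1$, we get $\Delta\le\phi(t_0)<\tfrac{nt_0}{2}+\sigma^2\lambda_1+\tfrac{\sigma^2A}{t_0}=\sqrt{2n\sigma^2A}+\sigma^2\lambda_1=\Delta^U$. This proves $\Delta<\Delta^U$ and uses neither hypothesis; the role of $h(\zerob_{m_g})>0$ is precisely that $\lim_{t\downarrow0}\phi(t)=\sigma^2\lambda^*_{\theta}(\zerob_{m_g})$ then exceeds $\Delta^U$, so $\Delta^U$ genuinely improves on the soft-threshold bound.

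\emph{Locating the minimizer.} Because $\phi(t)\to\infty$ as $t\to\infty$, $\lim_{t\downarrow0}\phi(t)=\sigma^2\lambda^*_{\theta}(\zerob_{m_g})>\Delta^U>\phi(t_0)$, and $\phi$ is continuous and bounded below, the infimum is attained at an interior $t^*>0$ with $\phi'(t^*)=0$ and $\phi''(t^*)\ge0$. At such a point $t^*\psi'(t^*)=\psi(t^*)$ and $\phi''(t^*)=\psi''(t^*)/t^*\ge0$, so $p^*_{\theta}(t^*)\big(1-p^*_{\theta}(t^*)\big)\le n/(\sigma^2(\lambda_0-\lambda_1)^2)$. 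Since $\lambda_0-\lambda_1>2\sqrt n/\sigma$, the roots $\alpha_{\pm}:=\tfrac12\big(1\pm\sqrt{1-4n/(\sigma^2(\lambda_0-\lambda_1)^2)}\big)$ are real, and this forces $p^*_{\theta}(t^*)\le\alpha_-$ or $p^*_{\theta}(t^*)\ge\alpha_+$. On any initial interval where $p^*_{\theta}(t)<\alpha_-$ one has $\psi''(t)>0$, so $t\psi'(t)-\psi(t)$ increases strictly from $0$ and $\phi'>0$ there; hence $\phi'(t^*)=0$ rules out $p^*_{\theta}(t^*)\le\alpha_-$, leaving $p^*_{\theta}(t^*)\ge\alpha_+$.

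\emph{Explicit formula and the lower bound.} Put $q:=p^*_{\theta}(t^*)\in[\alpha_+,1)$ and $B:=\log(1/q)$. The equation $\phi'(t^*)=0$ becomes the quadratic $n(t^*)^2+2\sigma^2(\lambda_0-\lambda_1)(1-q)\,t^*=2\sigma^2(A-B)$; solving it for $t^*$ and substituting into the identity $\Delta=\phi(t^*)=nt^*+\sigma^2\lambda^*_{\theta}(t^*)$ makes everything telescope, using $(\lambda_0-\lambda_1)^2(1-q)^2-\tfrac{2n}{\sigma^2}B=h(t^*)$, to
\begin{align*}
\Delta=\sqrt{\,2n\sigma^2\log[1/p^*_{\theta}(\zerob_{m_g})]+\sigma^4 h(t^*)\,}+\sigma^2\lambda_1 .
\end{align*}
Hence $\Delta<\Delta^U\iff h(t^*)<0$ (already shown) and $\Delta>\Delta^L\iff h(t^*)>-d$. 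I would then bound $-h(t^*)=\tfrac{2n}{\sigma^2}B-(\lambda_0-\lambda_1)^2(1-q)^2$ from above: $B=\log(1/q)\le(1-q)/q\le(1-q)/\alpha_+$, so $-h(t^*)$ is at most the downward parabola $s\mapsto\tfrac{2n}{\sigma^2\alpha_+}s-(\lambda_0-\lambda_1)^2s^2$ evaluated at $s=1-q\in(0,1-\alpha_+]$; its vertex sits exactly at $s=n/(\sigma^2(\lambda_0-\lambda_1)^2\alpha_+)=1-\alpha_+$, so the parabola is increasing on $(0,1-\alpha_+]$ and $-h(t^*)<n(1-\alpha_+)/(\sigma^2\alpha_+)$. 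Finally, substituting $n/(\sigma^2(\lambda_0-\lambda_1)^2)=\alpha_+(1-\alpha_+)$ turns the desired $n(1-\alpha_+)/(\sigma^2\alpha_+)\le\tfrac{2n}{\sigma^2}-\big(\tfrac{n}{\sigma^2(\lambda_0-\lambda_1)}-\tfrac{\sqrt{2n}}{\sigma}\big)^2=:d_{\max}$ into the elementary inequality $(1-\alpha)(1+\alpha^2)^2\le8\alpha^3$ for $\alpha=\alpha_+\in[\tfrac12,1)$, which holds because the quintic $\alpha^5-\alpha^4+10\alpha^3-2\alpha^2+\alpha-1$ is increasing on $[\tfrac12,1]$ with value $7/32>0$ at $\alpha=\tfrac12$. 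Thus $h(t^*)>-d_{\max}$, which yields $\Delta>\Delta^L$ for $d$ in the stated range (equivalently for $d=d_{\max}$).

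\emph{Where the work lies.} The only subtle step is the second one. When $\lambda_0-\lambda_1>2\sqrt n/\sigma$ the potential $\psi$ need not be convex, so $\phi$ may have several stationary points — a local maximum preceding the global minimum — and one must combine the monotonicity of $\phi$ on $\{p^*_{\theta}(t)<\alpha_-\}$, the second-order test at the minimizer, and the hypothesis $h(\zerob_{m_g})>0$ to be sure the minimizer satisfies $p^*_{\theta}(t^*)\ge\alpha_+$; without that the parabola estimate underlying the lower bound on $h(t^*)$ collapses, and the remainder is routine algebra together with the single scalar inequality above.
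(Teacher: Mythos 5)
Your argument is correct, and it is worth noting up front that the paper itself gives no proof of Theorem~\ref{delta_bounds}: the result is stated with a pointer to the analogous bound in Ro\v{c}kov\'a and George (2018), and nothing appears in Appendix~\ref{App:D2}. So there is no in-paper argument to compare against line by line; what you have written is a complete, self-contained derivation that follows the same broad strategy (reduce to a scalar problem in $t=\lVert\betab_g\rVert_2$, bound the infimum via the stationarity condition) but fills in every step. I checked the computations: $\psi'(t)=nt+\sigma^2\lambda^*_{\theta}(t)$ and $\psi''(t)=n-\sigma^2(\lambda_0-\lambda_1)^2p^*_{\theta}(t)(1-p^*_{\theta}(t))$ are right; the stationarity quadratic $n(t^*)^2+2\sigma^2(\lambda_0-\lambda_1)(1-q)t^*=2\sigma^2(A-B)$ is right; and squaring $\Delta-\sigma^2\lambda_1=nt^*+\sigma^2(\lambda_0-\lambda_1)(1-q)$ and substituting does telescope to the exact identity $\Delta=\sqrt{2n\sigma^2\log[1/p^*_{\theta}(\zerob_{m_g})]+\sigma^4h(t^*)}+\sigma^2\lambda_1$, which is the cleanest part of your write-up and makes both bounds equivalent to sign/size statements about $h(t^*)$. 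Your use of $h(\zerob_{m_g})>0$ (exactly the condition $\sigma^2\lambda^*_{\theta}(\zerob_{m_g})>\Delta^U$, guaranteeing an interior minimizer) and of $\lambda_0-\lambda_1>2\sqrt n/\sigma$ (realness of $\alpha_{\pm}$, needed to pin $p^*_{\theta}(t^*)\geq\alpha_+$ via the second-order condition together with monotonicity of $\phi$ where $p^*_{\theta}<\alpha_-$) correctly identifies where each hypothesis enters. I also verified the closing chain: $\log(1/q)\leq(1-q)/\alpha_+$, the parabola's vertex at $s=1-\alpha_+$, the simplification $-h(t^*)<n(1-\alpha_+)/(\sigma^2\alpha_+)$, and the reduction of $n(1-\alpha_+)/(\sigma^2\alpha_+)\leq d_{\max}$ to $(1-\alpha)(1+\alpha^2)^2\leq8\alpha^3$ on $[\tfrac12,1)$, whose quintic form is indeed positive and increasing there. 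The only caveat is interpretive rather than mathematical: Proposition~\ref{globalmodeseparable} writes $pen_S(\betab|\theta)$ for the full penalty, and you (correctly, and consistently with the paper's own proof of that proposition) read it as the centered group-$g$ contribution only; a sentence making that explicit would be worth keeping.
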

When $\lambda_0$ is large, $d\to 0$ and the lower bound on the threshold approaches the upper bound, yielding the approximation $\Delta = \Delta^U$. 
We will ultimately use this approximation in our block coordinate ascent algorithm.

\subsection{The Non-Separable SSGL penalty}\label{NS_SSGL}

As discussed earlier, a key reason for adopting a  Bayesian strategy is that it allows the model to borrow information across groups and self-adapt to the true underlying sparsity in the data. This is achieved by placing a prior on $\theta$, the proportion of groups with non-zero coefficients. We now outline this fully Bayes strategy and the resulting \emph{non-separable} SSGL penalty. With the inclusion of the prior $\theta \sim \pi(\theta)$,  the marginal prior for the regression coefficients has the following form:
\begin{align}
\pi(\betab) &= \int_0^1 \prod_{g=1}^G[\theta \bm{\Psi}_1(\betab_g) + (1-\theta) \bm{\Psi}_0(\betab_g)] d\pi(\theta) \\
&= \left( \prod_{g=1}^{G} C_g \lambda_1^{m_g} \right) e^{-\lambda_1\sum_{g=1}^G\lVert\betab_g\rVert_2} \int_0^1 \frac{\theta^G}{\prod_{g=1}^G p_{\theta}^*(\betab_g)} d\pi(\theta), \label{marginal_ns}
\end{align}
The non-separable SSGL penalty is then defined similarly to the separable penalty, where again we have centered the penalty to ensure $pen_{NS}(\zerob_p) = 0$. 
\begin{definition}
	The non-separable SSGL (NS-SSGL) penalty with $\theta \sim \pi(\theta)$ is defined as
	\begin{align} \label{penNSbeta}
	pen_{NS}(\betab) &= \log \left[ \frac{\pi(\betab)}{\pi(\zerob_p)}\right] =  -\lambda_1\sum_{g=1}^G \lVert \betab_g\rVert_2 + \log\left[ \frac{\int_0^1 \theta^G/\prod_{g=1}^G p_{\theta}^*(\betab_g) d\pi(\theta)}{\int_0^1 \theta^G/\prod_{g=1}^G p_{\theta}^*(\zerob_{m_g}) d\pi(\theta)}\right]. \numbereqn
	\end{align}
\end{definition}
Although the penalty \eqref{marginal_ns} appears intractable, intuition is again obtained by considering the derivative. Following the same line of argument as \citet{RockovaGeorge2018}, the derivative of \eqref{marginal_ns} is given in the following lemma.
\begin{lemma}
	
	\begin{align}
	\frac{\partial pen_{NS}(\betab)}{\partial \lVert \betab_g \rVert_2} \equiv \lambda^*(\betab_g; \betab_{\backslash g}),
	\end{align}
	where 
	\begin{align}
	\lambda^*(\betab_g; \betab_{\backslash g}) = p^*(\betab_g;\betab_{\backslash g})\lambda_1 + [1-p^*(\betab_g;\betab_{\backslash g})]\lambda_0
	\end{align}
	and
	\begin{align}
	p^*(\betab_g;\betab_{\backslash g}) \equiv p^*_{\theta_g}(\betab_g), \quad \text{with} \quad \theta_g = \mathbb{E}[\theta|\betab_{\backslash g}].
	\end{align}
\end{lemma}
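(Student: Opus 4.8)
The plan is to differentiate $\log\pi(\betab)$ directly with respect to $t_g := \lVert\betab_g\rVert_2$: the marginal prior \eqref{marginal_ns} depends on $\betab_g$ only through this norm, and the centering constant $-\log\pi(\zerob_p)$ in $pen_{NS}$ contributes nothing to the derivative. Writing $q_\theta(\betab_h) := \theta\bm{\Psi}_1(\betab_h) + (1-\theta)\bm{\Psi}_0(\betab_h)$ so that $\pi(\betab) = \int_0^1\prod_{h=1}^G q_\theta(\betab_h)\,d\pi(\theta)$, I would first use $\bm{\Psi}_k(\betab_g) = C_g\lambda_k^{m_g}e^{-\lambda_k t_g}$, $k\in\{0,1\}$, to obtain $\partial\bm{\Psi}_k(\betab_g)/\partial t_g = -\lambda_k\bm{\Psi}_k(\betab_g)$, hence $\partial q_\theta(\betab_g)/\partial t_g = -[\lambda_1\theta\bm{\Psi}_1(\betab_g) + \lambda_0(1-\theta)\bm{\Psi}_0(\betab_g)] = -\lambda_\theta^*(\betab_g)\,q_\theta(\betab_g)$, with $\lambda_\theta^*$ and $p_\theta^*$ exactly as in Lemma \ref{derivativeseparableSSGL}. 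Differentiating under the integral sign (routine: the integrand is smooth in $t_g$ and bounded, and $\pi(\theta)$ is a probability measure on $(0,1)$) then gives $\partial pen_{NS}(\betab)/\partial t_g = -\mathbb{E}[\lambda_\theta^*(\betab_g)\mid\betab]$, the expectation taken under the posterior $\pi(\theta\mid\betab)\propto\prod_h q_\theta(\betab_h)\,\pi(\theta)$.

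The key step, which is what makes the stated formula come out, is to collapse this posterior expectation to a function of $\theta_g = \mathbb{E}[\theta\mid\betab_{\backslash g}]$ alone. I would factor $\pi(\theta\mid\betab)\propto q_\theta(\betab_g)\,\pi(\theta\mid\betab_{\backslash g})$, where $\pi(\theta\mid\betab_{\backslash g})\propto\prod_{h\neq g}q_\theta(\betab_h)\,\pi(\theta)$ is a bona fide density since each $\bm{\Psi}_k$ integrates to one, so that $\pi(\betab_{\backslash g}) = \int_0^1\prod_{h\neq g}q_\theta(\betab_h)\,d\pi(\theta)$. Using $\lambda_\theta^*(\betab_g)q_\theta(\betab_g) = \lambda_1\theta\bm{\Psi}_1(\betab_g) + \lambda_0(1-\theta)\bm{\Psi}_0(\betab_g)$, the expectation becomes a ratio in which, with $\betab_g$ held fixed, both the numerator and the denominator are \emph{affine} in $\theta$; each therefore integrates against $\pi(\theta\mid\betab_{\backslash g})$ to its value at $\theta_g$. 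Dividing through by $q_{\theta_g}(\betab_g) = \theta_g\bm{\Psi}_1(\betab_g) + (1-\theta_g)\bm{\Psi}_0(\betab_g)$ leaves $\lambda_1 p_{\theta_g}^*(\betab_g) + \lambda_0[1-p_{\theta_g}^*(\betab_g)]$, which is precisely $\lambda^*(\betab_g;\betab_{\backslash g})$ once one recognizes $p^*(\betab_g;\betab_{\backslash g}) = p_{\theta_g}^*(\betab_g)$. Since $\theta_g$ depends only on $\betab_{\backslash g}$ and not on $\betab_g$, no chain-rule term appears, so this expression is indeed $\partial pen_{NS}(\betab)/\partial\lVert\betab_g\rVert_2$ (up to sign, as for the separable penalty in Lemma \ref{derivativeseparableSSGL}).

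I expect the main obstacle to be seeing and justifying that second step cleanly: at face value one only obtains $-\mathbb{E}[\lambda_\theta^*(\betab_g)\mid\betab]$, which appears to depend on the whole posterior of $\theta$ given all of $\betab$, and it is not obvious that it reduces \emph{exactly} --- not merely approximately --- to a quantity indexed by $\mathbb{E}[\theta\mid\betab_{\backslash g}]$. The entire reduction rests on the linearity in $\theta$ of both $q_\theta(\betab_g)$ and $\lambda_\theta^*(\betab_g)q_\theta(\betab_g)$, a feature special to the two-component mixture-of-group-lasso-densities form \eqref{ssgrouplasso}; everything else is bookkeeping that parallels the separable case of Lemma \ref{derivativeseparableSSGL} and the argument of \citet{RockovaGeorge2018}.
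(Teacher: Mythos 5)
Your derivation is correct and is exactly the argument the paper intends: the paper gives no written proof of this lemma, deferring instead to \citet{RockovaGeorge2018}, whose proof is precisely your computation --- differentiate under the integral using $\partial\bm{\Psi}_k(\betab_g)/\partial\lVert\betab_g\rVert_2 = -\lambda_k\bm{\Psi}_k(\betab_g)$, then exploit the affineness in $\theta$ of both the numerator and the denominator of the resulting ratio to collapse the posterior expectation to its value at $\theta_g = \mathbb{E}[\theta\mid\betab_{\backslash g}]$. You are also right to flag the sign: what the calculation yields is $-\lambda^*(\betab_g;\betab_{\backslash g})$, consistent with Lemma \ref{derivativeseparableSSGL}, and the absent minus sign in the displayed statement is a typo in the paper rather than a defect in your argument.
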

That is, the marginal prior from \eqref{marginal_ns} is rendered tractable by considering each group of regression coefficients separately, conditional on the remaining coefficients. Such a conditional strategy is motivated by the group-wise updates for the separable penalty considered in the previous section.  Thus, our optimization strategy for the non-separable penalty will be very similar to the separable case, except instead of a fixed value for $\theta$, we will impute the mean of $\theta$ conditioned on the remaining regression coefficients. 

We now consider the form of the conditional mean, $\mathbb{E}[\theta|\widehat{\betab}_{\backslash g}]$. As noted by \citet{RockovaGeorge2018}, when the number of groups is large, this conditional mean can be replaced by $\mathbb{E}[\theta|\widehat{\betab}]$; we will proceed with the same approximation. For the prior on $\theta$, we will use the standard beta prior $\theta\sim \mathcal{B}(a, b)$.  With the choices $a = 1$ and $b = G$ for these hyperparameters, this prior results in an automatic multiplicity adjustment for the regression coefficients \cite{SB10}. 

We now examine the conditional distribution $\pi(\theta|\widehat{\betab})$. Suppose that the number of groups with non-zero coefficients is $\widehat{q}$, and assume without loss of generality that the first $\widehat{q}$ groups have non-zero coefficients. Then,
\begin{align}
\pi(\theta|\widehat{\betab}) \propto \theta^{a-1}(1-\theta)^{b-1}(1-\theta z)^{G-\widehat{q}}\prod_{g=1}^{\widehat{q}}(1-\theta x_g), \label{theta_posterior}
\end{align}
with $z = 1-\frac{\lambda_1}{\lambda_0}$ and $x_g = (1-\frac{\lambda_1}{\lambda_0}e^{\lVert \widehat{\betab}_g\rVert_2(\lambda_0-\lambda_1)})$.  Similarly to \citet{RockovaGeorge2018}, this distribution is a generalization of the Gauss hypergeometric distribution. Consequently, the expectation may be written as
\begin{align}
\mathbb{E}[\theta |\widehat{\betab}] = \frac{\int_0^1 \theta^a (1-\theta)^{b-1}(1-\theta z)^{G-\widehat{q} }\prod_{g=1}^{\widehat{q}} (1-\theta x_g)d\theta}{ \int_0^1 \theta^{a-1} (1-\theta)^{b-1}(1-\theta z)^{G-\widehat{q} }\prod_{g=1}^{\widehat{q}} (1-\theta x_g) d\theta}. \label{theta_expectation}
\end{align}
While the above expression \eqref{theta_expectation} appears laborious to compute, it admits a much simpler form when $\lambda_0$ is very large. Using a slight modification to the arguments of \cite{RG16_abel}, we obtain this simpler form in Lemma \ref{theta_mean_lemma}.

\begin{lemma} \label{condexpectationlemma}
	Assume $\pi(\theta|\widehat{\betab})$ is distributed according to \eqref{theta_posterior}. Let $\widehat{q}$ be the number of groups with non-zero coefficients. Then as $\lambda_0 \to \infty$,
	\begin{align}
	\mathbb{E}[\theta|\widehat{\betab}] = \frac{a + \widehat{q} }{a + b + G}.\label{theta_mean}
	\end{align}
	\label{theta_mean_lemma}
\end{lemma}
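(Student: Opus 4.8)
The plan is to push the limit $\lambda_0 \to \infty$ directly through the integral representation \eqref{theta_expectation}, after peeling off the $\theta$-free factors that cancel between numerator and denominator. Write the unnormalized density from \eqref{theta_posterior} as $f_{\lambda_0}(\theta) = \theta^{a-1}(1-\theta)^{b-1}(1-\theta z)^{G-\widehat{q}}\prod_{g=1}^{\widehat{q}}(1-\theta x_g)$. The first observation is about the behavior of the parameters: as $\lambda_0 \to \infty$, $z = 1 - \lambda_1/\lambda_0 \to 1$, while for each of the first $\widehat{q}$ groups (which have $\lVert\widehat{\betab}_g\rVert_2 > 0$ by the reordering assumption) the exponential term dominates, $\tfrac{\lambda_1}{\lambda_0}e^{\lVert\widehat{\betab}_g\rVert_2(\lambda_0-\lambda_1)} \to \infty$, so $x_g \to -\infty$; in particular $x_g < 0$ and $|x_g| \ge 1$ once $\lambda_0$ is large enough. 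This lets me write $f_{\lambda_0}(\theta) = \big(\prod_{g=1}^{\widehat{q}}|x_g|\big)\, g_{\lambda_0}(\theta)$ with $g_{\lambda_0}(\theta) = \theta^{a-1}(1-\theta)^{b-1}(1-\theta z)^{G-\widehat{q}}\prod_{g=1}^{\widehat{q}}\big(\tfrac{1}{|x_g|} + \theta\big)$, and the constant $\prod_g|x_g|$ (not depending on $\theta$) cancels, so that $\mathbb{E}[\theta\mid\widehat{\betab}] = \int_0^1 \theta\, g_{\lambda_0}(\theta)\,d\theta \big/ \int_0^1 g_{\lambda_0}(\theta)\,d\theta$.

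Next I would identify the pointwise limit of the reduced integrand. For fixed $\theta \in (0,1)$, $(1-\theta z)^{G-\widehat{q}} \to (1-\theta)^{G-\widehat{q}}$ and $\prod_{g=1}^{\widehat{q}}(\tfrac{1}{|x_g|}+\theta) \to \theta^{\widehat{q}}$, hence $g_{\lambda_0}(\theta) \to \theta^{a+\widehat{q}-1}(1-\theta)^{b+G-\widehat{q}-1} =: g_\infty(\theta)$, which is precisely the kernel of a $\mathcal{B}(a+\widehat{q},\, b+G-\widehat{q})$ density. To interchange limit and integral I would invoke dominated convergence: since $\lambda_0 > \lambda_1$ forces $0 < z < 1$ and $G - \widehat{q} \ge 0$, we have $(1-\theta z)^{G-\widehat{q}} \le 1$ on $(0,1)$; and once $|x_g| \ge 1$, $\tfrac{1}{|x_g|}+\theta \le 2$. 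Therefore $g_{\lambda_0}(\theta) \le 2^{\widehat{q}}\,\theta^{a-1}(1-\theta)^{b-1}$ and $\theta g_{\lambda_0}(\theta) \le 2^{\widehat{q}}\,\theta^{a}(1-\theta)^{b-1}$, both integrable on $(0,1)$ because $a, b > 0$. Applying DCT to numerator and denominator separately gives $\lim_{\lambda_0\to\infty}\mathbb{E}[\theta\mid\widehat{\betab}] = \int_0^1 \theta\, g_\infty(\theta)\,d\theta \big/ \int_0^1 g_\infty(\theta)\,d\theta$, i.e. the mean of a $\mathcal{B}(a+\widehat{q},\,b+G-\widehat{q})$ variate. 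Evaluating with the identity $B(s+1,t)/B(s,t) = s/(s+t)$ at $s = a+\widehat{q}$, $t = b+G-\widehat{q}$ yields $\tfrac{a+\widehat{q}}{a+b+G}$, as claimed. This mirrors the argument of \citet{RG16_abel} with the group norms replacing the univariate coefficients.

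The only genuinely analytic point is the domination step, and the natural concern there is that the bound $|x_g|\ge 1$ holds only for $\lambda_0$ past some threshold rather than for all $\lambda_0$ — but since the statement is an assertion about the limit $\lambda_0 \to \infty$, it is harmless to restrict attention to that tail. I would also flag explicitly that $\widehat{q}$ and the norms $\lVert\widehat{\betab}_g\rVert_2$ are frozen throughout (we condition on $\widehat{\betab}$), so the limit is taken purely in the hyperparameter $\lambda_0$; this keeps the dominating function independent of $\lambda_0$ on the relevant range. Apart from that, everything reduces to the elementary factorization trick that removes the blow-up of the $x_g$ terms, followed by recognizing a Beta integral.
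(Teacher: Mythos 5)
Your proof is correct and takes essentially the same route as the paper's: push the limit $\lambda_0 \to \infty$ through the numerator and denominator of \eqref{theta_expectation} and recognize the resulting ratio of Beta integrals. The paper performs the interchange of limit and integral formally, whereas you make the cancellation of the $\prod_{g}\lvert x_g\rvert$ factors explicit and justify the interchange by dominated convergence; this adds rigor to the same argument rather than constituting a different one.
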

The proof for Lemma \ref{condexpectationlemma} is in Appendix \ref{App:D2}. We note that the expression \eqref{theta_mean} is essentially the usual posterior mean of $\theta$ under a beta prior. Intuitively, as $\lambda_0$ diverges, the weights $p_{\theta}^*(\betab_g)$ concentrate at zero and one, yielding the familiar form for $\mathbb{E}[\theta|\widehat{\betab}]$.  With this in hand, we are now in a position to outline the block coordinate ascent algorithm for the non-separable SSGL. 

\subsection{Optimization} \label{optimizationalgorithm}

The KKT conditions for the non-separable SSGL penalty yield the following necessary condition for the global mode:
\begin{align}
\widehat{\betab}_g \leftarrow \frac{1}{n} \left(1-\frac{\sigma^2\lambda_{\widehat{\theta}}^*(\widehat{\betab}_g)}{\lVert \zb_g\rVert_2}\right)_+\zb_g, \label{soft_thresh_ns}
\end{align}
where $\zb_g = \Xb_g^T\left[\Yb - \sum_{l\neq g}\Xb_l\widehat{\betab}_l\right]$ and $\widehat{\theta}$ is the mean \eqref{theta_mean}, conditioned on the previous value of $\betab$.  As before, \eqref{soft_thresh_ns} is sufficient for a local mode, but not the global mode.  When $p \gg n$ and $\lambda_0$ is large,  the posterior will be highly multimodal. As in the separable case, we require a refined thresholding scheme that will eliminate some of these suboptimal local modes from consideration. In approximating the group-wise conditional mean $\mathbb{E}[\theta|\widehat{\betab}_{\backslash g}] $ with $\mathbb{E}[\theta|\widehat{\betab}]$, we do not require group-specific thresholds. Instead, we can use the threshold given in Proposition \ref{globalmodeseparable} and Theorem \ref{delta_bounds} where $\theta$ is replaced with the current update \eqref{theta_mean}.  In particular, we shall use the upper bound $\Delta^U$ in our block coordinate ascent algorithm.

Similarly to \citet{RockovaGeorge2018}, we combine the refined threshold, $\Delta^U$ with the soft thresholding operation \eqref{soft_thresh_ns}, to yield the following update for $\widehat{\betab}_g$ at iteration $k$:
\begin{align}
{\betab}_{g}^{(k)} \leftarrow  \frac{1}{n}\left(1-\frac{\sigma^{2(k)} \lambda^*({\betab}_{g}^{(k - 1)}; {\theta}^{(k)} )}{\lVert \zb_g\rVert_2}\right)_+\zb_g \ \mathbb{I}(\lVert \zb_g\rVert_2 > \Delta^U) 
\end{align}
where $\theta^{(k)}= \mathbb{E}[\theta|\betab^{(k-1)}]$. Technically, $\theta$ should be updated after each group $\betab_g$ is updated. In practice, however, there will be little change after one group is updated and so we will update both $\theta$ and $\Delta^U$ after every $M$ iterations with a default value of $M = 10$.

With the Jeffreys prior $\pi(\sigma^2) \propto \sigma^{-2}$, the error variance $\sigma^2$ also has a closed form update:
\begin{align}
\sigma^{2(k)} \leftarrow \frac{\lVert \Yb - \Xb\betab^{(k-1)}\rVert_2^2}{n + 2}.
\end{align}
The complete optimization algorithm is given in Algorithm \ref{algorithm} of Appendix \ref{completealgorithm}. The computational complexity of this algorithm is $\mathcal{O}(np)$ per iteration, where $p = \sum_{g=1}^{G} m_g$. It takes $\mathcal{O}(n m_g)$ operations to compute the partial residual $\bm{z}_g$ for the $g$th group, for a total cost of $\mathcal{O} (n \sum_{g=1}^{G} m_g) = \mathcal{O}(np)$. Similarly, it takes $\mathcal{O}(np)$ cost to compute the sum of squared residuals $\lVert \bm{Y} - \bm{X} \widehat{\bm{\beta}} \rVert_2^2$ to update the variance parameter $\sigma^2$. The computational complexity of our algorithm matches that of the usual gradient descent algorithms for lasso and group lasso \citep{FriedmanHastieTibshirani2010}.

As a non-convex method, it is not guaranteed that SSGL will find the global posterior mode, only a local mode. However, the refined thresholding scheme (Theorem \ref{delta_bounds}) and a warm start initialization strategy (described in detail in Appendix \ref{AddlComputationalDetails}) enable SSGL to eliminate a number sub-optimal local modes from consideration in a similar manner to \citet{RockovaGeorge2018}. To briefly summarize the initialization strategy, we tune $\lambda_0$ from an increasing sequence of values, and we further scale $\lambda_0$ by $\sqrt{m_g}$ for each $g$th group to ensure that the amount of penalization is on the same scale for groups of potentially different sizes \citep{HBM12}. Meanwhile, we keep $\lambda_1$ fixed at a small value so that selected groups have minimal shrinkage. See Appendix \ref{AddlComputationalDetails} for detailed discussion of choosing $(\lambda_0, \lambda_1)$.

\section{Approaches to Inference} \label{sec:inference}

While the above procedure allows us to find the posterior mode of $\boldsymbol{\beta}$, providing a measure of uncertainty around our estimate is a challenging task. One possible solution is to run MCMC where the algorithm is initialized at the posterior mode. By starting the MCMC chain at the mode, the algorithm should converge faster. However, this is still not ideal, as it can be computationally burdensome in high dimensions. Instead, we will adopt ideas from a recent line of research (\cite{van2014asymptotically, javanmard2018debiasing}) based on de-biasing estimates from high-dimensional regression. These ideas were derived in the context of lasso regression, and we will explore the extent to which they work for the SSGL penalty. Define $\widehat{\boldsymbol{\Sigma}} = \boldsymbol{X}^T \boldsymbol{X}/n$ and let $\widehat{\boldsymbol{\Theta}}$ be an approximate inverse of $\widehat{\boldsymbol{\Sigma}}$. We define
\begin{equation}
\widehat{\boldsymbol{\beta}}_d = \widehat{\boldsymbol{\beta}} + \widehat{\boldsymbol{\Theta}} \boldsymbol{X}^T (\boldsymbol{Y} - \boldsymbol{X} \widehat{\boldsymbol{\beta}})/n.
\end{equation}
where $\widehat{\bm{\beta}}$ is the MAP estimator of $\bm{\beta}$ under the SSGL model. By \cite{van2014asymptotically}, this quantity $\widehat{\bm{\beta}}_d$ has the following asymptotic distribution: 
\begin{equation} \label{asymptoticdist}
\sqrt{n}(\widehat{\boldsymbol{\beta}}_d - \boldsymbol{\beta}) \sim \mathcal{N}(\boldsymbol{0}, \sigma^2 \widehat{\boldsymbol{\Theta}} \widehat{\boldsymbol{\Sigma}} \widehat{\boldsymbol{\Theta}}^T).
\end{equation}
For our inference procedure, we replace the population variance $\sigma^2$ in (\ref{asymptoticdist}) with the modal estimate $\widehat{\sigma}^2$ from the SSGL model. To estimate $\widehat{\boldsymbol{\Theta}}$, we utilize the nodewise regression approach developed in \cite{meinshausen2006high, van2014asymptotically}. We describe this estimation procedure for $\widehat{\boldsymbol{\Theta}}$ in Appendix \ref{ThetaEstimateDebiasing}.

Let $\widehat{\beta}_{dj}$ denote the $j$th coordinate of $\widehat{\betab}_d$. We have from (\ref{asymptoticdist}) that the $100(1-\alpha)  \%$ asymptotic pointwise confidence intervals for $\beta_{j}, j = 1, \ldots, p$, are
\begin{align} \label{confidenceintervals}
[ \widehat{\beta}_{dj} - c(\alpha, n, \widehat{\sigma}^2), \widehat{\beta}_{dj} + c(\alpha, n, \widehat{\sigma}^2) ],
\end{align}
where $c(\alpha, n, \widehat{\sigma}^2) := \Phi^{-1} (1-\alpha/2) \sqrt{ \widehat{\sigma}^2 ( \widehat{\boldsymbol{\Theta}} \widehat{\boldsymbol{\Sigma}} \widehat{\boldsymbol{\Theta}}^T )_{jj} / n}$ and $\Phi(\cdot)$ denotes the cdf of $\mathcal{N}(0,1)$. It should be noted that our posterior mode estimates should have less bias than existing estimates such as the group lasso. Therefore, the goal of the de-biasing procedure is less about de-biasing the posterior mode estimates, and more about providing an estimator with an asymptotic normal distribution from which we can perform inference.  

To assess the ability of this procedure to obtain accurate confidence intervals (\ref{confidenceintervals}) with $\alpha=0.05$, we run a small simulation study with $n=100$, $G=100$ or $n=300, G=300$, and each of the $G$ groups having $m=2$ covariates. We generate the covariates from a multivariate normal distribution with mean $\boldsymbol{0}$ and an AR(1) covariance structure with correlation $\rho$. The two covariates from each group are the linear and squared term from the original covariates. We set the first seven elements of $\boldsymbol{\beta}$ equal to $(0, 0.5, 0.25, 0.1, 0, 0, 0.7)$ and the remaining elements equal to zero. Lastly, we try $\rho = 0$ and $\rho = 0.7$. Table \ref{tab:debiasing} shows the coverage probabilities across 1000 simulations for all scenarios looked at. We see that important covariates, i.e. covariates with a nonzero corresponding $\beta_j$, have coverage near 0.85 when $n=100$ under either correlation structure, though this increases to nearly the nominal rate when $n=300$. The remaining covariates (null covariates) achieve the nominal level regardless of the sample size or correlation present. 

\begin{table}[t]
	\centering
	\begin{tabular}{|l|rrr|}
		\hline
		& $\rho$ & Important covariates & Null covariates\\ 
		\hline
		$n=100, G=100$ & 0.0 & 0.83 & 0.93 \\ 
		&  0.7 & 0.85 & 0.94 \\ 
		\hline
		$n=300, G = 300$ & 0.0 & 0.93 & 0.95 \\ 
		&  0.7 & 0.92 & 0.95 \\ 
		\hline
	\end{tabular}
	\caption{Coverage probabilities for de-biasing simulation. }
	\label{tab:debiasing}
\end{table}

\section{Nonparametric Spike-and-Slab Lasso} \label{NPSSLIntro}

We now introduce the nonparametric spike-and-slab lasso (NPSSL).  The NPSSL allows for flexible modeling of a response surface with minimal assumptions regarding its functional form. We consider two cases for the NPSSL: (i) a main effects only model, and (ii) a model with both main and interaction effects.

\subsection{Main Effects} \label{NPSSLMainEffects}

We first consider the main effects NPSSL model.  Here, we assume that the response surface may be decomposed into the sum of univariate functions of each of the $p$ covariates. That is, we have the following model:
\begin{align}
y_i = \sum_{j=1}^p f_j(X_{ij}) + \varepsilon_i, \quad \varepsilon_i \sim \mathcal{N}(0, \sigma^2). \label{NPSSL_main_effects}
\end{align}
Following \citet{RavikumarLaffertyLiuWasserman2009}, we assume that each $f_j$, $j = 1,\dots, p$, may be approximated by a linear combination of basis functions $\mathcal{B}_j = \{g_{j1}, \dots, g_{jd}\}$, i.e.,
\begin{align}
f_j(X_{ij}) \approx \sum_{k = 1}^{d} g_{jk}(X_{ij}) \beta_{jk} 
\end{align}
where $\betab_j = (\beta_{j1}, \dots, \beta_{jd})^T$ are the unknown weights.  Let $\widetilde{\Xb}_j$ denote the $n\times d$ matrix with the $(i, k)$th entry $\widetilde{\Xb}_j(i, k) = g_{jk}(X_{ij})$. Then, \eqref{NPSSL_main_effects} may be represented in matrix form as
\begin{align}
\yb - \bm{\delta} = \sum_{j=1}^p\widetilde{\Xb}_j\betab_j + \epsilonb, \quad \epsilonb\sim \mathcal{N}_n(\zerob, \sigma^2 \bm{I}_n), \label{matrix_main_effects}
\end{align}
where $\bm{\delta}$ is a vector of the lower-order truncation bias. Note that we assume the response $\yb$ has been centered and so we do not include a grand mean $\bm{\mu}$ in (\ref{matrix_main_effects}). Thus, we do not require the main effects to integrate to zero as in \cite{WeiReichHoppinGhosal2018}. We do, however, require the matrices $\widetilde{\Xb}_j, j = 1, \ldots, p$, to be orthogonal, as discussed in Section \ref{optimizationSSGL}.  Note that the entire design matrix does not need to be orthogonal; only the group-specific matrices need to be. We can enforce this in practice by either using orthonormal basis functions or by orthornormalizing the $\widetilde{\Xb}_j$ matrices before fitting the model. 

We assume that $\yb$ depends on only a small number of the $p$ covariates so that many of the $f_j$'s have a negligible contribution to \eqref{NPSSL_main_effects}. This is equivalent to assuming that most of the weight vectors $\betab_j$ have all zero elements.  If the $j$th covariate is determined to be predictive of $\yb$, then $f_j$ has a non-negligible contribution to \eqref{NPSSL_main_effects}. In this case, we want to include the \textit{entire} basis function approximation to $f_j$ in the model.

The above situation is a natural fit for the SSGL. We have $p$ groups where each group is either included as a whole or not included in the model. The design matrices for each group are exactly the matrices of basis functions, $\widetilde{\Xb}_j, j=1, \ldots, p$. We will utilize the non-separable SSGL penalty developed in Section \ref{NS_SSGL} to enforce this group-sparsity behavior in the model \eqref{matrix_main_effects}. More specifically, we seek to maximize the objective function with respect to $\betab = (\betab_1^T,\dots, \betab_p^T)^T \in \R^{pd}$ and $\sigma^2$:
\begin{align}
L(\betab, \sigma^2) = -\frac{1}{2\sigma^2}\lVert \yb - \sum_{j=1}^p \widetilde{\Xb}_j\betab_j\rVert_2^2 - (n+2)\log\sigma+ pen_{NS}(\betab).\label{NPSSL_main_effects_obj}
\end{align}
To find the estimators of $\betab$ and $\sigma^2$, we use Algorithm \ref{algorithm} in Appendix \ref{completealgorithm}. Similar additive models have been proposed by a number of authors including \citet{RavikumarLaffertyLiuWasserman2009} and \citet{WeiReichHoppinGhosal2018}. However, our proposed NPSSL method has a number of advantages. First, we allow the noise variance $\sigma^2$ to be unknown, unlike \citet{RavikumarLaffertyLiuWasserman2009}. Accurate estimates of $\sigma^2$ are important to avoid overfitting the noise beyond the signal. Secondly, we use a block-descent algorithm to quickly target the modes of the posterior, whereas \citet{WeiReichHoppinGhosal2018} utilize MCMC. Finally, our SSGL algorithm automatically thresholds negligible groups to zero, negating the need for a post-processing thresholding step.

\subsection{Main and Interaction Effects} \label{NPSSLMainInteractionEffects}

The main effects model \eqref{NPSSL_main_effects} allows for each covariate to have a nonlinear contribution to the model, but assumes a linear relationship \emph{between} the covariates. In some applications, this assumption may be too restrictive. For example, in the environmental exposures data which we analyze in Section \ref{NHANES}, we may expect high levels of two toxins to have an even more adverse effect on a person's health than high levels of either of the two toxins. Such an effect may be modeled by including interaction effects between the covariates. 

Here, we extend the NPSSL to include interaction effects. We consider only second-order interactions between the covariates, but our model can easily be extended to include even higher-order interactions. We assume that the interaction effects may be decomposed into the sum of bivariate functions of each pair of covariates, yielding the model:
\begin{align}
y_i = \sum_{j=1}^p f_j(X_{ij}) + \sum_{k=1}^{p-1}\sum_{l=k+1}^p f_{kl}(X_{ik}, X_{il}) + \varepsilon_i, \quad \varepsilon_i \sim \mathcal{N}(0, \sigma^2).\label{NPSSL_interactions}
\end{align}

For the interaction terms, we follow \citet{WeiReichHoppinGhosal2018} and approximate $f_{kl}$ using the outer product of the basis functions of the interacting covariates:
\begin{align}
f_{kl}(X_{ik}, X_{il}) \approx \sum_{s = 1}^{d^*}\sum_{r=1}^{d^*} g_{ks}(X_{ik})g_{lr}(X_{il}) \beta_{klsr}
\end{align}
where $\betab_{kl} = (\beta_{kl11}, \dots, \beta_{kl1d^*}, \beta_{kl21}, \dots, \beta_{kld^*d^*})^T \in \R^{d^{*2}}$ is the vector of unknown weights. We let $\widetilde{\Xb}_{kl}$ denote the $n\times d^{*2}$ matrix with rows $$\widetilde{\Xb}_{kl}(i, \cdot) = \text{vec}(\bm{g}_k(X_{ik})\bm{g}_l(X_{il})^T),$$ where $\bm{g}_k(X_{ik}) = (g_{k1}(X_{ik}), \dots, g_{kd^*}(X_{ik}))^T$. Then, \eqref{NPSSL_interactions} may be represented in matrix form as
\begin{align} 
\yb - \bm{\delta} = \sum_{j=1}^p \widetilde{\Xb}_j\betab_j + \sum_{k=1}^{p-1}\sum_{l=k+1}^p \widetilde{\Xb}_{kl}\betab_{kl} + \epsilonb, \quad \epsilonb\sim \mathcal{N}_n(\zerob, \sigma^2\bm{I}_n),\label{matrix_NPSSL_interactions}
\end{align}
where $\bm{\delta}$ is a vector of the lower-order truncation bias. We again assume $\yb$ has been centered and so do not include a grand mean in \eqref{matrix_NPSSL_interactions}. We do not constrain   $f_{kl}$ to integrate to zero as in \citet{WeiReichHoppinGhosal2018}. However, we do ensure that the main effects are not in the linear span of the interaction functions. That is, we require the ``main effect'' matrices $\widetilde{\Xb}_l$ and  $\widetilde{\Xb}_k$ to be orthogonal to the ``interaction'' matrix $\widetilde{\Xb}_{kl}$. This condition is needed to maintain identifiability for both the main and interaction effects in the model. In practice, we enforce this condition by setting the interaction design matrix to be the residuals of the regression of $\widetilde{\Xb}_k \circ \widetilde{\Xb}_l$ on $\widetilde{\Xb}_k$ and $\widetilde{\Xb}_l$.

Note that the current representation does not enforce strong hierarchy. That is, interaction terms can be included even if their corresponding main effects are removed from the model. However, the NPSSL model can be easily modified to accommodate strong hierarchy. If hierarchy is desired, the ``interaction'' matrices can be augmented to contain both main and interaction effects, as in \citet{lim2015learning}, i.e. the ``interaction'' matrices in \eqref{matrix_NPSSL_interactions} would be $\widetilde{\bm{X}}_{kl}^{\textrm{aug}} = [ \widetilde{\bm{X}}_k, \widetilde{\bm{X}}_l, \widetilde{\bm{X}}_{kl}]$, instead of simply $\widetilde{\bm{X}}_{kl}$. This augmented model is overparameterized since the main effects still have their own separate design matrices as well (to ensure that main effects can still be selected even if $\bm{\beta}_{kl}^{\textrm{aug}} = \bm{0}$). However, this ensures that interaction effects are only selected if the corresponding main effects are also in the model.

In the interaction model, we either include $\betab_{kl}$ in the model \eqref{matrix_NPSSL_interactions} if there is a non-negligible interaction between the $k$th and $l$th covariates, or we estimate $\widehat{\betab}_{kl} = \zerob_{d^{*2}}$ if such an interaction is negligible.  With the non-separable SSGL penalty, the objective function is:
\begin{align*} \label{obj_NPSSL_interaction}
L(\betab, \sigma^2) &= -\frac{1}{2\sigma^2}\lVert \yb - \sum_{j=1}^p \widetilde{\Xb}_j\betab_j - \sum_{k=1}^{p-1} \sum_{l=k+1}^p \widetilde{\Xb}_{kl}\betab_{kl}\rVert_2^2 + pen_{NS}(\betab) \\
& \quad  -(n+2)\log\sigma, \numbereqn
\end{align*}
where $\betab = (\betab_1^T, \dots, \betab_p^T, \betab_{12}^T, \dots \betab_{(p-1)p}^T)^T \in \R^{pd + p(p-1)d^{*2}/2}.$ We can again use Algorithm \ref{algorithm} in Appendix \ref{completealgorithm} to find the modal estimates of $\betab$ and $\sigma^2$.

\section{Asymptotic Theory for the SSGL and NPSSL} \label{asymptotictheory}

In this section, we derive asymptotic properties for the separable SSGL and NPSSL models. We first note some differences between our theory and the theory in \citet{RockovaGeorge2018}. First, we prove \textit{joint} consistency in estimation of both the unknown $\bm{\beta}$ \textit{and} the unknown $\sigma^2$, whereas \cite{RockovaGeorge2018} proved their result only for $\bm{\beta}$, assuming known variance $\sigma^2 = 1$. Secondly, \citet{RockovaGeorge2018} established convergence rates for the global posterior mode and the full posterior separately, whereas we establish a contraction rate $\epsilon_n$ for the full posterior only. Our rate $\epsilon_n$ satisfies $\epsilon_n \rightarrow 0$ as $n \rightarrow \infty$ (i.e. the full posterior collapses to the true $(\bm{\beta}, \sigma^2)$ almost surely as $n \rightarrow \infty$), and hence, it automatically follows that the posterior mode is a consistent estimator of $(\bm{\beta}, \sigma^2)$. Finally, we also derive a posterior contraction rate for nonparametric additive regression, not just linear regression. All proofs for the theorems in this section can be found in Appendix \ref{App:D3}.

\subsection{Grouped Linear Regression}
We work under the frequentist assumption that there is a true model,
\begin{equation} \label{truemodel}
\yb = \displaystyle \sum_{g=1}^{G} \Xb_g \betab_{0g} + \bm{\varepsilon}, \hspace{.5cm} \bm{\varepsilon} \sim \N_n ( \zerob, \sigma_0^2 \Ib_n ),
\end{equation}
where $\betab_0 = ( \betab_{01}^T, \ldots, \betab_{0G}^T )^T$ and $\sigma_0^2 \in (0, \infty)$. Denote $\Xb = [ \Xb_1, \ldots, \Xb_G ]$ and $\betab = ( \betab_1^T, \ldots, \betab_G^T)^T$. Suppose we endow $(\betab, \sigma^2)$ under model (\ref{truemodel})  with the following prior:
\begin{equation} \label{hiermodel}
\begin{array}{rl}
\pi (\bm{\beta} | \theta ) \sim & \displaystyle \prod_{g=1}^{G} \left[ (1- \theta) \bm{\Psi} ( \betab_g | \lambda_0 ) + \theta \bm{\Psi} ( \betab_g | \lambda_1 ) \right], \\
\theta \sim & \mathcal{B}(a, b), \\
\sigma^2 \sim & \mathcal{IG} (c_0, d_0), 
\end{array}
\end{equation}
where $c_0 > 0$ and $d_0 > 0$ are fixed constants and the hyperparameters $(a,b)$ in the prior on $\theta$ are to be chosen later.

\begin{remark}
	In our implementation of the SSGL model, we endowed $\sigma^2$ with an improper prior, $\pi(\sigma^2) \propto \sigma^{-2}$. This can be viewed as a limiting case of the $\mathcal{IG}(c_0,d_0)$ prior with $c_0 \rightarrow 0, d_0 \rightarrow 0$. This improper prior is fine for implementation since it leads to a proper posterior, but for our theoretical investigation, we require the priors on $(\bm{\beta}, \sigma^2)$ to be proper. 
\end{remark}

\subsubsection{Posterior Contraction Rates}

Let $m_{\max} = \max_{1 \leq j \leq G} m_g$ and let  $p = \sum_{g=1}^{G} m_g$. Let $S_0$ be the set containing the indices of the true nonzero groups, where $S_0 \subseteq \{1, \ldots, G \}$ with cardinality $s_0 = \lvert S_0 \rvert$.  We make the following assumptions:
\begin{enumerate}[label=(A\arabic*)]
	\item Assume that $G \gg n $, $\log(G) = o(n)$, and $m_{\max} = O( \log G / \log n)$. \label{A1} 
	\item The true number of nonzero groups satisfies $s_0 = o(n / \log G)$. \label{A2}
	\item There exists a constant $k>0$ so that $\lambda_{\max} ( \Xb^T \Xb ) \leq k n^{\alpha}$, for some $\alpha \in [1, \infty)$. \label{A3}
	\item Let $\xi \subset \{1, \ldots, G \}$, and let $\bm{X}_{\xi}$ denote the submatrix of $\Xb$ that contains the submatrices with groups indexed by $\xi$. There exist constants $\nu_1 > 0$, $\nu_2 > 0$, and an integer $\bar{p}$ satisfying $s_0 = o(\bar{p})$ and $\bar{p} = o(  s_0 \log n )$, so that $n \nu_1 \leq \lambda_{\min}  ( \Xb_{\xi}^T \Xb_{\xi} ) \leq \lambda_{\max} ( \Xb_{\xi}^T \Xb_{\xi}) \leq n \nu_2$ for any model of size $\lvert \xi \rvert \leq \bar{p}$.   \label{A4}
	\item $\lVert \betab_0 \rVert_{\infty} = O( \log G ).$	\label{A5}
\end{enumerate}

Assumption \ref{A1} allows the number of groups $G$ and total number of covariates $p$ to grow at nearly exponential rate with sample size $n$. The size of each individual group may also grow as $n$ grows, but should grow at a slower rate than $n / \log n$. Assumption \ref{A2} specifies the growth rate for the true model size $s_0$. Assumption \ref{A3} bounds the eigenvalues of $\Xb^T \Xb$ from above and is less stringent than requiring all the eigenvalues of the Gram matrix ($\Xb^T \Xb / n$) to be bounded away from infinity. Assumption \ref{A4} ensures that $\Xb^T \Xb$ is locally invertible over sparse sets. In general, conditions \ref{A3}-\ref{A4} are difficult to verify, but they can be shown to hold with high probability for certain classes of matrices where the rows of $\Xb$ are independent and sub-Gaussian \cite{MendelsonPajor2006, RaskuttiWainwrightYu2010}. Finally, Assumption \ref{A5} places a restriction on the growth rate of the maximum signal size for the true $\betab_0$. 

We now state our main theorem on the posterior contraction rates for the SSGL prior (\ref{hiermodel}) under model (\ref{truemodel}). Let $\mathbb{P}_0$ denote the probability measure underlying the truth (\ref{truemodel}) and $\Pi( \cdot | \yb)$ denote the posterior distribution under the prior (\ref{hiermodel}) for $(\betab, \sigma^2)$.

\begin{theorem}[posterior contraction rates] \label{posteriorcontractiongroupedregression}
	Let $\epsilon_n =  \sqrt{ s_0 \log G / n }$, and suppose that Assumptions \ref{A1}-\ref{A5} hold.  Under model (\ref{truemodel}), suppose that we endow $(\betab, \sigma^2)$ with the prior (\ref{hiermodel}). For the hyperparameters in the $\mathcal{B}(a,b)$ prior on $\theta$, we choose $a=1, b=G^{c}$, $c > 2$. Further, we set $\lambda_0 = (1-\theta)/\theta$ and $\lambda_1 \asymp 1/n$ in the SSGL prior. Then
	\begin{equation} \label{l2contraction}
	\Pi \left( \betab: \lVert \betab - \betab_0 \rVert_2 \geq M_1 \sigma_0 \epsilon_n \vert \yb \right) \rightarrow 0 \textrm{ a.s. } \mathbb{P}_0 \textrm{ as } n, G \rightarrow \infty,
	\end{equation} 
	\begin{equation} \label{predictioncontraction}
	\Pi \left( \bm{\beta}: \lVert \Xb \betab - \Xb \betab_0 \rVert_2 \geq M_2 \sigma_0 \sqrt{n} \epsilon_n \vert \yb \right) \rightarrow 0 \textrm{ a.s. } \mathbb{P}_0 \textrm{ as } n, G \rightarrow \infty, 
	\end{equation} 
	\begin{equation} \label{varianceconsistency}
	\Pi \left( \sigma^2: \lvert \sigma^2 - \sigma_0^2 \rvert \geq 4 \sigma_0^2 \epsilon_n  \vert \yb \right) \rightarrow 0 \textrm{ as } n \rightarrow \infty, \textrm{ a.s. } \mathbb{P}_0 \textrm{ as } n, G \rightarrow \infty,
	\end{equation}
	for some $M_1 > 0, M_2 > 0$.
\end{theorem}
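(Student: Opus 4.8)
The plan is to invoke the general posterior contraction framework of Ghosal–Ghosh–van der Vaart, adapted to the high-dimensional regression setting as in Ning–Ghosal and Song–Liang. The strategy is to verify three ingredients: (i) a prior mass (Kullback–Leibler) condition showing the prior puts enough mass on a neighborhood of the truth $(\betab_0, \sigma_0^2)$; (ii) the existence of exponentially consistent tests separating the truth from the complement of the $\epsilon_n$-ball, which follows from (iii) a dimension/complexity bound showing the posterior concentrates on models of size $O(s_0)$ so that a union bound over sparse submodels, combined with Assumption \ref{A4}, yields the required testing power. Throughout, I treat $\sigma^2$ and $\betab$ jointly, working with the joint KL divergence and the affinity between Gaussian likelihoods; the role of the $\mathcal{IG}(c_0,d_0)$ prior and Assumption \ref{A4} (restricted eigenvalues) is precisely to control the $\sigma^2$ direction and to make the restricted least-squares problem well-posed.

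First I would establish the \textbf{effective sparsity / dimension recovery} step: show that $\Pi(\betab : |\{g : \betab_g \neq \zerob_{m_g}\}| > A s_0 \mid \yb) \to 0$ for some constant $A$. This is where the choice $b = G^c$ with $c>2$ and $\lambda_0 = (1-\theta)/\theta$ (so $\lambda_0$ is effectively of order $G^c$) enters: the spike is sharp enough that the marginal prior penalizes large models at rate roughly $G^{-c}$ per extra active group, which beats the $\binom{G}{k}$ entropy term $\lesssim G^k$ as long as $c>2$ (the extra margin over $c>1$ absorbs the group sizes $m_{\max} = O(\log G/\log n)$ and the slab normalizing constants $C_g \lambda_1^{m_g}$). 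Concretely I would bound the ratio of prior mass on $k$-sparse models to the prior mass near $\betab_0$ and show the resulting series over $k > As_0$ is summable and negligible against $e^{-(2+\eta)n\epsilon_n^2}$.

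Next, the \textbf{prior mass condition}: I need $\Pi\big(\{\betab : \|\Xb(\betab-\betab_0)\|_2 \le \sqrt{n}\epsilon_n,\ \|\betab-\betab_0\|_1 \text{ small}\}\times\{\sigma^2 : |\sigma^2-\sigma_0^2|\lesssim\epsilon_n\}\big) \gtrsim e^{-c' n\epsilon_n^2}$. For the $\betab$ part I restrict to a box around $\betab_0$ on the true support $S_0$ and the origin off-support; on-support, the slab density $\bm{\Psi}_1$ with $\lambda_1 \asymp 1/n$ is nearly flat over a ball of radius $\asymp \epsilon_n$, contributing $\prod_{g\in S_0}(C_g\lambda_1^{m_g})\cdot(\text{vol})$, which is $e^{-O(s_0 m_{\max}\log n)} = e^{-O(s_0\log G)} = e^{-O(n\epsilon_n^2)}$ using Assumption \ref{A1}; off-support the spike integrates to near $1$, up to a $\theta^{s_0}$-type factor that under $\theta\sim\mathcal{B}(1,G^c)$ costs another $e^{-O(s_0\log G)}$. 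Assumption \ref{A5} ($\|\betab_0\|_\infty = O(\log G)$) ensures the slab, despite being Laplace, does not pay more than polynomially in $G$ at $\betab_0$. The $\sigma^2$ part is a routine continuity estimate for the $\mathcal{IG}$ density. Combining and translating $\|\betab-\betab_0\|_1$-smallness plus Assumption \ref{A3} into KL control of the Gaussian regression models gives the required bound.

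The \textbf{main obstacle} is the testing/separation step in the joint $(\betab,\sigma^2)$ geometry, restricted to the sparse sieve identified in step one. Once we know the posterior lives on models of size $\le As_0$, Assumption \ref{A4} gives $\lambda_{\min}(\Xb_\xi^T\Xb_\xi)\ge n\nu_1$ uniformly over such $\xi$ (note $As_0 \le \bar p$ is compatible with $s_0 = o(\bar p)$), so $\|\Xb(\betab-\betab_0)\|_2 \ge \sqrt{n\nu_1}\,\|\betab-\betab_0\|_2$ on the sieve, converting prediction-norm tests into $\ell_2$ tests and yielding \eqref{l2contraction} from \eqref{predictioncontraction}. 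For the variance, I would use that on the sieve the profiled residual sum of squares concentrates, so a likelihood-ratio test separates $|\sigma^2-\sigma_0^2| \ge 4\sigma_0^2\epsilon_n$ with exponentially small error; the constant $4$ is chosen to dominate the cross-term coming from the $\betab$-estimation error $\|\Xb(\widehat\betab-\betab_0)\|_2^2/n \lesssim \sigma_0^2\epsilon_n^2$. Assembling: entropy of the sparse sieve is $\lesssim As_0\log G + As_0 m_{\max}\log(1/\epsilon_n) \lesssim n\epsilon_n^2$, which matches the exponent in the prior-mass bound, so the standard Ghosal–van der Vaart argument closes and delivers \eqref{l2contraction}–\eqref{varianceconsistency} simultaneously. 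The delicate bookkeeping is making every "$O$" above genuinely bounded by a small constant times $n\epsilon_n^2 = s_0\log G$, which forces the precise hyperparameter scalings in the statement.
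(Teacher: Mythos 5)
Your overall architecture is the same as the paper's: verify the Kullback--Leibler prior-mass condition for the pair $(\betab,\sigma^2)$, build a sieve of effective dimension $O(s_0)$, and construct Song--Liang-style tests from restricted least-squares estimators $\widehat{\betab}_\xi$ and $\widehat{\sigma}^2_\xi$ over sparse supersets of $S_0$, with Assumption \ref{A4} converting prediction-norm separation into $\ell_2$ separation. The prior-mass computation you sketch (slab on $S_0$ costing $e^{-O(s_0 m_{\max}\log n)}$, spike off-support, $\theta$-integral costing $e^{-O(s_0\log G)}$) is essentially the paper's Part I.

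There is, however, a genuine gap in how you handle the fact that the SSGL prior is absolutely continuous. Your dimension-recovery event is written as $\Pi(\betab:\lvert\{g:\betab_g\neq\zerob_{m_g}\}\rvert>As_0\mid\yb)\to 0$, but under a continuous prior every group is nonzero almost surely, so this event has posterior probability one and the statement is false as written. The paper replaces exact sparsity with the generalized dimensionality $\lvert\bm{\gamma}(\betab)\rvert=\sum_g I(\lVert\betab_g\rVert_2>\omega_g)$, where $\omega_g$ is the spike--slab crossover threshold \eqref{omegathreshold}, and the sieve is $\{\lvert\bm{\gamma}(\betab)\rvert\le C_3 s_0,\ \sigma^2\le G^{C_3 s_0/c_0}\}$ (note you also omit the upper truncation of $\sigma^2$, needed so that $\Pi(\mathcal{F}_n^c)$ picks up the inverse-gamma tail). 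This is not cosmetic: it propagates into your testing step. On the sieve, $\betab$ still has $G-O(s_0)$ groups with small but nonzero norm, so $\Xb(\betab-\betab_0)$ is not supported on a sparse submatrix and the inequality $\lVert\Xb(\betab-\betab_0)\rVert_2\ge\sqrt{n\nu_1}\,\lVert\betab-\betab_0\rVert_2$ does not follow from Assumption \ref{A4} alone. The paper's proof splits $\betab$ into the thresholded support $\tilde\xi$ and its complement and bounds $\lVert\Xb_{\tilde\xi^c}\betab_{\tilde\xi^c}\rVert_2\lesssim\sqrt{n}\,\sigma_0\epsilon_n$ using the \emph{global} eigenvalue bound of Assumption \ref{A3} together with the smallness of $\omega_g$ (which requires $\lambda_0\ge G^2$ with high posterior probability, hence the $\mathcal{B}(1,G^c)$ prior with $c>2$). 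Without this extra step your test construction does not separate the alternatives, so you need to add both the thresholded notion of dimension and the control of the residual small-group contribution.
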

\begin{remark}
	In the case where $G = p$ and $m_1 = \ldots = m_G = 1$, the $\ell_2$ and prediction error rates in (\ref{l2contraction})-(\ref{predictioncontraction}) reduce to the familiar optimal rates of $\sqrt{s_0 \log p / n}$ and $\sqrt{s_0 \log p}$ respectively. 
\end{remark}
\begin{remark}
	Eq. (\ref{varianceconsistency}) demonstrates that our model also consistently estimates the unknown variance $\sigma^2$, therefore providing further theoretical justification for placing an independent prior on $\sigma^2$, as advocated by \citet{MoranRockovaGeorge2018}.
\end{remark}

\subsubsection{Dimensionality Recovery}

Although the posterior mode is exactly sparse, the SSGL prior is absolutely continuous so it assigns zero mass to exactly sparse vectors. To approximate the model size under the SSGL model, we use the following generalized notion of sparsity \citep{BhattacharyaPatiPillaiDunson2015}. For $\omega_g > 0$, we define the generalized inclusion indicator and generalized dimensionality, respectively, as
\begin{equation} \label{generalizeddimensionality}
\gamma_{\omega_g} (\betab_g) = I( \lVert \betab_g \rVert_2 > \omega_g) \textrm{ and } \lvert \bm{\gamma} (\betab) \rvert = \displaystyle \sum_{g=1}^{G} \gamma_{\omega_g} (\betab_g ).
\end{equation} 
In contrast to \cite{BhattacharyaPatiPillaiDunson2015, RockovaGeorge2018}, we allow the threshold $\omega_g$ to be different for each group, owing to the fact that the group sizes $m_g$ may not necessarily all be the same. However, the $\omega_g$'s, $g= 1, \ldots, G$, should still tend towards zero as $n$ increases, so that $| \bm{\gamma} (\bm{\beta}) |$ provides a good approximation to $\# \{g: \betab_g \neq \zerob_{m_g} \}$. 

Consider as the threshold,
\begin{equation} \label{omegathreshold}
\omega_g \equiv \omega_g(\lambda_0, \lambda_1, \theta) = \frac{1}{\lambda_0 - \lambda_1} \log \left[ \frac{1-\theta}{\theta} \frac{\lambda_0^{m_g}}{\lambda_1^{m_g}} \right]
\end{equation}
Note that for large $\lambda_0$, this threshold rapidly approaches zero. Analogous to \cite{Rockova2018, RockovaGeorge2018}, any vectors $\betab_g$ that satisfy $\lVert \betab_g \rVert_2 = \omega_g$ correspond to the intersection points between the two group lasso densities in the separable SSGL prior (\ref{ssgrouplasso}), or when the second derivative $\partial^2 pen_S(\betab|\theta) / \partial \lVert \betab_g\rVert_2^2 = 0.5$. The value $\omega_g$ represents the turning point where the slab has dominated the spike, and thus, the sharper the spike (when $\lambda_0$ is large), the smaller the threshold. 

Using the notion of generalized dimensionality (\ref{generalizeddimensionality}) with (\ref{omegathreshold}) as the threshold, we have the following theorem.

\begin{theorem}[dimensionality] \label{dimensionalitygroupedregression}
	Suppose that the same conditions as those in Theorem \ref{posteriorcontractiongroupedregression} hold. Then under \eqref{truemodel}, for sufficiently large $M_3 > 0$,
	\begin{equation} \label{posteriorcompressibility}
	\displaystyle \sup_{\betab_0} \mathbb{E}_{\betab_0} \Pi \left( \betab: \lvert \bm{\gamma} (\betab ) \rvert > M_3 s_0 \vert \yb \right) \rightarrow 0 \textrm{ as } n, G \rightarrow \infty.
	\end{equation}
\end{theorem}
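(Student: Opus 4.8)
The plan is to bound the posterior probability of the event $\{|\bm{\gamma}(\betab)| > M_3 s_0\}$ by comparing the prior mass it receives against the normalizing evidence, using a standard Bayesian testing/sieve argument in the style of \citet{RockovaGeorge2018} and \citet{BhattacharyaPatiPillaiDunson2015}. First I would write the posterior probability of any set $A$ as the ratio $\Pi(A \mid \yb) = \int_A e^{\ell_n(\betab,\sigma^2) - \ell_n(\betab_0,\sigma_0^2)} d\Pi / \int e^{\ell_n(\betab,\sigma^2) - \ell_n(\betab_0,\sigma_0^2)} d\Pi$, where $\ell_n$ is the log-likelihood. The denominator is lower-bounded (with probability tending to one under $\mathbb{P}_0$) by $e^{-c n \epsilon_n^2}$ for some constant $c$, via the usual prior-mass/Kullback–Leibler neighborhood argument — this is exactly the evidence lower bound already needed to prove Theorem \ref{posteriorcontractiongroupedregression}, so I would simply invoke it. It remains to show the numerator, with $A = \{\betab: |\bm{\gamma}(\betab)| > M_3 s_0\}$, is $o(e^{-c n \epsilon_n^2})$ in expectation.

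The key step is a union bound over the collection of ``generalized models.'' For a subset $\xi \subseteq \{1,\dots,G\}$ with $|\xi| = t$, the event that exactly the groups in $\xi$ have $\lVert\betab_g\rVert_2 > \omega_g$ has prior probability controlled by: (i) the $\binom{G}{t}$ ways to choose $\xi$, which contributes $e^{t \log G}$; (ii) for each selected group, the prior probability under the SSGL that $\lVert\betab_g\rVert_2 > \omega_g$ — but because $\omega_g$ is precisely the crossover point \eqref{omegathreshold} where the slab dominates the spike, and because the prior on $\theta$ is $\mathcal{B}(1, G^c)$ with $c>2$, the marginal probability $\Pi(\lVert\betab_g\rVert_2 > \omega_g)$ is of order $\mathbb{E}[\theta] \asymp G^{-c}$ up to a slab-tail factor; (iii) for each non-selected group, the prior essentially forces $\betab_g$ near zero under the spike $\bm{\Psi}_0$. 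Multiplying, the prior mass of the event ``$|\bm{\gamma}(\betab)| = t$'' is bounded by roughly $e^{t\log G} \cdot G^{-ct} \cdot (\text{slab tail})^t = e^{-(c-1)t\log G + O(t)}$, which for $t > M_3 s_0$ with $M_3$ large and $c > 2$ is summable and dominated by $e^{-c' n\epsilon_n^2} = e^{-c' s_0 \log G}$. On this event the likelihood ratio integral is further controlled either trivially (by $1$, since the likelihood ratio integrates to something bounded after accounting for the Gaussian normalization) or by an explicit test having exponentially small type-II error on the complement of a prediction-norm ball, exactly as in the contraction proof; combining the prior-mass bound with the evidence lower bound and taking expectations with $\mathbb{E}_{\betab_0}$ gives \eqref{posteriorcompressibility}.

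The main obstacle I anticipate is \textbf{step (ii)}: getting a clean, uniform-in-$g$ bound on the SSGL marginal prior probability $\Pi(\lVert\betab_g\rVert_2 > \omega_g)$ that correctly tracks the dependence on the group size $m_g$. Because the group-lasso densities $\bm{\Psi}(\betab_g \mid \lambda)$ carry the dimension-dependent constant $C_g$ and the $\lambda^{m_g}$ factor, and because $\omega_g$ in \eqref{omegathreshold} itself scales with $m_g$, one must verify that $\int_{\lVert\betab_g\rVert_2 > \omega_g} \bm{\Psi}_0(\betab_g) d\betab_g$ is small enough and that the slab contribution $\theta \int_{\lVert\betab_g\rVert_2>\omega_g}\bm{\Psi}_1(\betab_g)d\betab_g$ is of the claimed order — this requires integrating the group-lasso density in polar coordinates (a Gamma-type tail integral in the radial variable) and using the choice $\lambda_0 = (1-\theta)/\theta$, $\lambda_1 \asymp 1/n$ together with Assumption \ref{A1} ($m_{\max} = O(\log G/\log n)$) to absorb the $m_g$-dependent factors into the $e^{O(t)}$ slack. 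Once that radial tail bound is in hand, the rest is the routine union-bound-plus-evidence argument shared with Theorem \ref{posteriorcontractiongroupedregression}.
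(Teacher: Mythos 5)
Your skeleton is exactly the paper's: lower-bound the evidence by $e^{-(1+C_2)n\epsilon_n^2}$ on an event $E_n$ with $\mathbb{P}_0(E_n^c)\to 0$ (invoking the Kullback--Leibler prior-mass condition already established for Theorem \ref{posteriorcontractiongroupedregression}, via Lemma 8.10 of Ghosal and van der Vaart), reduce the numerator by Fubini to the \emph{prior} mass $\Pi(\lvert\bm{\gamma}(\betab)\rvert > C_3 s_0)$, and kill that prior mass with a union bound over models using the $\mathcal{B}(1,G^c)$ prior on $\theta$. Two remarks on your step (ii), which is where your execution diverges from the paper's. First, the obstacle you anticipate (a polar-coordinate Gamma-tail integral of the group lasso density, with careful tracking of $C_g$ and $\lambda^{m_g}$) dissolves entirely: since $\omega_g$ is by definition the crossover point of the two mixture components, on $\{\lVert\betab_g\rVert_2>\omega_g\}$ one has $(1-\theta)\bm{\Psi}_0(\betab_g)<\theta\bm{\Psi}_1(\betab_g)$, hence $\pi(\betab_g\vert\theta)<2\theta\,\bm{\Psi}_1(\betab_g)$, and the integral of the normalized slab over any set is at most $1$; each selected group therefore contributes a factor $2\theta$ with no $m_g$-dependent residue. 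Second, your plan multiplies \emph{marginal} per-group probabilities $\Pi(\lVert\betab_g\rVert_2>\omega_g)\asymp G^{-c}$ across groups, but the groups are only independent \emph{conditionally} on $\theta$ (and $\omega_g$ itself depends on $\theta$), so the product bound does not hold marginally as written. The paper's fix is to condition on $\theta$, restrict to $\{\theta\le\theta_0\}$ with $\theta_0=C_3 s_0\log G/G^c$, obtain $\sum_{S:\lvert S\rvert>C_3 s_0}\theta_0^{\lvert S\rvert}\lesssim e^{-C_3 n\epsilon_n^2}$ by the binomial bound, and separately bound $\Pi(\theta>\theta_0)=(1-\theta_0)^{G^c}\le e^{-C_3 s_0\log G}$. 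With that conditioning inserted, your argument goes through and is the paper's proof.
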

Theorem \ref{dimensionalitygroupedregression} shows that the expected posterior probability that the generalized dimension is a constant multiple larger than the true model size $s_0$ is asymptotically vanishing. In other words, the SSGL posterior concentrates on sparse sets.

\subsection{Sparse Generalized Additive Models (GAMs)}
Assume there is a true model,
\begin{equation} \label{truemodelGAM}
y_i = \displaystyle \sum_{j=1}^{p} f_{0j}(X_{ij}) + \varepsilon_i, \hspace{.5cm} \varepsilon_i \sim \mathcal{N} (0, \sigma_0^2).
\end{equation}
where $\sigma_0^2 \in (0, \infty)$. Throughout this section, we assume that all the covariates $\bm{X}_i = (X_{i1}, \ldots, X_{ip})^T$ have been standardized to lie in $[0,1]^p$ and that $f_{0j} \in \mathcal{C}^{\kappa}[0,1], j=1, \ldots, p$. That is, the true functions are all at least $\kappa$-times continuously differentiable over $[0,1]$, for some $\kappa \in \mathbb{N}$. Suppose that each $f_{0j}$ can be approximated by a linear combination of basis functions, $\{g_{j1}, \ldots, g_{jd} \}$. In matrix notation, (\ref{truemodelGAM}) can then be written as
\begin{equation} \label{truemodelGAMmatrix}
\yb = \displaystyle \sum_{j=1}^{p} \widetilde{\Xb}_j \betab_{0j} + \bm{\delta} + \bm{\varepsilon}, \hspace{.5cm} \bm{\varepsilon} \sim \mathcal{N}_n ( \bm{0}, \sigma_0^2 \Ib_n),
\end{equation} 
where $\widetilde{\Xb}_j$ denotes an $n \times d$ matrix where the $(i,k)$th entry is $\widetilde{\Xb}_j(i,k) = g_{jk} (X_{ij})$, the  $\betab_{0j}$'s are $d \times 1$ vectors of basis coefficients, and $\bm{\delta}$ denotes an $n \times 1$ vector of lower-order bias.

Denote $\widetilde{\bm{X}} = [ \widetilde{\Xb}_1, \ldots, \widetilde{\Xb}_p]$ and $\bm{\beta} = ( \betab_1^T, \ldots, \betab_p^T )^T$. Under (\ref{truemodelGAM}), suppose that we endow $(\betab, \sigma^2)$ in (\ref{truemodelGAMmatrix}) with the prior (\ref{hiermodel}). We have the following assumptions:
\begin{enumerate}[label=(B\arabic*)]
	\item Assume that $p \gg n$, $\log p = o(n)$, and $d \asymp n^{1 / (2 \kappa + 1)}$. \label{B1}
	\item The number of true nonzero functions satisfies 
	\begin{align*}
	    s_0 = o( \max \{ n / \log p, n^{2 \kappa / (2 \kappa + 1)} \} ).
	\end{align*} \label{B2}
	\item There exists a constant $k_1 > 0$ so that for all $n$, $\lambda_{\max} ( \widetilde{\Xb}^T \widetilde{\Xb} ) \leq k_1 n$. \label{B3}
	\item Let $\xi \subset \{1, \ldots, p \}$, and let $\widetilde{\Xb}_{\xi}$ denote the submatrix of $\widetilde{\Xb}$ that contains the submatrices indexed by $\xi$. There exists a constant $\nu_1 > 0$ and an integer $\bar{p}$ satisfying $s_0 = o(\bar{p})$ and $ \bar{p} = o( s_0 \log n )$, so that $\lambda_{\min}  ( \widetilde{\Xb}_{\xi}^T \widetilde{\Xb}_{\xi}  ) \geq n \nu_1$ for any model of size $\lvert \xi \rvert \leq \bar{p}$.   \label{B4}
	\item $\lVert \betab_0 \rVert_{\infty} = O( \log p ).$		\label{B5}
	\item The bias $\bm{\delta}$ satisfies $\lVert \bm{\delta} \rVert_{2} \lesssim \sqrt{s_0 n} d^{- \kappa}$. \label{B6}
\end{enumerate}
\noindent Assumptions \ref{B1}-\ref{B5} are analogous to assumptions \ref{A1}-\ref{A5}. Assumptions \ref{B3}-\ref{B4} are difficult to verify but can be shown to hold if appropriate basis functions for the $g_{jk}$'s are used, e.g. cubic B-splines \cite{YooGhosal2016, WeiReichHoppinGhosal2018}. Finally, Assumption \ref{B6} bounds the approximation error incurred by truncating the basis expansions to be of size $d$. This assumption is satisfied, for example, by B-spline basis expansions \cite{ZhouShenWolfe1998, WeiReichHoppinGhosal2018}.

Let $\widetilde{\mathbb{P}}_0$ denote the probability measure underlying the truth (\ref{truemodelGAM}) and $\Pi ( \cdot | \yb)$ denote the posterior distribution under NPSSL model with the prior (\ref{hiermodel}) for $(\betab, \sigma^2)$ in (\ref{truemodelGAMmatrix}). Further, let $f (\bm{X}_i) = \sum_{j=1}^{p} f_j ( X_{ij})$ and $f_0 (\bm{X}_i) = \sum_{j=1}^{p} f_{0j} (X_{ij})$, and define the empirical norm $\lVert \cdot \rVert_n$ as 
\begin{align*}
\lVert f - f_0 \rVert_n^2 = \frac{1}{n} \sum_{i=1}^{n} \left[ f (\bm{X}_i) - f_0 (\bm{X}_i) \right]^2.
\end{align*}
Let $\mathcal{F}$ denote the infinite-dimensional set of all possible additive functions $f = \sum_{j=1}^{p} f_j$, where each $f_j$ can be represented by a $d$-dimensional basis expansion. In \citet{RaskuttiWainwrightYu2012}, it was shown that the minimax estimation rate for $f_0 = \sum_{j=1}^{p} f_{0j}$ under squared $\ell_2$ error loss is $\epsilon_n^2 \asymp s_0 \log p / n + s_0 n^{-2 \kappa / (2 \kappa + 1)}$. The next theorem establishes that the NPSSL model achieves this minimax posterior contraction rate.

\begin{theorem}[posterior contraction rates] \label{contractionGAMs}
	Let $\epsilon_n^2 = s_0 \log p / n + s_0 n^{-2 \kappa / (2 \kappa + 1)}$. Suppose that Assumptions \ref{B1}-\ref{B6} hold. Under model (\ref{truemodelGAMmatrix}), suppose that we endow $(\bm{\beta}, \sigma^2)$  with the prior (\ref{hiermodel}) (replacing $G$ with $p$). For the hyperparameters in the $\mathcal{B}(a,b)$ prior on $\theta$, we choose $a=1, b=p^{c}$, $c > 2$. Further, we set $\lambda_0 = (1-\theta)/\theta$ and $\lambda_1 \asymp 1/n$ in the SSGL prior. Then
	\begin{equation} \label{empiricalcontractionGAM}
	\Pi \left( f \in \mathcal{F}: \lVert f - f_0 \rVert_n \geq \widetilde{M}_1 \epsilon_n | \yb \right) \rightarrow 0 \textrm{ a.s. } \widetilde{\mathbb{P}}_0 \textrm{ as } n, p \rightarrow \infty,
	\end{equation}
	\begin{equation} \label{GAMvarianceconsistency}
	\Pi \left( \sigma^2: \lvert \sigma^2 - \sigma_0^2 \rvert \geq 4 \sigma_0^2 \epsilon_n  \vert \yb \right) \rightarrow 0 \textrm{ as } n \rightarrow \infty, \textrm{ a.s. } \widetilde{\mathbb{P}}_0 \textrm{ as } n, p \rightarrow \infty,
	\end{equation}
	for some $\widetilde{M}_1 > 0$.
\end{theorem}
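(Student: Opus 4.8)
The plan is to prove Theorem~\ref{contractionGAMs} with the same Ghosal--Ghosh--van der Vaart machinery used for Theorem~\ref{posteriorcontractiongroupedregression}, viewing the NPSSL as a grouped linear regression with $G=p$ groups of common size $d$ applied to the working model $\yb = \widetilde{\Xb}\betab_0 + \bm{\delta} + \bm{\varepsilon}$, and carrying along two new features: the truncation bias $\bm{\delta}$, and the fact that the group size $d \asymp n^{1/(2\kappa+1)}$ now grows polynomially in $n$. The first step is a reduction: for any $f\in\mathcal{F}$ with basis coefficients $\betab$,
\[
\lVert f - f_0 \rVert_n \;\le\; \tfrac{1}{\sqrt n}\lVert \widetilde{\Xb}(\betab-\betab_0)\rVert_2 + \tfrac{1}{\sqrt n}\lVert \bm{\delta}\rVert_2,
\]
and by Assumption~\ref{B6} with $d\asymp n^{1/(2\kappa+1)}$ the second term is $\lesssim \sqrt{s_0}\,d^{-\kappa}\asymp \sqrt{s_0}\,n^{-\kappa/(2\kappa+1)}\lesssim \epsilon_n$. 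Hence \eqref{empiricalcontractionGAM} follows once we show that the joint posterior for $(\betab,\sigma^2)$ concentrates on $\{n^{-1/2}\lVert\widetilde{\Xb}(\betab-\betab_0)\rVert_2 \lesssim \epsilon_n\}\cap\{\lvert\sigma^2-\sigma_0^2\rvert\lesssim \epsilon_n\}$, i.e.\ at rate $\epsilon_n$ in the renormalized Gaussian-regression Hellinger metric; this simultaneously delivers \eqref{GAMvarianceconsistency}. It is useful to record throughout that $n\epsilon_n^2 \asymp s_0\log p + s_0 d$, so every quantity must be controlled at this order, with $s_0 d = s_0 n^{1/(2\kappa+1)}$ the new ``effective dimension'' contribution.

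The three standard ingredients are: (i) a prior concentration bound $\Pi(\text{KL-type neighborhood of the truth}) \ge e^{-C_1 n\epsilon_n^2}$; (ii) a sieve $\mathcal{F}_n = \{\betab: \lvert\bm{\gamma}(\betab)\rvert \le \bar s,\ \lVert\betab\rVert_2\le R_n\}\times\{\sigma^2\in[R_n^{-1},R_n]\}$, with $\bar s\asymp s_0$ and $R_n$ a suitable polynomial, satisfying $\Pi(\mathcal{F}_n^c)\le e^{-C_2 n\epsilon_n^2}$ for $C_2$ large; and (iii) exponentially powerful tests of the truth against $\{(\betab,\sigma^2)\in\mathcal{F}_n:\lVert f_\betab - f_0\rVert_n > \widetilde{M}_1\epsilon_n\ \text{or}\ \lvert\sigma^2-\sigma_0^2\rvert > 4\sigma_0^2\epsilon_n\}$. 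For (i) I take the $s_0$-group-sparse vector $\betab_0$ itself (sparse by Assumption~\ref{B2}); the SSGL prior mass of a ball of radius $\asymp\epsilon_n/\sqrt{s_0 d}$ per active group around $\betab_0$ factorizes over the $s_0$ active groups, each contributing---via the slab $\bm{\Psi}(\cdot|\lambda_1)$ with $\lambda_1\asymp 1/n$, after Stirling for the normalizer $C_g$ (with its $\Gamma((d+1)/2)$ factor) and for the $d$-dimensional ball volume---a factor whose $-\log$ is of order $\log p + d\log n$; Assumption~\ref{B5} keeps the $e^{-\lambda_1\lVert\betab_{0g}\rVert_2}$ factors harmless and the $\mathcal{B}(a,b)$, $\mathcal{IG}(c_0,d_0)$ factors are negligible, so the exponent is of the required order $n\epsilon_n^2$. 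For (ii), the exponentially small prior mass of $\mathcal{F}_n^c$ follows exactly as in the proof of Theorem~\ref{dimensionalitygroupedregression}, using that the small $\theta$ induced by $b=p^c$ with $c>2$ forces the prior onto group-sparse configurations, together with the Laplace and inverse-gamma tail bounds. For (iii), a standard covering argument gives $\log N(\epsilon_n,\mathcal{F}_n,\lVert\cdot\rVert_n) \lesssim \bar s\log p + \bar s d\log(R_n/\epsilon_n) \lesssim n\epsilon_n^2$, and within each ball a Gaussian likelihood-ratio test over the bounded range of $\sigma^2$ has exponentially small type~I and~II errors; the bias $\bm{\delta}$ enters these only through $\lVert\bm{\delta}\rVert_2^2$ and the cross term $\langle\bm{\delta},\widetilde{\Xb}(\betab-\betab_0)\rangle$, both $O(n\epsilon_n^2)$ by Assumption~\ref{B6} and Cauchy--Schwarz, so it is absorbed.

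Assembling (i)--(iii) through the usual recipe---bound the normalizing constant of the posterior below by the prior concentration on an $e^{o(n\epsilon_n^2)}$-probable event, and split the numerator over $\mathcal{F}_n^c$ (controlled by (ii)) and over the part of $\mathcal{F}_n$ outside the $\epsilon_n$-ball (controlled by (iii))---yields $\Pi\big((\betab,\sigma^2): n^{-1/2}\lVert\widetilde{\Xb}(\betab-\betab_0)\rVert_2 > M\epsilon_n\ \text{or}\ \lvert\sigma^2-\sigma_0^2\rvert > 4\sigma_0^2\epsilon_n \,\big|\, \yb\big)\to 0$ a.s.\ $\widetilde{\mathbb{P}}_0$, and the opening reduction upgrades the $\betab$-part to the statement \eqref{empiricalcontractionGAM} about $\lVert f-f_0\rVert_n$. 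I expect the main obstacle to be the careful simultaneous bookkeeping of all $d$-dependent factors---the normalizer $C_g$ with $\Gamma((d+1)/2)$, the $\lambda_1^{d}$ factors, the volumes of $d$-dimensional balls in the prior-mass and entropy computations, and the size of $\bm{\delta}$---and showing they jointly sum to $O(n\epsilon_n^2)=O(s_0\log p+s_0 d)$; this is genuinely tighter than the setting of Theorem~\ref{posteriorcontractiongroupedregression}, where $m_{\max}$ was only logarithmic in $G$ and the truncation bias was absent. A secondary point is verifying that the single tuning choice $\lambda_0=(1-\theta)/\theta$, $\lambda_1\asymp 1/n$, $b=p^c$ ($c>2$) makes the spike sharp enough for the sieve bound (ii) while keeping the slab diffuse enough for the prior-concentration bound (i), mirroring the balancing already carried out for the SSGL in the linear case.
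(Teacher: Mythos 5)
Your overall architecture is the paper's: the Ghosal--Ghosh--van der Vaart recipe with a Kullback--Leibler prior-mass bound, the same generalized-dimensionality sieve carried over from Theorem \ref{posteriorcontractiongroupedregression}, the NPSSL treated as grouped regression with $p$ groups of common size $d$, and the truncation bias handled through Assumption \ref{B6} (the paper keeps $\bm{\delta}$ inside the KL neighborhoods and the test events rather than stripping it off by a triangle inequality at the outset, but the two are equivalent since $\lVert f - f_0\rVert_n = n^{-1/2}\lVert \widetilde{\Xb}(\betab-\betab_0)-\bm{\delta}\rVert_2$). The one genuinely different ingredient is your testing step: you propose a global covering of the sieve plus likelihood-ratio tests, whereas the paper builds explicit tests from the least-squares estimators $\widehat{\betab}_{\xi}$ and $\widehat{\sigma}^2_{\xi}$ over submodels $\xi \supset S_0$ with $\lvert\xi\rvert \leq \widetilde{p}+s_0$ (following Song--Liang and Wei et al.), controlling their errors with (noncentral) chi-square tail bounds. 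The paper's choice is not cosmetic: the union over submodels costs only $\log\binom{p}{O(s_0)} \asymp s_0\log p \lesssim n\epsilon_n^2$, with no $d$-dependent entropy term at all.

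This matters because your bookkeeping, as written, does not close. You correctly record $n\epsilon_n^2 \asymp s_0\log p + s_0 d$, but your per-active-group prior-mass cost is $\log p + d\log n$, so the total is $s_0\log p + s_0 d\log n$; likewise your entropy bound $\bar{s}\log p + \bar{s}\,d\log(R_n/\epsilon_n)$ is of order $s_0\log p + s_0 d\log n$. Since $d \asymp n^{1/(2\kappa+1)}$ and nothing in Assumptions \ref{B1}--\ref{B6} forces $\log p \gtrsim d\log n$, the term $s_0 d\log n$ can exceed $n\epsilon_n^2$ by a factor of $\log n$, and your argument then only yields the rate inflated by $\sqrt{\log n}$. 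The entropy side is repairable: use the local entropy of the $O(\bar{s}d)$-dimensional sparse pieces, which is $O(\bar{s}\log p + \bar{s}d)$ rather than $O(\bar{s}d\log n)$, or simply adopt the paper's submodel tests, which avoid the issue entirely. The prior-mass side is the delicate one --- it is exactly the point where the paper itself defers to ``very similar arguments'' as the linear case --- and if you pursue your route you must exhibit explicitly how the factors $\lambda_1^{d}$, the normalizer $C_g$ with its $\Gamma((d+1)/2)$, and the volume of the $d$-dimensional ball of radius $\asymp \epsilon_n/(s_0\sqrt{n})$ combine to give prior mass $e^{-O(s_0\log p + s_0 d)}$ rather than $e^{-O(s_0 d\log n)}$.
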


Let the generalized dimensionality $ | \bm{\gamma} (\bm{\beta}) |$ be defined as before in  (\ref{generalizeddimensionality}) (replacing $G$ with $p$), with $\omega_g$ from (\ref{omegathreshold}) as the threshold (replacing $m_g$ with $d$). The next theorem shows that under the NPSSL, the expected posterior probability that the generalized dimension size is a constant multiple larger than the true model size $s_0$ asymptotically vanishes.
\begin{theorem}[dimensionality] \label{dimensionalityGAM}
	Suppose that the same conditions as those in Theorem \ref{contractionGAMs} hold. Then under \eqref{truemodelGAMmatrix}, for sufficiently large $\widetilde{M}_2 > 0$,
	\begin{equation} \label{posteriorcompressibilityGAM}
	\displaystyle \sup_{\bm{\beta}_0} \widetilde{\mathbb{E}}_{\bm{\beta}_0} \Pi \left( \bm{\beta}: \lvert \bm{\gamma} (\bm{\beta} ) \rvert > \widetilde{M}_2 s_0 \vert \yb \right) \rightarrow 0 \textrm{ as } n, p \rightarrow \infty.
	\end{equation}
\end{theorem}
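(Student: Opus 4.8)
The plan is to follow the template of Theorem~\ref{dimensionalitygroupedregression}, since model~\eqref{truemodelGAMmatrix} is a grouped linear model with $p$ groups of common size $d$, plus the deterministic truncation bias $\bm{\delta}$; I would run the same argument on the basis-expanded design $\widetilde{\Xb} = [\widetilde{\Xb}_1,\ldots,\widetilde{\Xb}_p]$. Write the quantity of interest as a ratio,
\begin{align*}
\Pi\big(\lvert\bm{\gamma}(\betab)\rvert > \widetilde{M}_2 s_0 \mid \yb\big) = \frac{\int_{\lvert\bm{\gamma}(\betab)\rvert > \widetilde{M}_2 s_0} \big(L_n(\betab,\sigma^2)/p_0(\yb)\big)\, d\Pi(\betab,\sigma^2)}{\int \big(L_n(\betab,\sigma^2)/p_0(\yb)\big)\, d\Pi(\betab,\sigma^2)},
\end{align*}
with $L_n$ the Gaussian likelihood under~\eqref{truemodelGAMmatrix} and $p_0$ the density of the truth. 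The first step is to lower-bound the denominator: I would reuse the evidence lower bound already established in the proof of Theorem~\ref{contractionGAMs}, namely that with $\widetilde{\mathbb{P}}_0$-probability tending to one it exceeds $e^{-b_1 n\epsilon_n^2}$ for some $b_1>0$, where the $s_0 n^{-2\kappa/(2\kappa+1)}$ term in $\epsilon_n^2$ is exactly the cost of the bias $\bm{\delta}$ and is controlled by Assumption~\ref{B6}. This reduces matters to the numerator.

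For the numerator, I would first restrict $\sigma^2$ to a fixed band around $\sigma_0^2$, say $[\sigma_0^2/2, 2\sigma_0^2]$, the complement being negligible by the variance consistency~\eqref{GAMvarianceconsistency}; on this band $\widetilde{\mathbb{E}}_{\betab_0}[L_n(\betab,\sigma^2)/p_0(\yb)] \le 1$ for each fixed $(\betab,\sigma^2)$, so by Fubini the expected numerator is at most $\Pi(\lvert\bm{\gamma}(\betab)\rvert > \widetilde{M}_2 s_0)$. The crux is then a prior-tail/multiplicity estimate: decomposing over the set $\xi\subseteq\{1,\ldots,p\}$ of generalized-active groups, the choice $a=1$, $b=p^{c}$, $c>2$ forces $\Pi(\text{model }\xi)\lesssim p^{-(c-1)\lvert\xi\rvert}$ via the beta-binomial structure of~\eqref{hiermodel}, while the threshold $\omega_g$ in~\eqref{omegathreshold} being the crossover point of $\bm{\Psi}_0$ and $\bm{\Psi}_1$ — together with $\lambda_0=(1-\theta)/\theta$ enormous and $\lambda_1\asymp 1/n$ — makes the chance that a spike draw exceeds $\omega_g$ negligible, so generalized-active groups are genuine slab draws up to vanishing error. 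Summing $\binom{p}{k}p^{-(c-1)k}$ over $k>\widetilde{M}_2 s_0$ then gives $\Pi(\lvert\bm{\gamma}(\betab)\rvert > \widetilde{M}_2 s_0)\lesssim e^{-(c-2)\widetilde{M}_2 s_0\log p}$. A Markov bound combined with the denominator estimate controls the posterior probability on the good event by $e^{b_1 n\epsilon_n^2 - (c-2)\widetilde{M}_2 s_0\log p}$; intersecting beforehand with the contraction event~\eqref{empiricalcontractionGAM} (where $f$ far from $f_0$ is discarded because the fit is poor) then lets me take $\widetilde{M}_2$ large enough to send everything to zero.

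I expect the main obstacle to be making the numerator bound dominate the denominator lower bound \emph{uniformly} over the allowed range of $(n,p,d)$: the evidence carries the extra nonparametric factor $e^{-b_1 s_0 n^{1/(2\kappa+1)}}$ arising from $\bm{\delta}$ and Assumption~\ref{B6}, which is invisible in the crude prior-tail factor $e^{-(c-2)\widetilde{M}_2 s_0\log p}$. Handling this cleanly requires organizing the numerator more carefully than a single prior-mass bound — for instance splitting the generalized-active groups by magnitude and invoking the sparse-eigenvalue Assumption~\ref{B4} on the overfitted directions, so that the bias term is absorbed inside the likelihood-ratio / testing estimates rather than merely compared against them. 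This bias bookkeeping, plus the appearance of the empirical norm $\lVert\cdot\rVert_n$ in place of the Euclidean prediction norm, is essentially the only departure from the proof of Theorem~\ref{dimensionalitygroupedregression}; the remaining steps are routine adaptations.
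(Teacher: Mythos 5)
Your proposal follows the same route as the paper, which in fact omits this proof entirely with the remark that it is ``very similar to the proof of Theorem \ref{dimensionalitygroupedregression}'': lower-bound the evidence $\int (f/f_0)\, d\Pi(\betab,\sigma^2)$ by $e^{-C_1 n\epsilon_n^2}$ on an event $E_n$ of high $\widetilde{\mathbb{P}}_0$-probability using the Kullback--Leibler ball already established in the proof of Theorem \ref{contractionGAMs} (which is where $\bm{\delta}$ and Assumption \ref{B6} enter), bound the expected numerator by the prior mass $\Pi(\lvert\bm{\gamma}(\betab)\rvert>\widetilde{M}_2 s_0)$ via Fubini, and finish with the prior tail bound of \eqref{sievecomplement2}--\eqref{sievecomplement3} (with $G$ replaced by $p$). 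Two of your steps are superfluous: since $\mathbb{E}_{f_0}[f/f_0]\le 1$ for \emph{every} fixed $(\betab,\sigma^2)$, you need neither the restriction of $\sigma^2$ to a band around $\sigma_0^2$ nor the intersection with the contraction event \eqref{empiricalcontractionGAM}.

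The obstacle you flag in your closing paragraph is, however, genuine, and it is precisely what the omitted proof glosses over. The beta-binomial tail gives at best a bound of order $e^{-c'\widetilde{M}_2 s_0\log p}$ on $\Pi(\lvert\bm{\gamma}(\betab)\rvert>\widetilde{M}_2 s_0)$, while the denominator costs a factor $e^{C_1 n\epsilon_n^2}$ with $n\epsilon_n^2=s_0\log p+s_0 n^{1/(2\kappa+1)}$. For a \emph{fixed} constant $\widetilde{M}_2$ the product tends to zero only when $n^{1/(2\kappa+1)}\asymp d\lesssim\log p$, a relation not implied by Assumptions \ref{B1}--\ref{B6} (e.g.\ $\log p\asymp(\log n)^2$ satisfies \ref{B1} yet is eventually dominated by $n^{1/(2\kappa+1)}$). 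In the linear case this issue is invisible because there $n\epsilon_n^2=s_0\log G$ exactly matches the prior-tail exponent. Absent the extra relation one must either let $\widetilde{M}_2$ grow like $n\epsilon_n^2/(s_0\log p)$, or replace the pure prior-mass bound on the numerator by a testing-based argument of the kind you gesture at; your write-up is the more candid account of what a complete proof requires, but as it stands neither your sketch nor the paper closes this step.
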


\section{Simulation Studies} \label{Simulations}

In this section, we will evaluate our method in a number of settings. For the SSGL approach, we fix $\lambda_1 = 1$ and use cross-validation to choose from $\lambda_0 \in \{1, 2, \ldots, 100 \}$. For the prior $\theta \sim \mathcal{B}(a,b)$, we set $a=1, b=G$ so that $\theta$ is small with high probability.  We will compare our SSGL approach with the following methods:

\begin{enumerate}
	\item GroupLasso: the group lasso \citep{YuanLin2006}
	\item BSGS: Bayesian sparse group selection \citep{chen2016bayesian}
	\item SoftBart: soft Bayesian additive regression tree (BART) \citep{linero2018bayesian}
	\item RandomForest: random forests \citep{breiman2001random}
	\item SuperLearner: super learner \citep{van2007super}
	\item GroupSpike: point-mass spike-and-slab priors \eqref{pointmassspikeandslab} placed on groups of coefficients\footnote{Code to implement GroupSpike is included in the Supplementary data. Due to the discontinuous prior, GroupSpike is not amenable to a MAP finding algorithm and has to be implemented using MCMC.}
\end{enumerate}
In our simulations, we will look at the mean squared error (MSE) for estimating $f(\boldsymbol{X}_{\textrm{new}})$ averaged over a new sample of data $\boldsymbol{X}_{\textrm{new}}$. We will also evaluate the variable selection properties of the different methods using precision and recall, where $\text{precision} = \text{TP}/(\text{TP}+\text{FP})$, $\text{recall} = \text{TP}/(\text{TP}+\text{FN})$, and TP, FP, and FN denote the number of true positives, false positives, and false negatives respectively. Note that we will not show precision or recall for the SuperLearner, which averages over different models and different variable selection procedures and therefore does not have one set of variables that are deemed significant.

\subsection{Sparse Semiparametric Regression} \label{simsparse}

Here, we will evaluate the use of our proposed SSGL procedure in sparse semiparametric regression with $p$ continuous covariates. Namely, we implement the NPSSL main effects model described in Section \ref{NPSSLMainEffects}. In Appendix \ref{App:B}, we include more simulation studies of the SSGL approach under both sparse and dense settings, as well as a simulation study showing that we are accurately estimating the residual variance $\sigma^2$.

We let $n=100, p=300$. We generate independent covariates from a standard uniform distribution, and we let the true regression surface take the following form:
\begin{align*}
\mathbb{E} (Y \vert \boldsymbol{X}) = 5 \text{sin}(\pi X_1) + 2.5 (X_3^2 - 0.5) + e^{X_4} + 3 X_5,
\end{align*}
with variance $\sigma^2 = 1$.  

\begin{figure}[t!]
	\centering
	\includegraphics[width=0.32\linewidth]{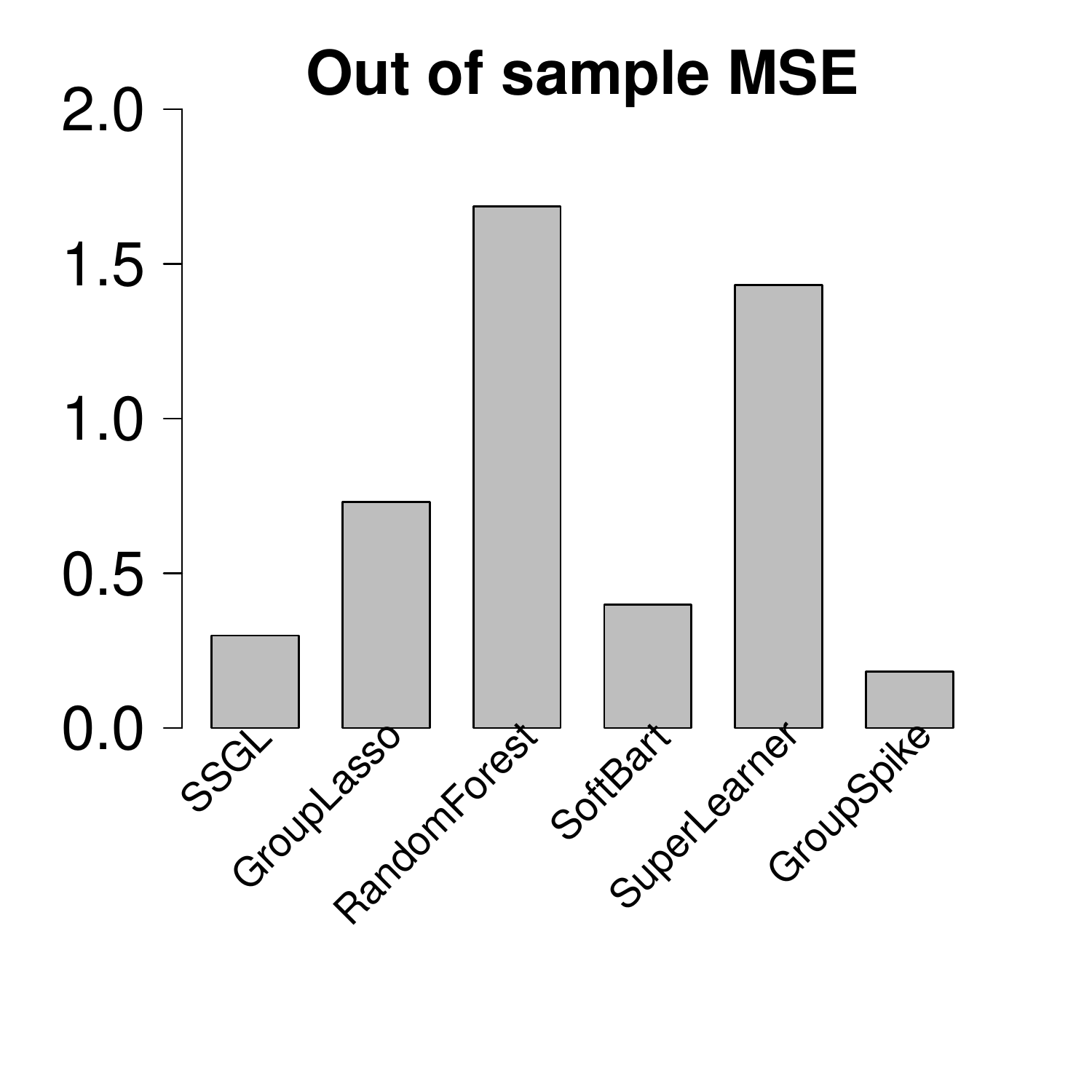}
	\includegraphics[width=0.32\linewidth]{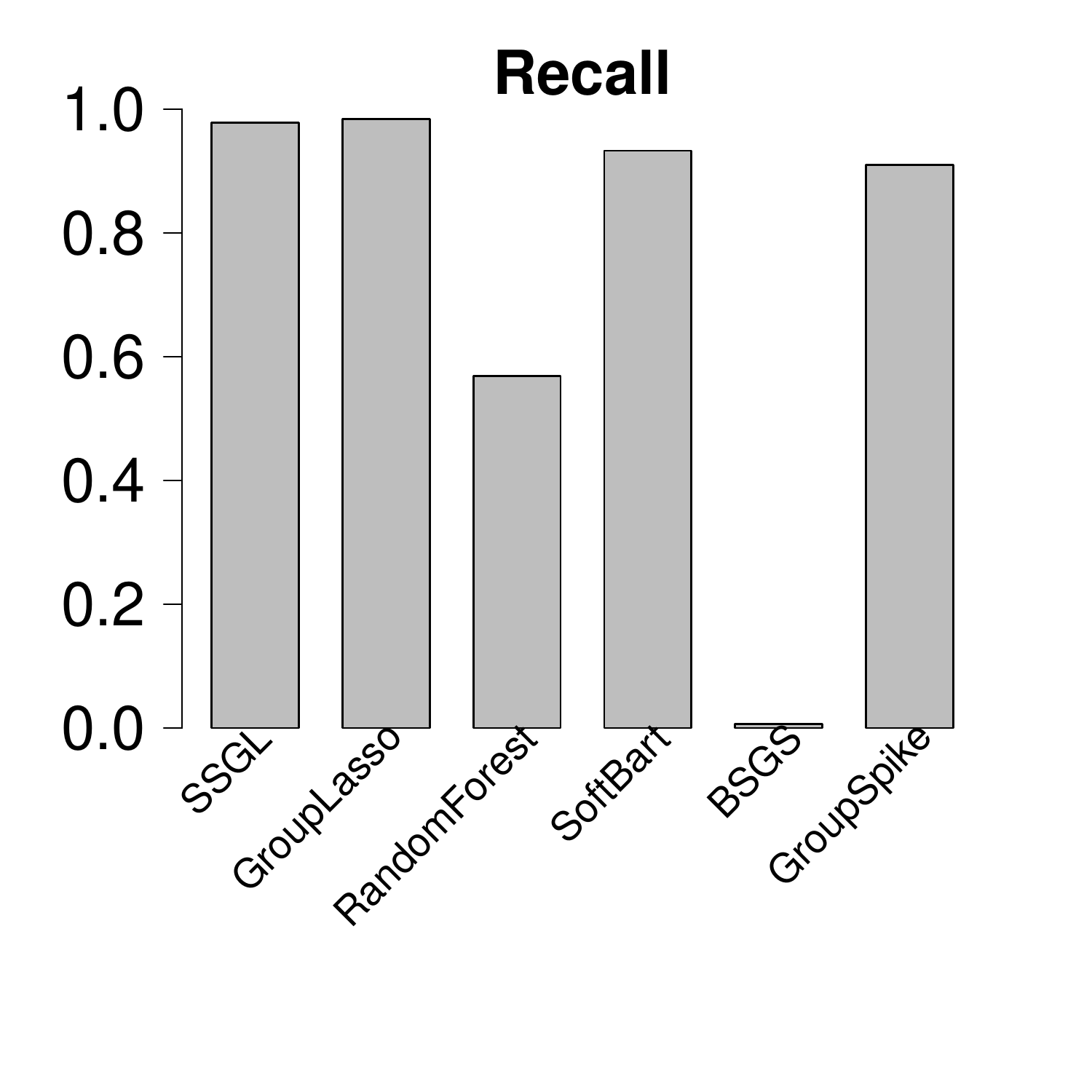} \\
	\includegraphics[width=0.32\linewidth]{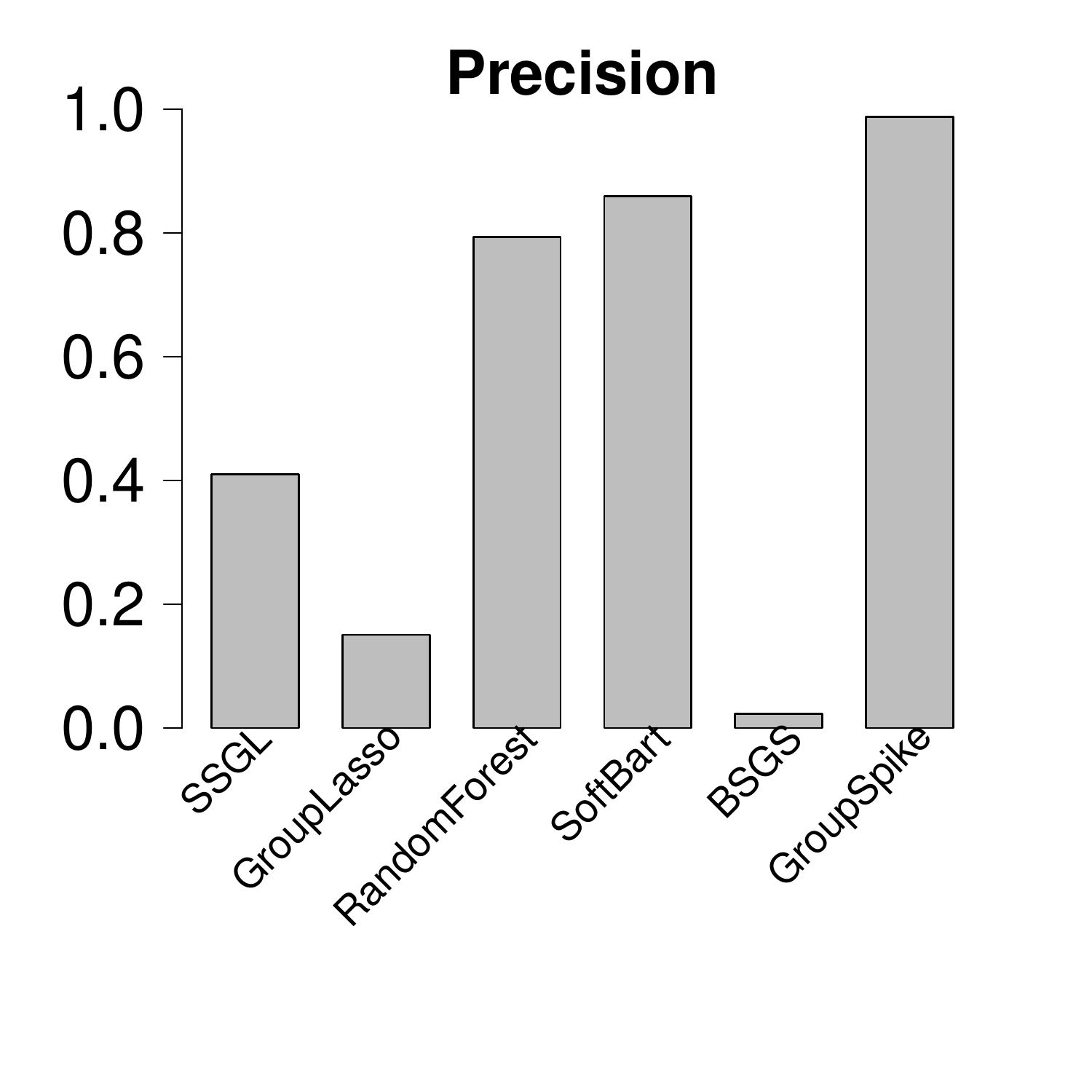}
	\includegraphics[width=0.32\linewidth]{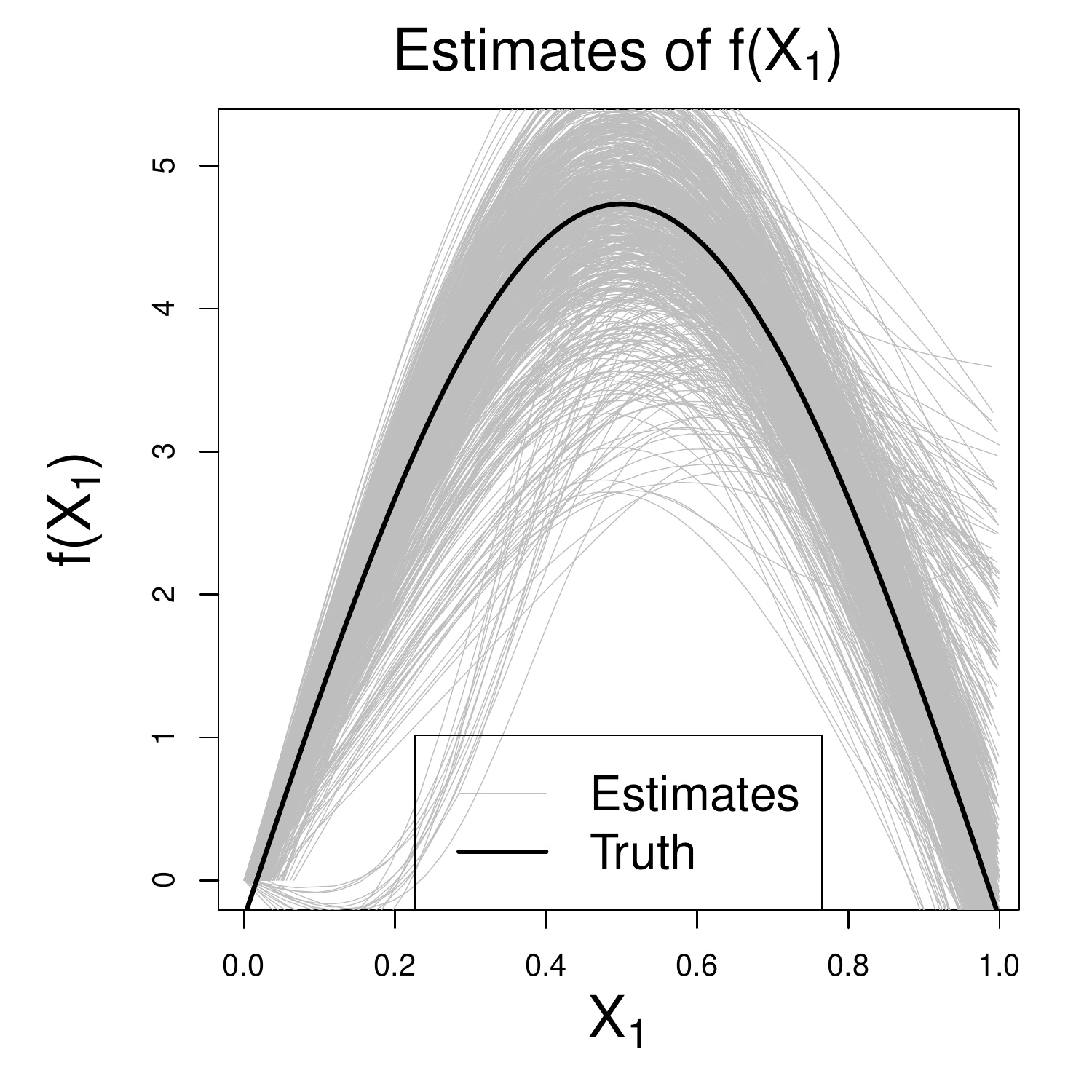}
	\caption{Simulation results for semiparametric regression. The top left panel presents the out-of-sample mean squared error, the top right panel shows the recall score to evaluate variable selection, the bottom left panel shows the precision score, and the bottom right panel shows the estimates from each simulation of $f_1(X_1)$ for SSGL. The MSE for BSGS is not displayed as it lies outside of the plot area.}
	\label{fig:simSparse}
\end{figure}

To implement the SSGL approach, we estimate the mean response as
\begin{align*}
\mathbb{E} ( \bm{Y} \vert \boldsymbol{X}) = \widetilde{\boldsymbol{X}}_1 \boldsymbol{\beta}_1 + \dots + \widetilde{\boldsymbol{X}}_p \boldsymbol{\beta}_p,
\end{align*}
where $\widetilde{\boldsymbol{X}}_j$ is a design matrix of basis functions used to capture the possibly nonlinear effect of $X_j$ on $Y$. For the basis functions in $\widetilde{\boldsymbol{X}}_j, j = 1, \ldots, p$, we use natural splines with degrees of freedom $d$ chosen from  $d \in \{2, 3, 4\}$ using cross-validation. Thus, we are estimating a total of between 600 and 1200 unknown basis coefficients. 

We run 1000 simulations and average all of the metrics considered over each simulated data set.  Figure \ref{fig:simSparse} shows the results from this simulation study. The GroupSpike approach has the best performance in terms of MSE, followed closely by SSGL, with the next best approach being SoftBart. In terms of recall, the SSGL and GroupLasso approaches perform the best, indicating the highest power in detecting the significant groups. This comes with a loss of precision as the GroupSpike and SoftBart approaches have the best precision among all methods.

Although the GroupSpike method performed best in this scenario, the SSGL method was much faster. As we show in Appendix \ref{App:B5}, when $p=4000$, fitting the SSGL model with a sufficiently large $\lambda_0$ takes around three seconds to run. This is almost 50 times faster than running 100 MCMC iterations of the GroupSpike method (never mind the total time it takes for the GroupSpike model to converge). Our experiments demonstrate that the SSGL model gives comparable performance to the ``theoretically ideal'' point mass spike-and-slab in a fraction of the computational time. 

\subsection{Interaction Detection}

We now explore the ability of the SSGL approach to identify important interaction terms in a nonparametric regression model. To this end, we implement the NPSSL model with interactions from Section \ref{NPSSLMainInteractionEffects}. We generate 25 independent covariates from a standard uniform distribution with a sample size of 300. Data is generated from the model:
\begin{align*}
\mathbb{E} (Y \vert \boldsymbol{X}) = 2.5\text{sin}(\pi X_1 X_2) + 2\text{cos}(\pi (X_3 + X_5)) + 2(X_6 - 0.5) + 2.5X_7,
\end{align*}
with variance $\sigma^2=1$. While this may not seem like a high-dimensional problem, we will consider all two-way interactions, and there are 300 such interactions. The important two-way interactions are between $X_1$ and $X_2$ and between $X_3$ and $X_5$. We evaluate the performance of each method and examine the ability of SSGL to identify important interactions while excluding all of the remaining interactions. Figure \ref{fig:simInt} shows the results for this simulation setting. The SSGL, GL, GroupSpike, and SoftBart approaches all perform well in terms of out-of-sample mean squared error, with GroupSpike slightly outperforming the competitors. The SSGL also does a very good job at identifying the two important interactions. The $(X_1, X_2)$ interaction is included in 97\% of simulations, while the $(X_3, X_5)$ interaction is included 100\% of the time. All other interactions are included in only a small fraction of simulated data sets. 

\begin{figure}[t!]
	\centering
	\includegraphics[width=0.32\linewidth]{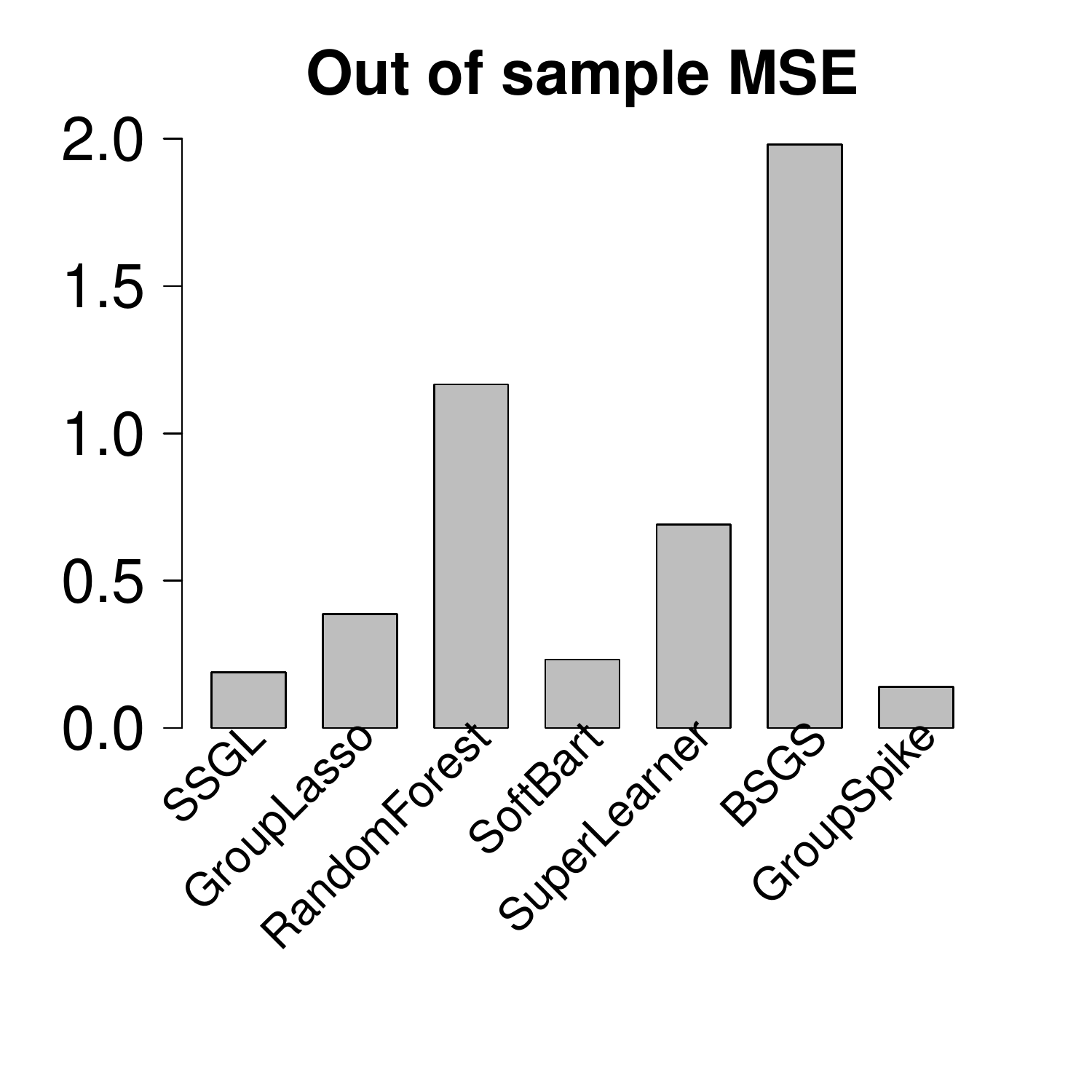}
	\includegraphics[width=0.35\linewidth]{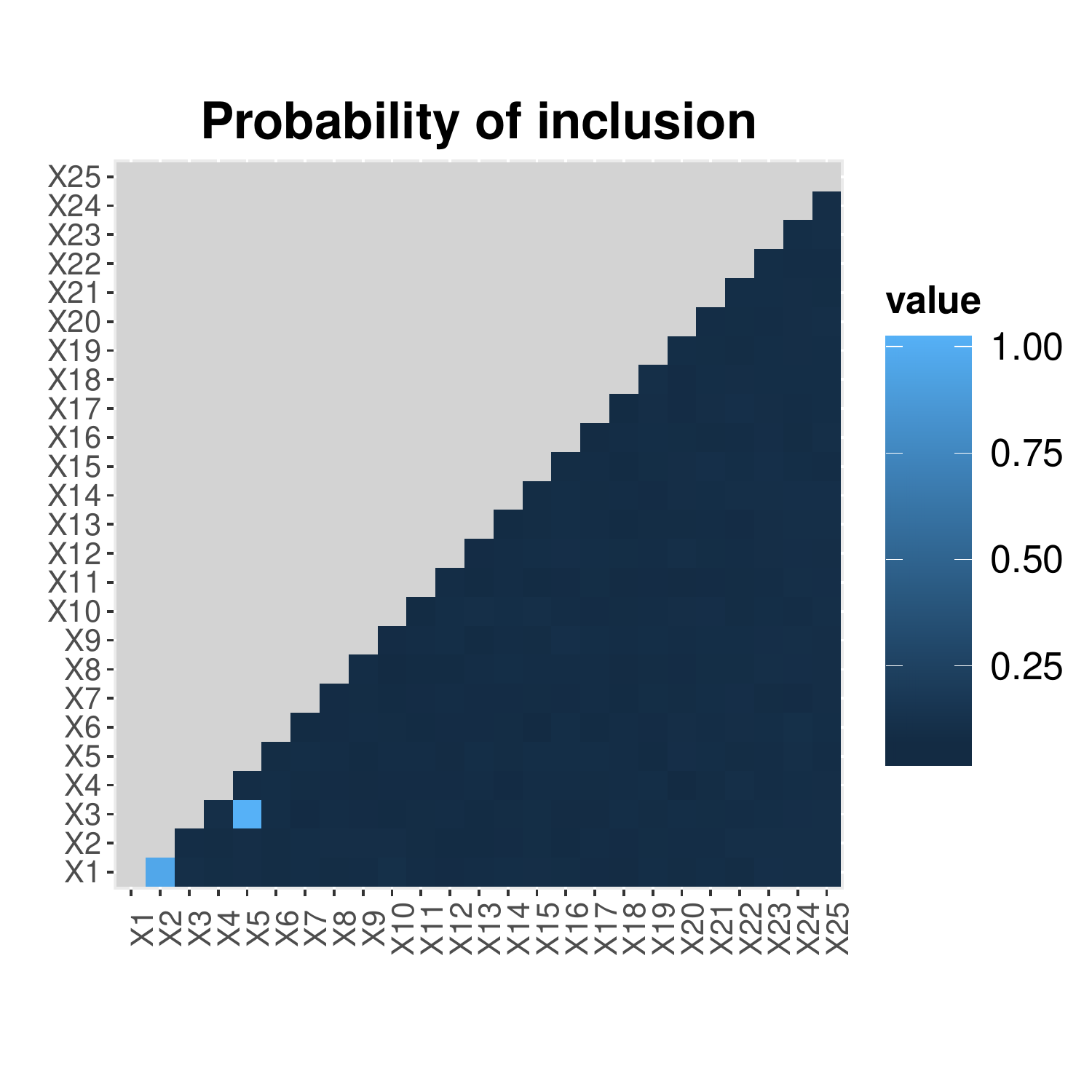}
	\caption{Simulation results from the interaction setting. The left panel shows out-of-sample MSE for each approach, while the right panel shows the probability of a two-way interaction being included into the SSGL model for all pairs of covariates.}
	\label{fig:simInt}
\end{figure}

\section{Real Data Analysis} \label{dataanalysis}

Here, we will illustrate the SSGL procedure in two distinct settings: 1) evaluating the SSGL's performance on a data set where $n=120$ and $p=15,000$, and 2) identifying important (nonlinear) main effects and interactions of environmental exposures. In Appendix \ref{App:C}, we evaluate the predictive performance of our approach on benchmark data sets where $p < n$, compared to several other state-of-the-other methods. Our results show that in both the $p \gg n$ and $p<n$ settings, the SSGL maintains good predictive accuracy.

\subsection{Bardet-Biedl Syndrome Gene Expression Study} \label{bb_subsection}

We now analyze a microarray data set consisting of gene expression measurements from the eye tissue of 120 laboratory rats\footnote{Data accessed from the Gene Expression Omnibus \url{ www.ncbi.nlm.nih.gov/geo} (accession no. GSE5680).}. The data was originally studied by \citet{Scheetz06} to investigate mammalian eye disease, and later analyzed by \citet{BrehenyHuang2015} to demonstrate the performance of their group variable selection algorithm. In this data, the goal is to identify genes which are associated with the gene TRIM32. TRIM32 has previously been shown to cause Bardet-Biedl syndrome \citep{chiang06}, a disease affecting multiple organs including the retina. 

The original data consists of 31,099 probe sets. Following \citet{BrehenyHuang2015}, we included only the 5,000 probe sets with the largest variances in expression (on the log scale). For these probe sets, we considered a three-term natural cubic spline basis expansion, resulting in a grouped regression problem with $n = 120$ and $p = 15,000$.  We implemented SSGL with regularization parameter values $\lambda_1=1$ and $\lambda_0$ ranging on an equally spaced grid from 1 to 500. We compared SSGL with the group lasso \citep{YuanLin2006}, implemented using the R package \texttt{gglasso} \citep{Yang2015}.

As shown in Table \ref{BB_gene_table}, SSGL selected much fewer groups than the group lasso. Namely, SSGL selected 12 probe sets, while the group lasso selected 83 probe sets.  Moreover, SSGL achieved a smaller 10-fold cross-validation error than the group lasso, albeit within range of random variability (Table \ref{BB_gene_table}).  These results demonstrate that the SSGL achieves strong predictive accuracy, while \textit{also} achieving the most parsimony. The groups selected by both SSGL and the group lasso are displayed in Table \ref{bb_gene_table} of Appendix \ref{App:C}. Interestingly, only four of the 12 probes selected by SSGL were also selected by the group lasso. 

\begin{table}[t!]
	\centering
	\begin{tabular}{lcc}
		\hline
		& SSGL & Group Lasso \\ 
		\hline
		\# groups selected & 12 & 83\\
		10-fold CV error &  0.012 (0.003) &  0.017 (0.008)  \\
		\hline
	\end{tabular}
	\caption{Results for SSGL and Group Lasso on the Bardet-Biedl syndrome gene expression data set. In parentheses, we report the standard errors for the CV prediction error. }\label{BB_gene_table}
\end{table}

We next conducted gene ontology enrichment analysis on the group of genes found by each of the methods using the R package  \texttt{clusterProfiler} \citep{YWY12}. This software determines whether subsets of genes known to act in a biological process are overrepresented in a group of genes, relative to chance. If such a subset is significant, the group of genes is said to be ``enriched'' for that biological process. With a false discovery rate of 0.01, SSGL had five enriched terms, while the group lasso had none. The terms for which SSGL was enriched included RNA binding, a biological process with which the response gene TRIM32 is associated.\footnote{https://www.genecards.org/cgi-bin/carddisp.pl?gene=TRIM32 (accessed 03/01/20)} These findings show the ability of SSGL to find biologically meaningful signal in the data. Additional details for our gene ontology enrichment analysis can be found in Appendix \ref{App:C}.

\subsection{Environmental Exposures in the NHANES Data}\label{NHANES} 
Here, we analyze data from the 2001-2002 cycle of the National Health and Nutrition Examination Survey (NHANES), which was previously analyzed by \citet{antonelli2017estimating}. We aim to identify which organic pollutants are associated with changes in leukocyte telomere length (LTL) levels. Telomeres are segments of DNA that help to protect chromosomes, and LTL levels are commonly used as a proxy for overall telomere length. LTL levels have previously been shown to be associated with adverse health effects \citep{haycock2014leucocyte}, and recent studies within the NHANES data have found that organic pollutants can be associated with telomere length \citep{mitro2015cross}. 

\begin{figure}[t!]
		\centering
		\includegraphics[width=0.77\linewidth]{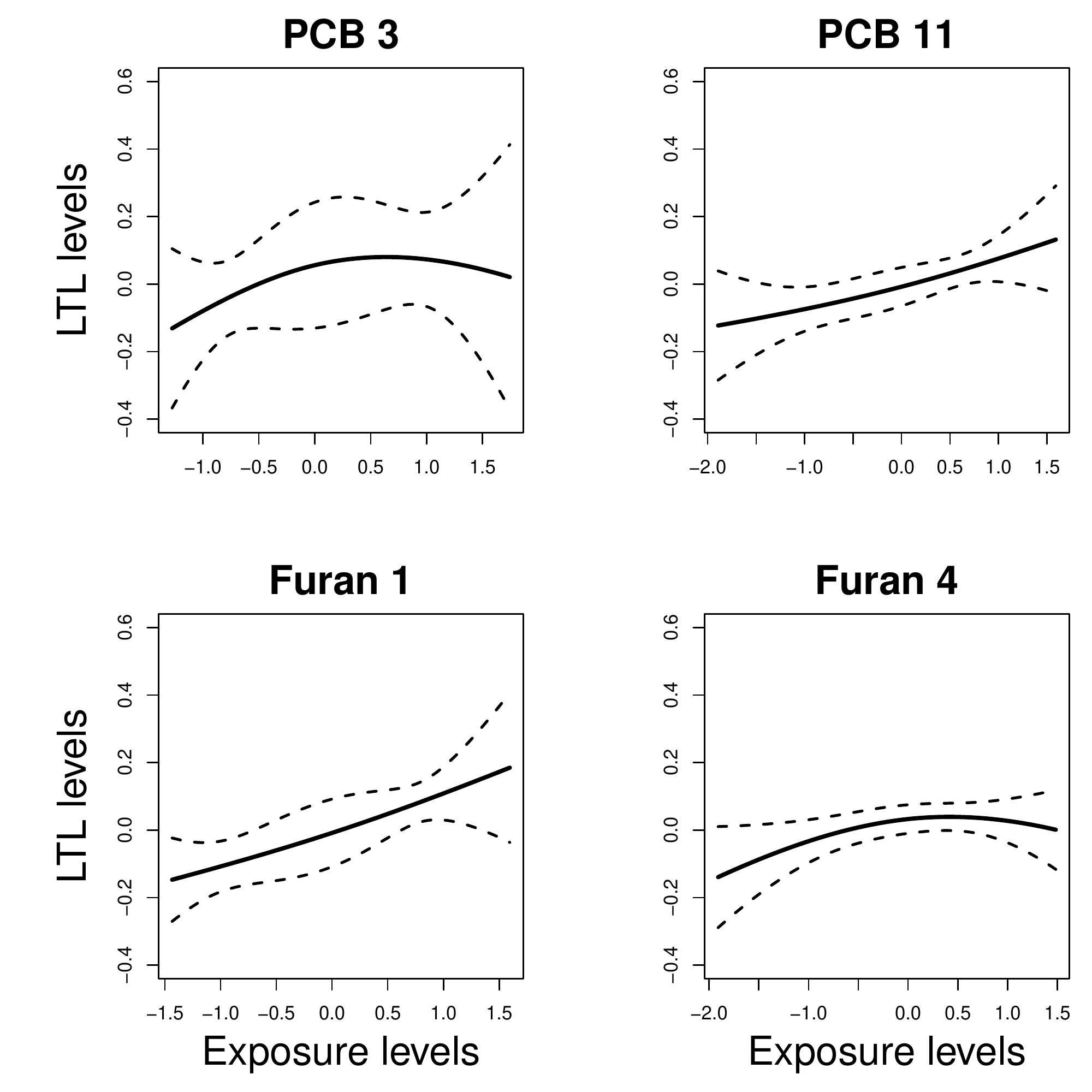}
		\caption{Exposure response curves for each of the four exposures with significant main effects identified by the model. }
		\label{fig:MainEffectNHANES}
\end{figure}

We use the SSGL approach to evaluate whether any of 18 organic pollutants are associated with LTL length and whether there are any significant interactions among the pollutants also associated with LTL length. In addition to the 18 exposures, there are 18 additional demographic variables which we adjust for in our model. We model the effects of the 18 exposures on LTL length using spline basis functions with two degrees of freedom. For the interaction terms, this leads to four terms for each pair of interactions, and we orthogonalize these terms with respect to the main effects. In total, this leads to a data set with $n=1003$ and $p=666$. 

Our model selects four significant main effects and six significant interaction terms. In particular, PCB 3, PCB 11, Furan 1, and Furan 4 are identified as the important main effects in the model. Figure \ref{fig:MainEffectNHANES} plots the exposure response curves for these exposures. We see that each of these four exposures has a positive association with LTL length, which agrees with results seen in \cite{mitro2015cross} that saw positive relationships between persistent organic pollutants and telomere length. Further, our model identifies more main effects and more interactions than previous analyses of these data, e.g. \cite{antonelli2017estimating}, which could lead to more targeted future research in understanding how these pollutants affect telomere length. Additional discussion and analysis of the NHANES data set can be found in Appendix \ref{App:C}.

\section{Discussion} \label{discussion}
We have introduced the spike-and-slab group lasso (SSGL) model for variable selection and linear regression with grouped variables. We also extended the SSGL model to generalized additive models with the nonparametric spike-and-slab lasso (NPSSL). The NPSSL can efficiently identify both nonlinear main effects \textit{and} higher-order nonlinear interaction terms. Moreover, our prior performs an automatic multiplicity adjustment and self-adapts to the true sparsity pattern of the data through a \textit{non}-separable penalty. For computation, we introduced highly efficient coordinate ascent algorithms for MAP estimation and employed de-biasing methods for uncertainty quantification. An \textsf{R} package implementing the SSGL model can be found at \url{https://github.com/jantonelli111/SSGL}.

Although our model performs group selection, it does so in an ``all-in-all-out'' manner, similar to the original group lasso \citep{YuanLin2006}. Future work will be to extend our model to perform both group selection and within-group selection of individual coordinates. We are currently working to extend the SSGL to perform bilevel selection.

We are also working to extend the nonparametric spike-and-slab lasso so it can adapt to even more flexible regression surfaces than the generalized additive model. Under the NPSSL model, we used cross-validation to tune a single value for the degrees of freedom. In reality, different functions can have vastly differing degrees of smoothness, and it will be desirable to model anisotropic regression surfaces while avoiding the computational burden of tuning the individual degrees of freedom over a $p$-dimensional grid.

\section*{Acknowledgments}
Dr. Ray Bai, Dr. Gemma Moran, and Dr. Joseph Antonelli contributed equally and wrote this manuscript together, with input and suggestions from all other listed co-authors. The bulk of this work was done when the first listed author was a postdoc at the Perelman School of Medicine, University of Pennsylvania, under the mentorship of the last two authors. The authors are grateful to three anonymous reviewers, the Associate Editor, and the Editor whose thoughtful comments and suggestions helped to improve this manuscript. The authors would also like to thank Ruoyang Zhang, Peter B{\"u}hlmann, and Edward George for helpful discussions. 

\section*{Funding}
 Dr. Ray Bai and Dr. Mary Boland were funded in part by generous funding from the Perelman School of Medicine, University of Pennsylvania. Dr. Ray Bai and Dr. Yong Chen were funded by NIH grants 1R01AI130460 and 1R01LM012607.

\bibliographystyle{chicago}
\bibliography{SSGLReferences}

\begin{appendix}
	\section{Additional Computational Details} \label{App:A}
	
	\subsection{SSGL Block-Coordinate Ascent Algorithm} \label{completealgorithm}

	\begin{algorithm}[H]
		  \scriptsize \begin{flushleft}
			Input: grid of increasing $\lambda_0$ values $I = \{\lambda_0^{1},\dots, \lambda_0^{L}\}$, update frequency $M$\\[4pt]
			Initialize: $\betab^* = \zerob_p$, $\theta^* = 0.5$, $\sigma^{*2}$ as described in Section \ref{AddlComputationalDetails},  $\Delta^*$ according to \eqref{delta_u} in the main manuscript   \\[4pt]
			For $l = 1, \dots, L$:
			\begin{enumerate}
				\item Set iteration counter $k_l = 0$
				\item Initialize: $\widehat{\betab}^{(k_l)} = \betab^*$, $\theta^{(k_l)} = \theta^*$, $\sigma^{(k_l)2} = \sigma^{*2}$, $\Delta^U = \Delta^*$
				\item While \textsf{diff} $ > \varepsilon$
				\begin{enumerate}
					\item Increment $k_l $
					\item For $ g= 1, \dots,G$:
					\begin{enumerate}
						\item Update 
						\begin{equation*}
						{\betab}_{g}^{(k_l)} \leftarrow \frac{1}{n} \left(1-\frac{\sigma^{(k_l)2} \lambda^*({\betab}_{g}^{(k_l - 1)} ; \theta^{(k_l)} )}{\lVert \zb_g\rVert_2}\right)_+\zb_g \ \mathbb{I}(\lVert \zb_g\rVert_2 > \Delta^U) 
						\end{equation*}
						\item Update
						\begin{equation*}
						\widehat{Z}_g = 
						\begin{cases}
						1 &\text{if } {\betab}_{g}^{(k_l)} \neq \zerob_{m_g} \\
						0 &\text{otherwise}
						\end{cases}
						\end{equation*}
						\item If $g \equiv 0 \mod M$:
						\begin{enumerate}
							\item Update 
							$${\theta}^{(k_l)} \leftarrow \frac{a + \sum_{g=1}^G \widehat{Z}_g}{ a + b +G}$$
							\item If $k_{l-1}< 100$:
							\begin{equation*}
							\text{Update } {\sigma}^{(k_l)2} \leftarrow \frac{\lVert \Yb - \Xb{\betab}^{(k_l)}\rVert_2^2}{n+2}
							\end{equation*}
							\item Update 
							\begin{equation*}
							\Delta^U \leftarrow
							\begin{cases}
							\sqrt{2n\sigma^{(k_l)2}\log[1/p^*(\zerob_{m_g};\theta^{(k_l)})]} +\sigma^{(k_l)2}\lambda_1 &\text{if } h(\zerob_{m_g};{\theta}^{(k_l)} ) >0\\
							{\sigma}^{(k_l)2}\lambda^*(\zerob_{m_g};{\theta}^{(k_l)}) &\text{otherwise}
							\end{cases}
							\end{equation*}
						\end{enumerate}
						\item \textsf{diff} $= \lVert {\betab}^{(k_l)} - \betab^{(k_l - 1)}\rVert_2$
					\end{enumerate}
				\end{enumerate}
				\item Assign $\betab^* = \betab^{(k_l)}$, $\theta^* = \theta^{(k_l)}$, $\sigma^{*2}= \sigma^{2(k_l)}$, $\Delta^* = \Delta^U$
			\end{enumerate}
		\end{flushleft}
		\caption{Spike-and-Slab Group Lasso} \label{algorithm}
	\end{algorithm}
	
	\subsection{Tuning Hyperparameters, Initializing Values, and Updating the Variance in Algorithm 1} \label{AddlComputationalDetails}
	
	We keep the slab hyperparameter $\lambda_1$ fixed at a small value. We have found that our results are not very sensitive to the choice of $\lambda_1$. This parameter controls the variance of the slab component of the prior, and the variance must simply be large enough to avoid overshrinkage of important covariates. For the default implementation, we recommend fixing $\lambda_1 = 1$. This applies minimal shrinkage to the significant groups of coefficients and affords these groups the ability to escape the pull of the spike. 
	
	Meanwhile, we choose the spike parameter $\lambda_0$ from an increasing ladder of values. We recommend selecting $\lambda_0 \in \{ 1,2,...,100 \}$, which represents a range from hardly any penalization to very strong penalization. Below, we describe precisely how to tune $\lambda_0$.  To account for potentially different group sizes, we use the same $\lambda_0$ for all groups but multiply $\lambda_0$ by $\sqrt{m_g}$ for each $g$th group, $g=1, \ldots, G$. As discussed in \cite{HBM12}, further scaling of the penalty by group size is necessary in order to ensure that the same degree of penalization is applied to potentially different sized groups. Otherwise, larger groups may be erroneously selected simply because they are larger (and thus have larger $\ell_2$ norm), not because they contain significant entries.
	
	When the spike parameter $\lambda_0$ is very large, the continuous spike density approximates the point-mass spike. Consequently, we face the computational challenge of navigating a highly multimodal posterior. To ameliorate this problem for the spike-and-slab lasso, \citet{RockovaGeorge2018} recommend a ``dynamic posterior exploration'' strategy in which the slab parameter $\lambda_1$ is held fixed at a small value and $\lambda_0$ is gradually increased along a grid of values. Using the solution from a previous $\lambda_0$ as a ``warm start'' allows the procedure to more easily find optimal modes. In particular, when $(\lambda_1 - \lambda_0)^2 \leq 4$, the posterior is convex. 
	
	\citet{MoranRockovaGeorge2018} modify this strategy for the unknown $\sigma^2$ case. This is because the posterior is always non-convex when $\sigma^2$ is unknown. Namely, when $p\gg n$ and $\lambda_0 \approx \lambda_1$, the model can become saturated, causing the residual variance to go to zero. To avoid this suboptimal mode at $\sigma^2 = 0$, \citet{MoranRockovaGeorge2018} recommend fixing $\sigma^2$ until the $\lambda_0$ value at which the algorithm starts to converge in less than 100 iterations. Then, $\betab$ and $\sigma^2$ are simultaneously updated for the next largest $\lambda_0$ in the sequence. The intuition behind this strategy is we first find a solution to the convex problem (in which $\sigma^2$ is fixed) and then use this solution as a warm start for the non-convex problem (in which $\sigma^2$ can vary). 
	
	We pursue a similar ``dynamic posterior exploration'' strategy with the modification for the unknown variance case for the SSGL in Algorithm \ref{algorithm} of Section \ref{completealgorithm}.  A key aspect of this algorithm is how to choose the maximum value of $\lambda_0$. \citet{RockovaGeorge2018} recommend this maximum to be the $\lambda_0$ value at which the estimated coefficients stabilize. An alternative approach is to choose the maximum $\lambda_0$ using cross-validation, a strategy which is made computationally feasible by the speed of our block coordinate ascent algorithm. In our experience, the dynamic posterior exploration strategy favors more parsimonious models than cross-validation. In the simulation studies in Section \ref{Simulations}, we utilize cross-validation to choose $\lambda_0$, as there, our primary goal is predictive accuracy rather than parsimony.
	
	Following \cite{MoranRockovaGeorge2018}, we initialize $\bm{\beta}^{*} = \bm{0}_p$ and $\theta^{*} = 0.5$. We also initialize $\sigma^{*2}$ to be the mode of a scaled inverse chi-squared distribution with degrees of freedom $\nu=3$ and scale parameter chosen such that the sample variance of $\Yb$ corresponds to the 90th quantile of the prior. We have found this initialization to be quite effective in practice at ensuring that Algorithm \ref{algorithm} converges in less than 100 iterations for sufficiently large $\lambda_0$.	
	
	\subsection{Additional Details for the Inference Procedure} \label{ThetaEstimateDebiasing}
	Here, we describe the nodewise regression procedure for estimating $\widehat{\boldsymbol{\Theta}}$ in Section \ref{sec:inference}. This approach for estimating the inverse of the covariance matrix $\widehat{\boldsymbol{\Sigma}} = \Xb^T \Xb / n$ was originally proposed and studied theoretically in \cite{meinshausen2006high} and \cite{van2014asymptotically}.
	
	For each $j=1,\dots,p$, let $\boldsymbol{X}_j$ denote the $j$th column of $\boldsymbol{X}$ and $\boldsymbol{X}_{-j}$ denote the submatrix of $\boldsymbol{X}$ with the $j$th column removed. Define $\widehat{\boldsymbol{\gamma}}_j$ as
	\begin{align*}
	\widehat{\boldsymbol{\gamma}}_j = \displaystyle \argmin_{\gamma}(|| \boldsymbol{X}_{j} - \boldsymbol{X}_{-j} \boldsymbol{\gamma}||^2_2 / n + 2 \lambda_j ||\boldsymbol{\gamma}||_1).
	\end{align*}
	
	\noindent Now we can define the components of $\widehat{\boldsymbol{\gamma}}_j$ as $\widehat{\boldsymbol{\gamma}}_{j,k}$ for $k = 1, \dots, p$ and $k \neq p$, and create the following matrix:
	$$
	\widehat{\bm{C}} =
	\begin{pmatrix}
	1&-\widehat{\boldsymbol{\gamma}}_{1,2}&\dots&-\widehat{\boldsymbol{\gamma}}_{1,p}\\
	-\widehat{\boldsymbol{\gamma}}_{2,1}&1&\dots&-\widehat{\boldsymbol{\gamma}}_{2,p}\\
	\vdots&\vdots&\ddots&\vdots\\
	-\widehat{\boldsymbol{\gamma}}_{p,1}&-\widehat{\boldsymbol{\gamma}}_{p,2}&\dots&1
	\end{pmatrix}.
	$$
	
	\noindent Lastly, let $\widehat{\bm{T}}^2 = \text{diag}(\widehat{\tau}_1^2, \widehat{\tau}_2^2, \dots, \widehat{\tau}_p^2)$, where 
	
	$$\widehat{\tau}_j = || \boldsymbol{X}_{j} - \boldsymbol{X}_{-j} \widehat{\boldsymbol{\gamma}}_j||^2_2 / n + \lambda_j ||\widehat{\boldsymbol{\gamma}}_j||_1.$$
	
	\noindent We can proceed with $\widehat{\boldsymbol{\Theta}} = \widehat{\bm{T}}^{-2} \widehat{\bm{C}}$. This choice is used because it puts an upper bound on $|| \widehat{\boldsymbol{\Sigma}} \widehat{\boldsymbol{\Theta}}_j^T - \bm{e}_j||_{\infty}$. Other regression models such as the original spike-and-slab lasso \citep{RockovaGeorge2018} could be used instead of the lasso \citep{Tibshirani1996} regressions for each covariate. However, we will proceed with this choice, as it has already been studied theoretically and shown to have the required properties to be able to perform inference for $\boldsymbol{\beta}$. 
	
	\section{Additional Simulation Results} \label{App:B}
	
	Here, we present additional results which include different sample sizes than those seen in the manuscript, assessment of the SSGL procedure under dense settings, estimates of $\sigma^2$, timing comparisons, and additional figures.
	
	\subsection{Increased Sample Size for Sparse Simulation} \label{App:B1}
	
	Here, we present the same sparse simulation setup as that seen in Section \ref{simsparse}, though we will increase $n$ from 100 to 300. Figure \ref{fig:AppendixSimSparse} shows the results and we see that they are very similar to those from the manuscript, except that the mean squared error (MSE) for the SSGL approach is now nearly as low as the MSE for the GroupSpike approach, and the precision score has improved substantially. 
	
	\begin{figure}[t!]
		\centering
		\includegraphics[width=0.3\linewidth]{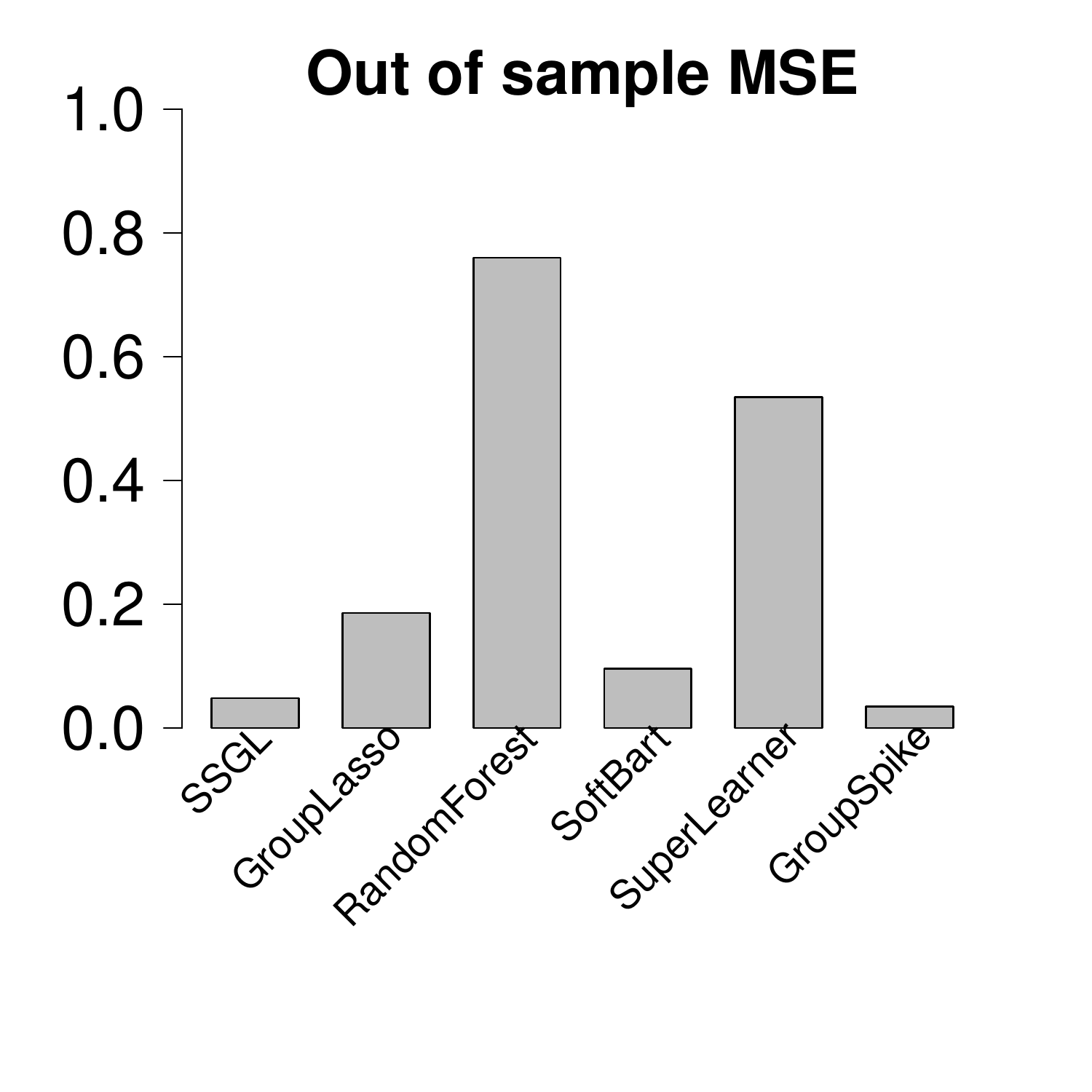}
		\includegraphics[width=0.3\linewidth]{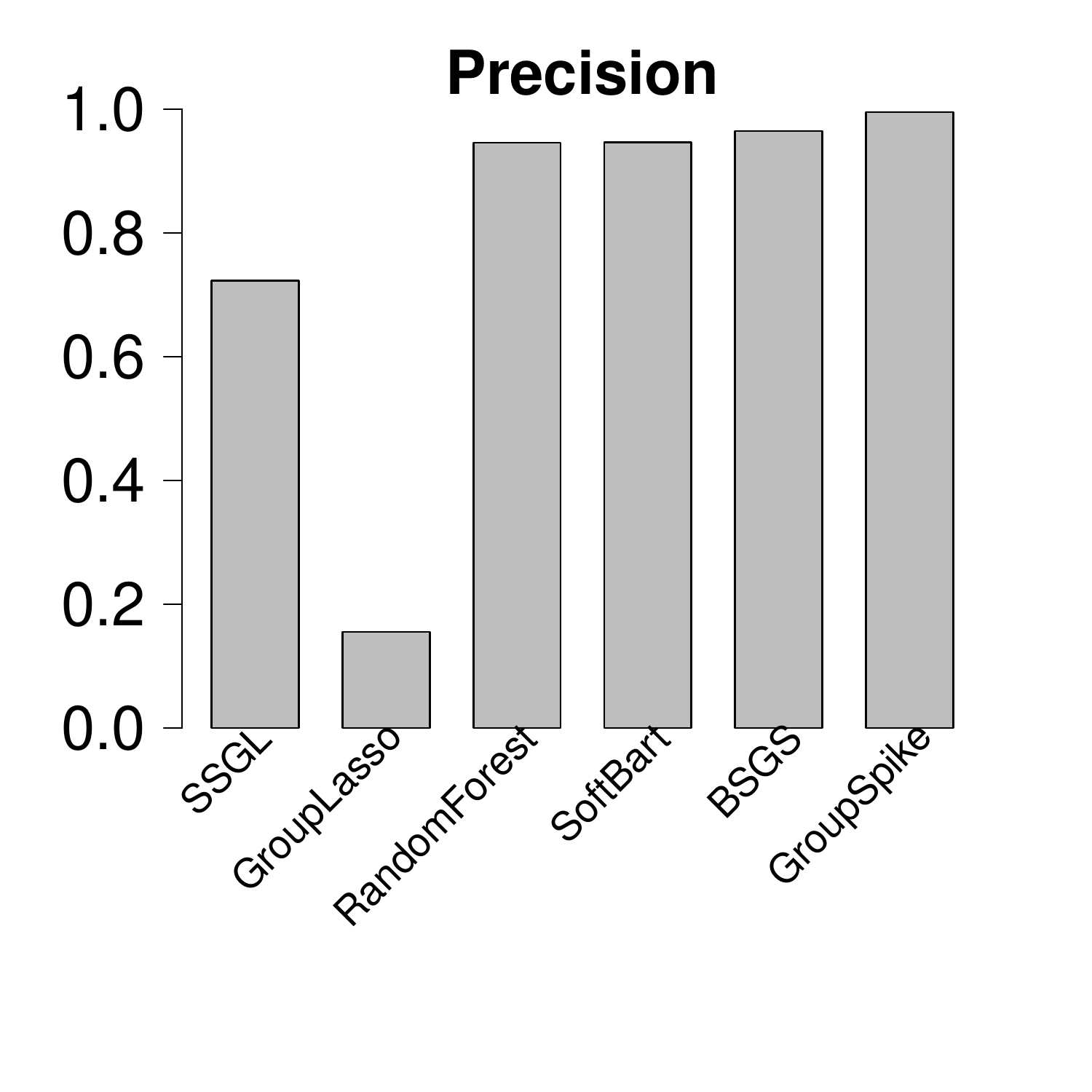}
		\includegraphics[width=0.3\linewidth]{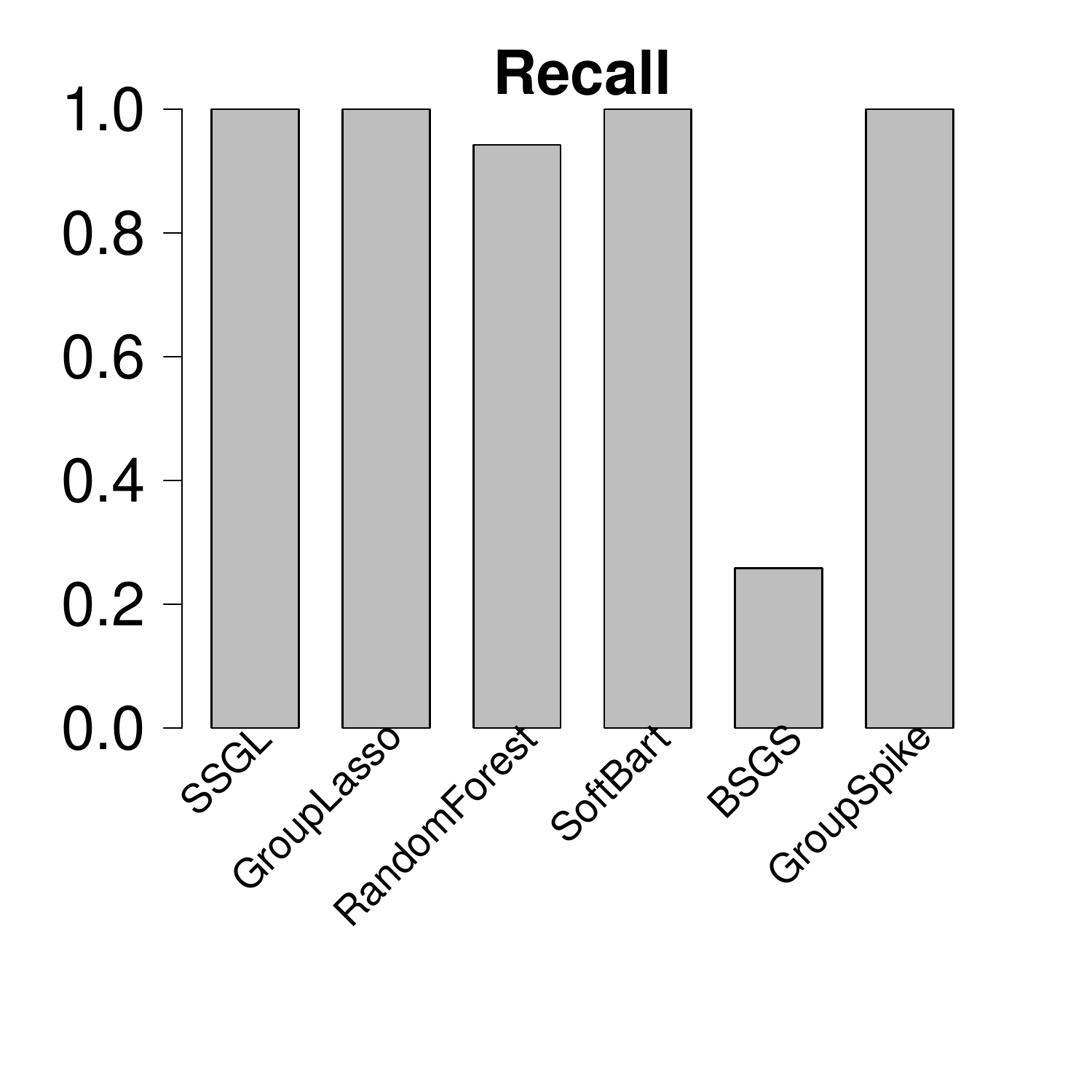}\\
		\caption{Simulation results from the sparse setting with $n=300$. The left panel presents the out-of-sample mean squared error, the middle panel shows the precision score, and the right panel shows the recall score. The MSE for BSGS is not displayed as it lies outside of the plot area.}
		\label{fig:AppendixSimSparse}
	\end{figure}
	
	\begin{figure}[t!]
		\centering
		\includegraphics[width=0.3\linewidth]{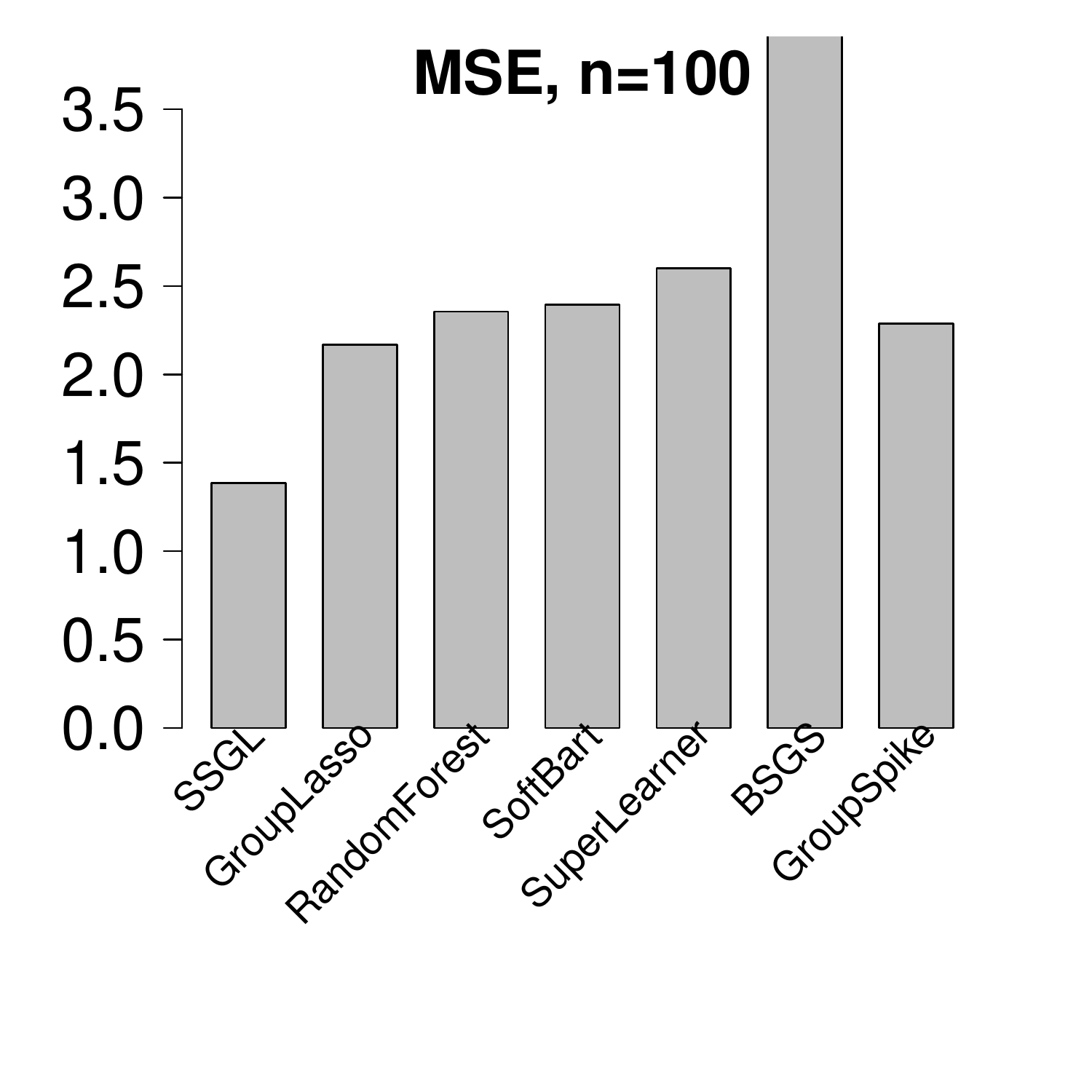}
		\includegraphics[width=0.3\linewidth]{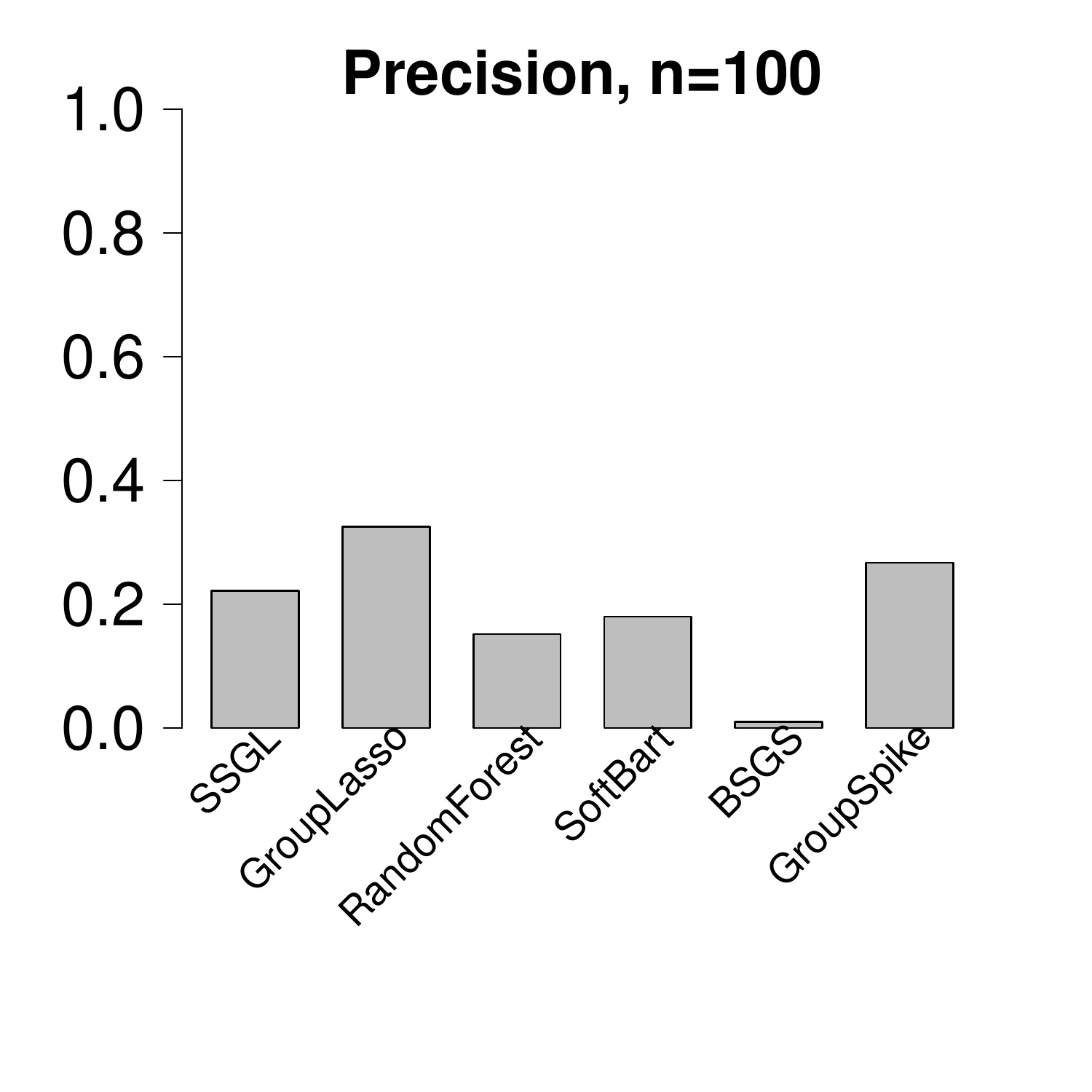}
		\includegraphics[width=0.3\linewidth]{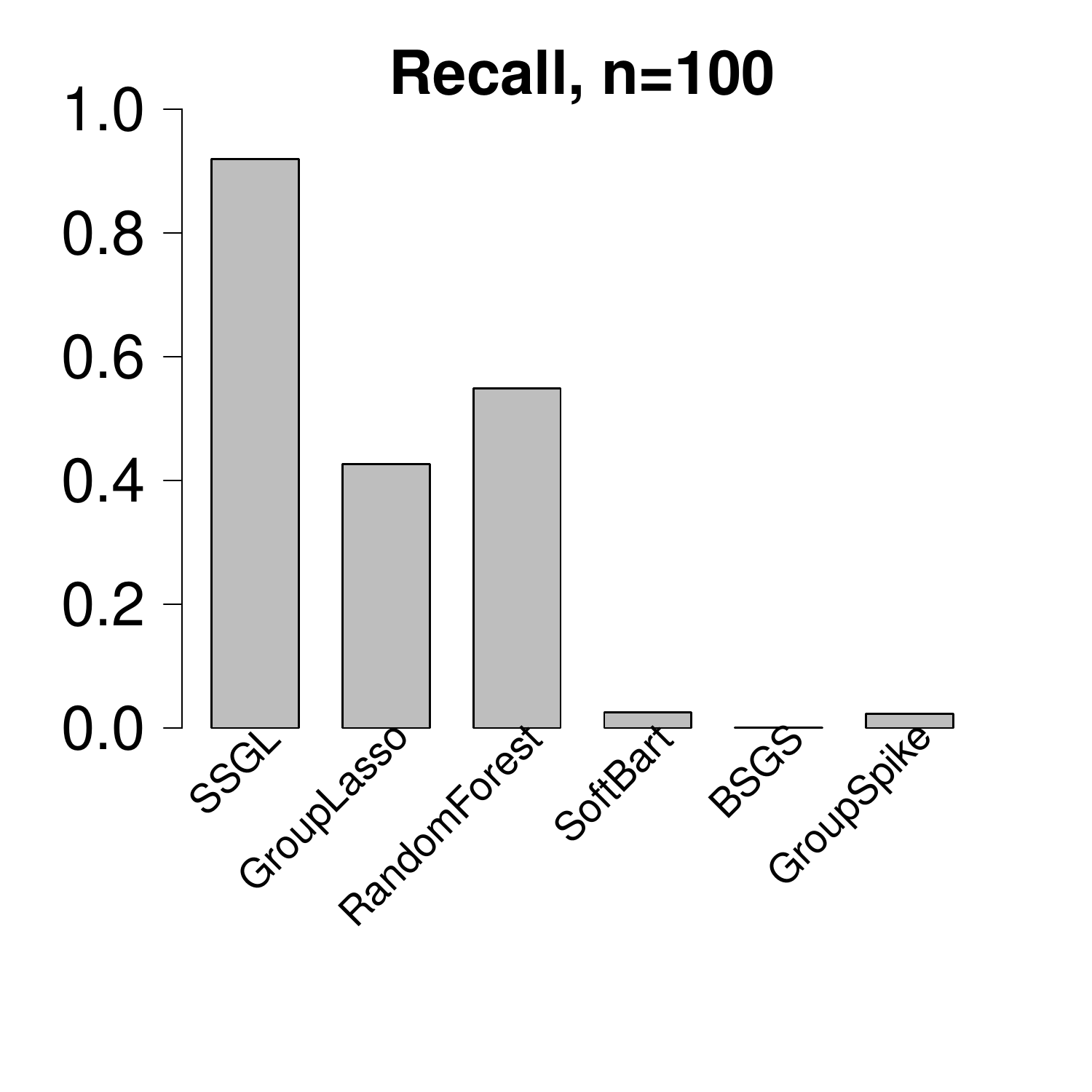}\\
		\includegraphics[width=0.3\linewidth]{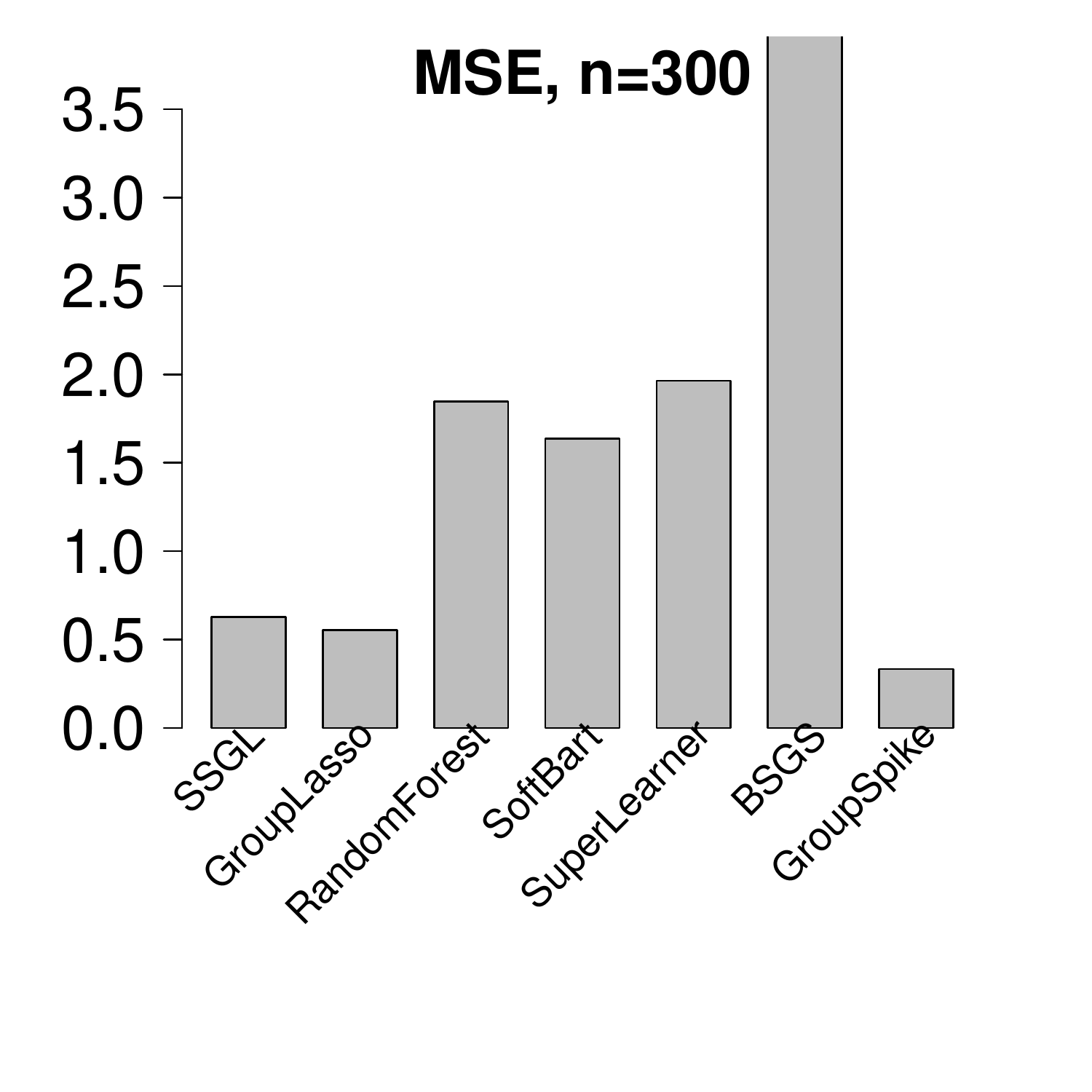}
		\includegraphics[width=0.3\linewidth]{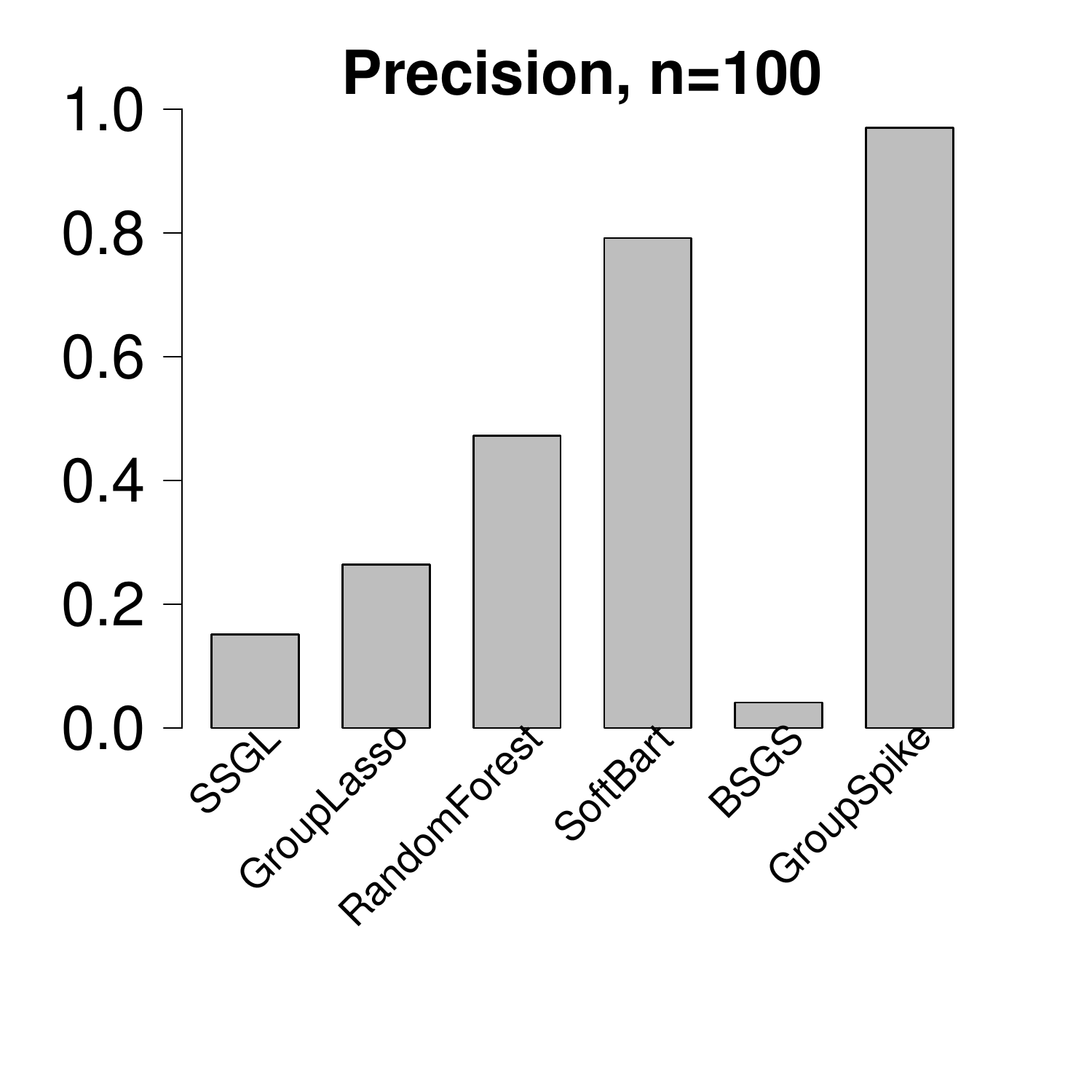}
		\includegraphics[width=0.3\linewidth]{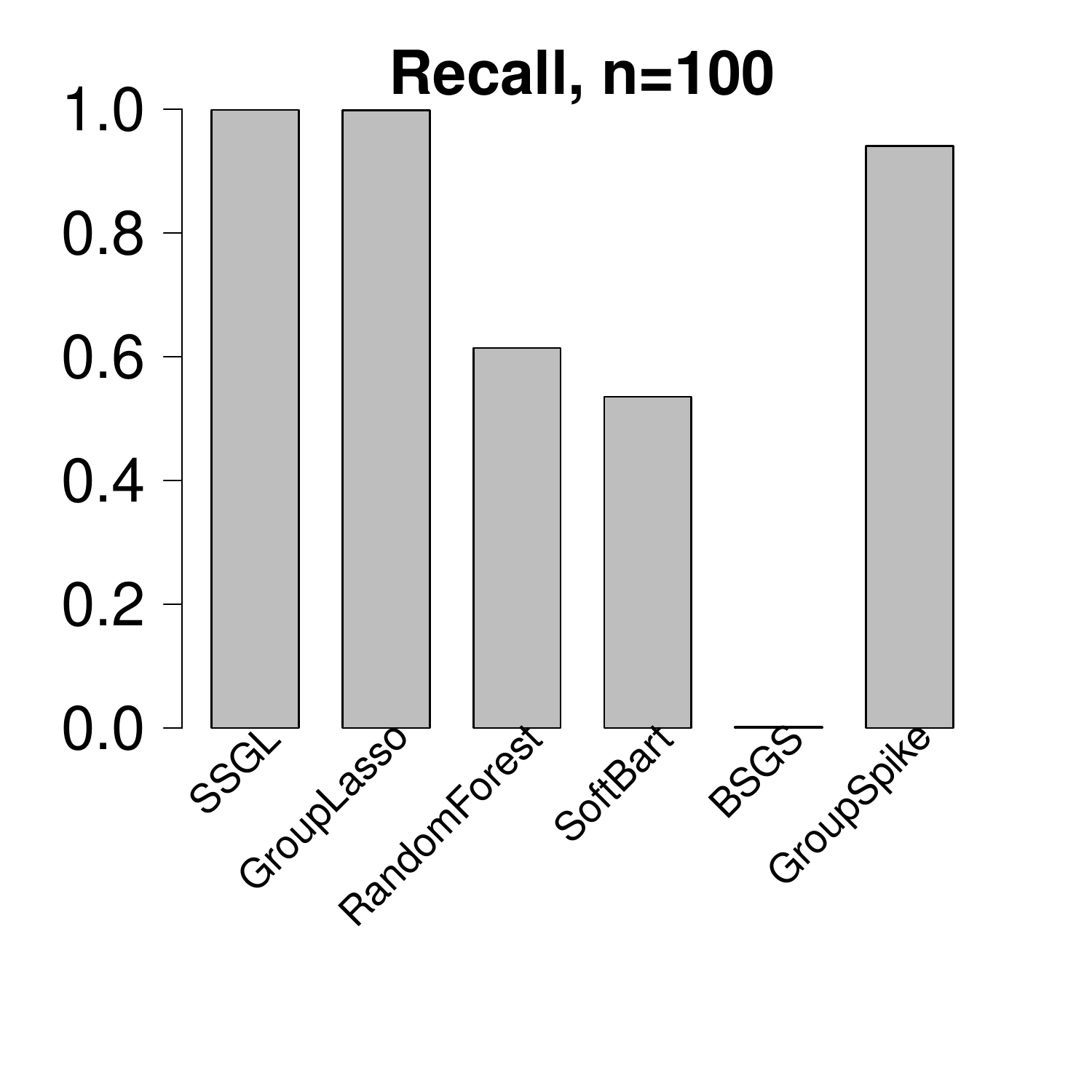}
		\caption{Simulation results from the less sparse setting with $n=100$ and $n=300$. The left column shows out-of-sample MSE, the middle panel shows the precision score, and the right column shows the recall score.}
		\label{fig:simDense}
	\end{figure}
	
	\subsection{Dense Model} \label{App:B2}
	
	Here, we generate independent covariates from a standard normal distribution, and we let the true regression surface take the following form
	\begin{align*}
	\mathbb{E} (Y \vert \boldsymbol{X}) = \sum_{j=1}^{20} 0.2 X_j + 0.2 X_j^2,
	\end{align*}
	with variance $\sigma^2=1$. In this model, there are no strong predictors of the outcome, but rather a large number of predictors which have small impacts on the outcome. Here, we display results for both $n=100$ and $p=300$, as well as $n=300$ and $p = 300$, as the qualitative results change across the different sample sizes. Our simulation results can be seen in Figure \ref{fig:simDense}. When the sample size is 100, the SSGL procedure performs the best in terms of both MSE and recall score, while all approaches do poorly with the precision score. When the sample size increases to 300, the SSGL approach still performs quite well in terms of MSE and recall, though the GroupLasso and GroupSpike approaches are slightly better in terms of MSE. The SSGL approach still maintains a low precision score while the GroupSpike approach has a very high precision once the sample size is large enough. 
	
	\subsection{Estimation of $\sigma^2$} \label{App:B3}
		
	To evaluate our ability to estimate $\sigma^2$ and confirm our theoretical results that the posterior of $\sigma^2$ contracts around the true parameter, we ran a simulation study using the following data generating model:
	\begin{align*}
	\mathbb{E} (Y \vert \boldsymbol{X}) = 0.5X_1 + 0.3X_2 + 0.6X_{10}^2 - 0.2X_{20},
	\end{align*} 
	with $\sigma^2 = 1$. We vary $n \in \{50, 100, 500, 1000, 2000\}$ and we set $G = n$ to confirm that the estimates are centering around the truth as both the sample size and covariate dimension grows. We use groups of size two that contain both the linear and quadratic term for each covariate. Note that in this setting, the total number of regression coefficients actually \textit{exceeds} the sample size since each group has two terms, leading to a total of $p=2G$ coefficients in the model.
	
		\begin{figure}[t!]
		\centering
		\includegraphics[width=0.47\linewidth]{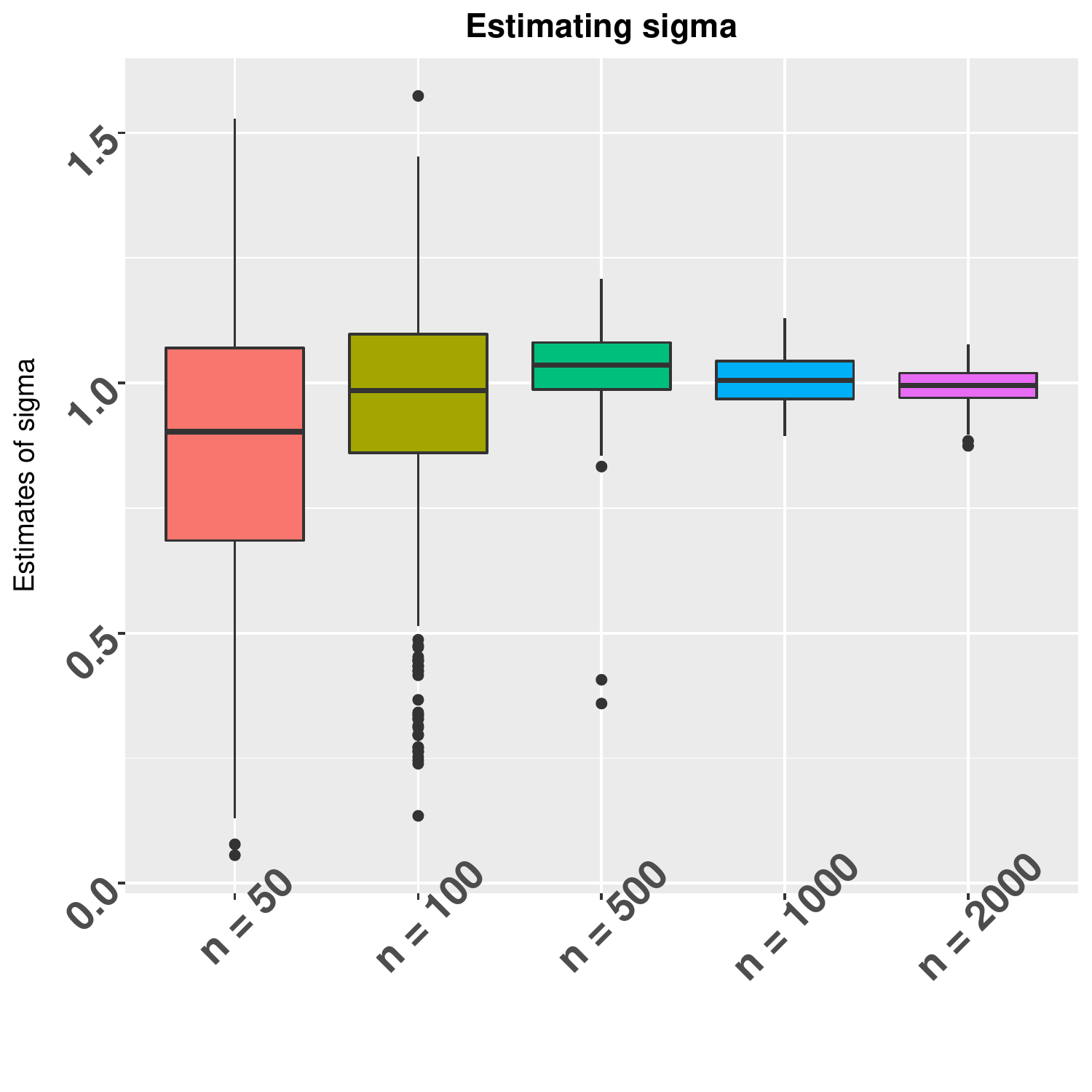}
		\caption{Boxplots of the estimates of $\sigma^2$ from the SSGL model for a range of sample sizes. Note that $n=G$ in each scenario.}
		\label{fig:SigmaTest}
	\end{figure}
	
	Figure \ref{fig:SigmaTest} shows box plots of the estimates for $\sigma^2$ across all simulations for each sample size and covariate dimension. We see that for small sample sizes there are some estimates well above 1 or far smaller than 1. This is because either some important variables are excluded (so the sum of squared residuals gets inflated), or too many variables are included and the model is overfitted (leading to small $\widehat{\sigma}^2$). These problems disappear as the sample size grows to 500 or larger, where we observe that the estimates are closely centering around the true $\sigma^2 = 1$. Figure \ref{fig:SigmaTest} confirms our theoretical results in Theorem \ref{posteriorcontractiongroupedregression} and Theorem \ref{contractionGAMs}, which state that as $n, G \rightarrow \infty$, the posterior $\pi(\sigma^2|\bm{Y})$ contracts around the true $\sigma^2$.
	
	\subsection{Large Number of Groups} \label{App:B4}
	
		\begin{figure}[t]
		\centering
		\includegraphics[width=0.32\linewidth]{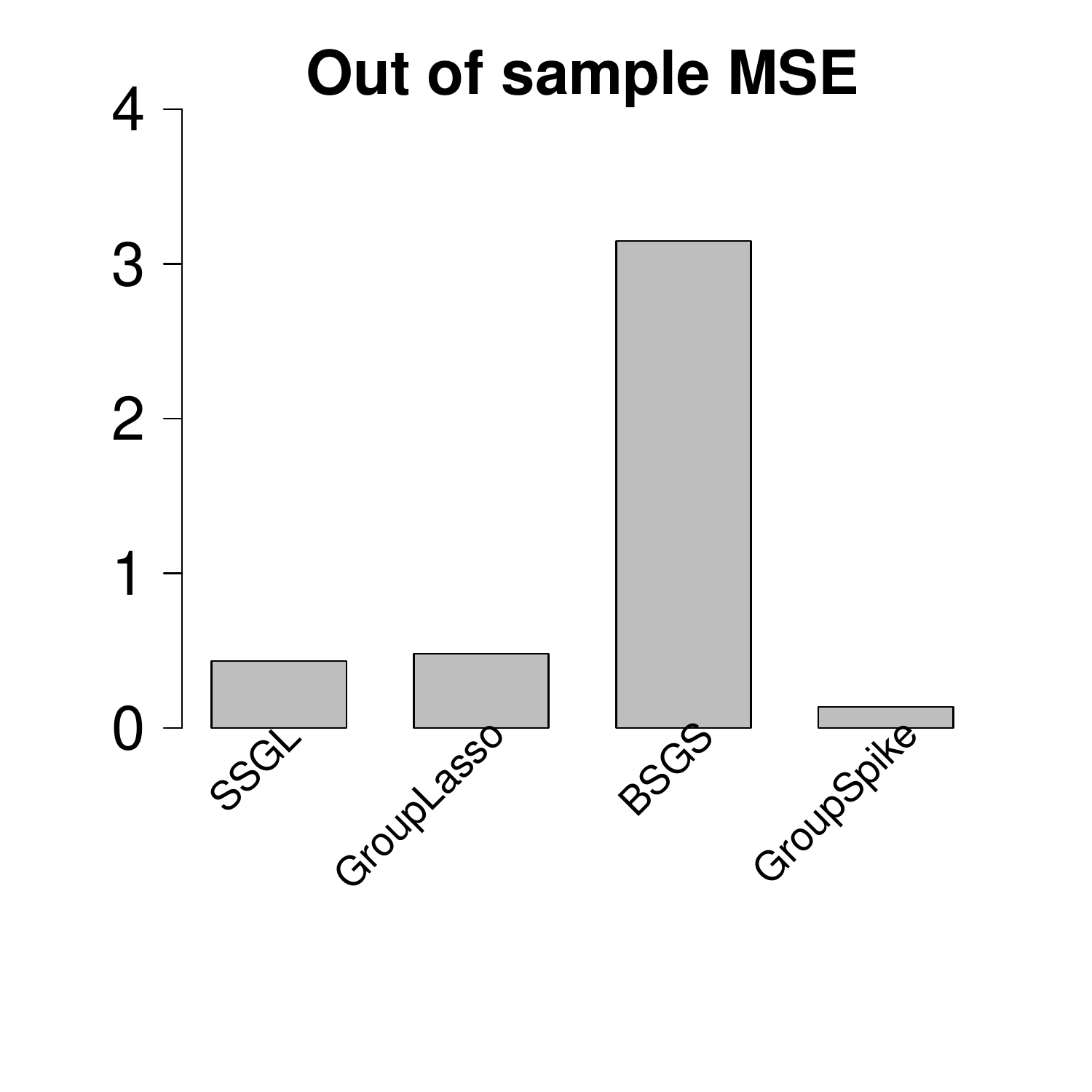}
		\includegraphics[width=0.32\linewidth]{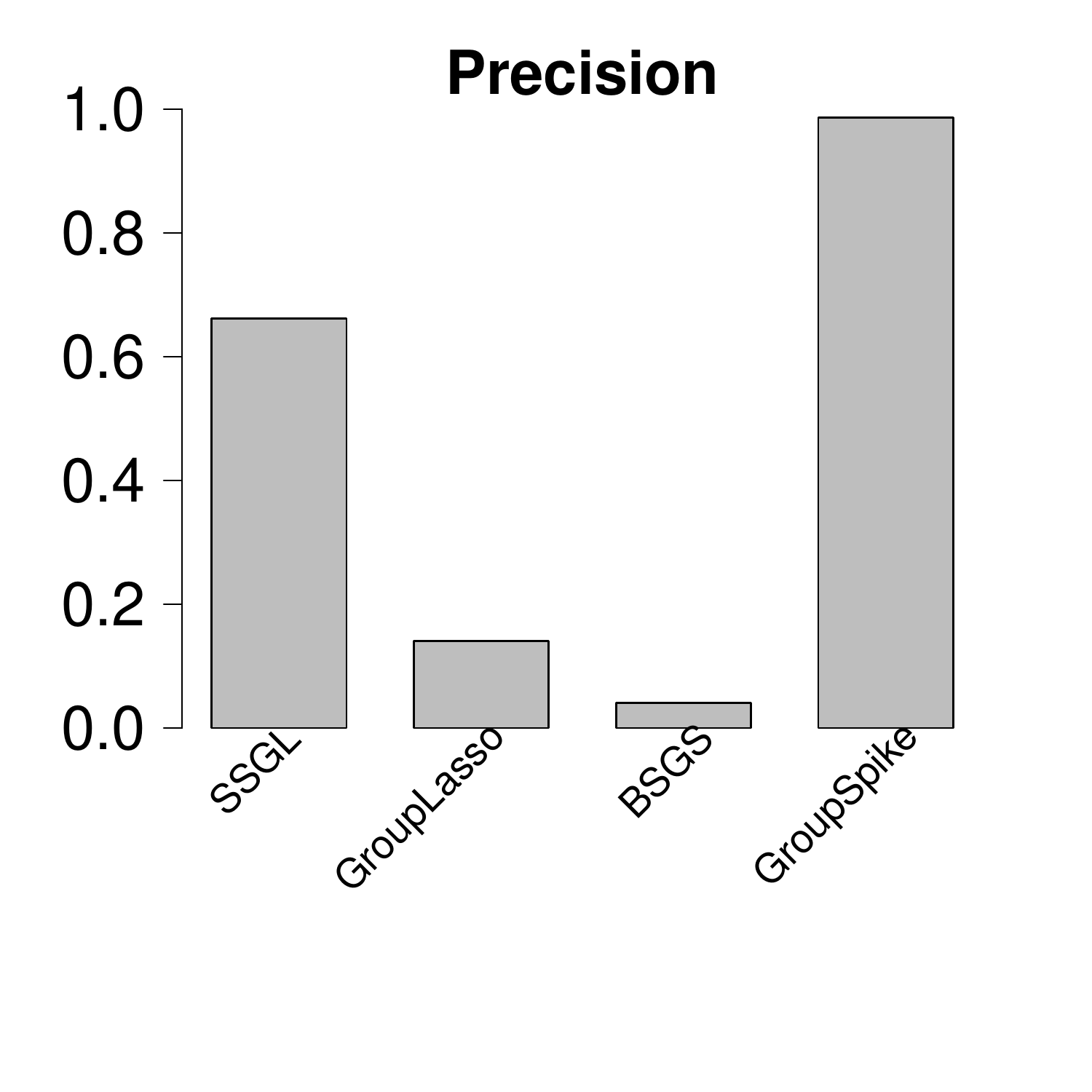}
		\includegraphics[width=0.32\linewidth]{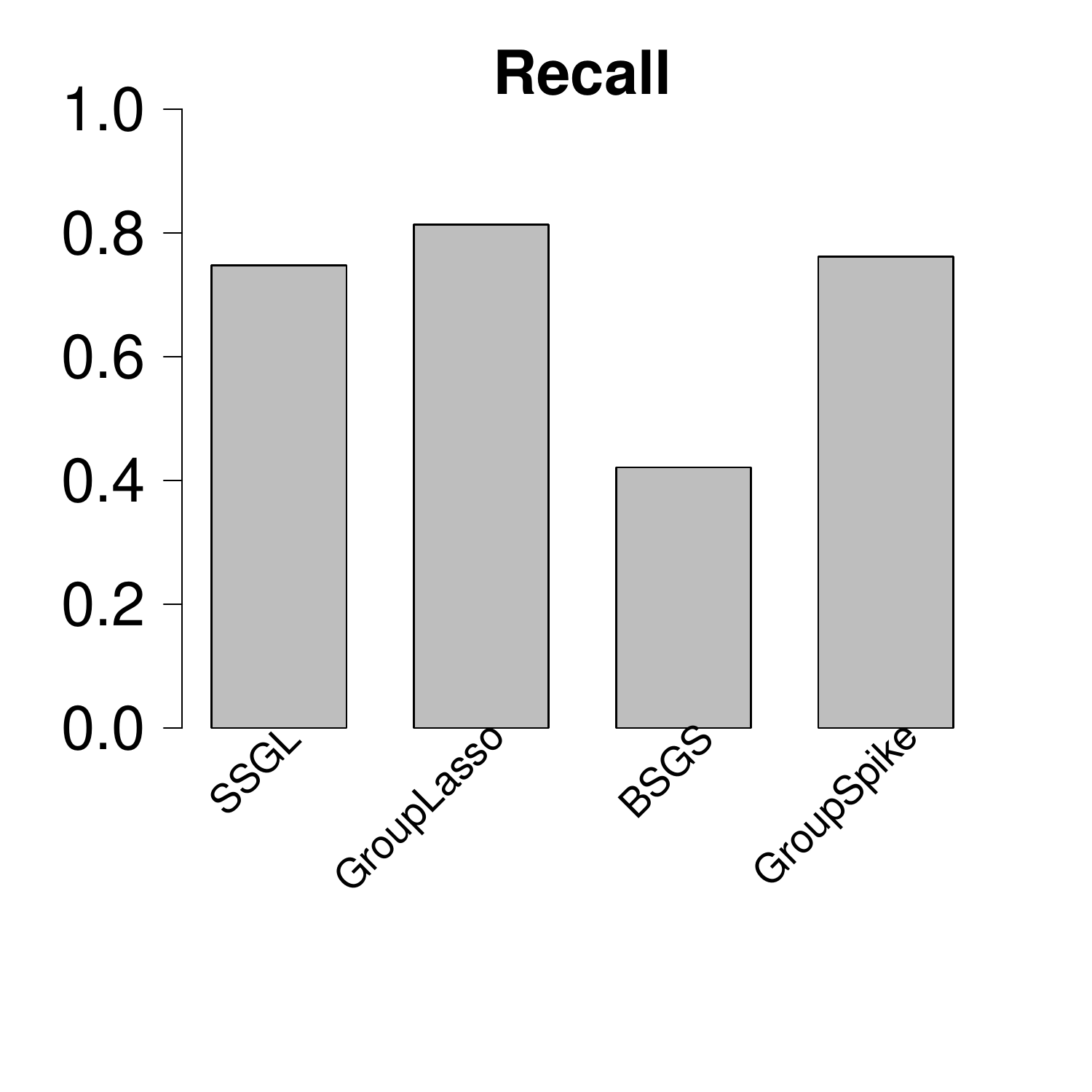}\\
		\caption{Simulation results from the many groups setting with $G=2000$. The left panel presents the out-of-sample mean squared error, the middle panel shows the precision score, and the right panel shows the recall score.}
		\label{fig:AppendixSimBigG}
	\end{figure}
	
	We now generate data with $n=200$ and $G=2000$, where each group contains three predictors. We generate data from the following model:
	\begin{align*}
	\mathbb{E} ( \bm{Y} \vert \boldsymbol{X}) = \sum_{g=1}^G \boldsymbol{X}_g \boldsymbol{\beta}_g,
	\end{align*}
	where we set $\boldsymbol{\beta}_g = \boldsymbol{0}$ for $g=1, \dots 1996$. For the final four groups, we draw individual coefficient values from independent normal distributions with mean 0 and standard deviation 0.4. These coefficients are redrawn for each data set in the simulation study, and therefore, the results are averaging over many possible combinations of magnitudes for the true nonzero coefficients. We see that the best performing approach in this scenario is the GroupSpike approach, followed by the SSGL approach. The SSGL approach outperforms group lasso in terms of MSE and precision, while group lasso has a slightly higher recall score. 
	
	\subsection{Computation Time} \label{CPUExperiment} \label{App:B5}
	
	In this study, we evaluate the computational speed of the SSGL procedure in comparison with the fully Bayesian GroupSpike approach that places point-mass spike-and-slab priors on groups of coefficients. We fix $n=300$ and vary the number of groups $G \in \{ 100, 200, \dots, 2000 \}$, with two elements per group. For the SSGL approach, we keep track of the computation time for estimating the model for $\lambda_0 = 20$. For large values of $\lambda_0$, it typically takes 100 or fewer iterations for the SSGL method to converge. For the GroupSpike approach, we keep track of the computation time required to run 100 MCMC iterations. Both SSGL and GroupSpike models were run on an Intel E5-2698v3 processor.
	
	In any given data set, the computation time will be higher than the numbers presented here because the SSGL procedure typically requires fitting the model for multiple values of $\lambda_0$, while the GroupSpike approach will likely take far more than 100 MCMC iterations to converge, especially in higher dimensions. Nonetheless, this should provide a comparison of the relative computation speeds for each approach. 
	
	The average CPU time in seconds can be found in Figure \ref{fig:CPU}. We see that the SSGL approach is much faster as it is able to estimate all the model parameters for a chosen $\lambda_0$ in just a couple of seconds, even for $G=2000$ (or $p=4000$). When $p=2000$, the SSGL returned a final solution in roughly three seconds on average, whereas GroupSpike required over two minutes to run 100 iterations (and would most likely require many more iterations to converge). This is to be expected as the GroupSpike approach relies on MCMC. Figure \ref{fig:CPU} shows the large computational gains that can be achieved using our MAP finding algorithm.
	
	\begin{figure}[t!]
		\centering
		\includegraphics[width=0.48\linewidth]{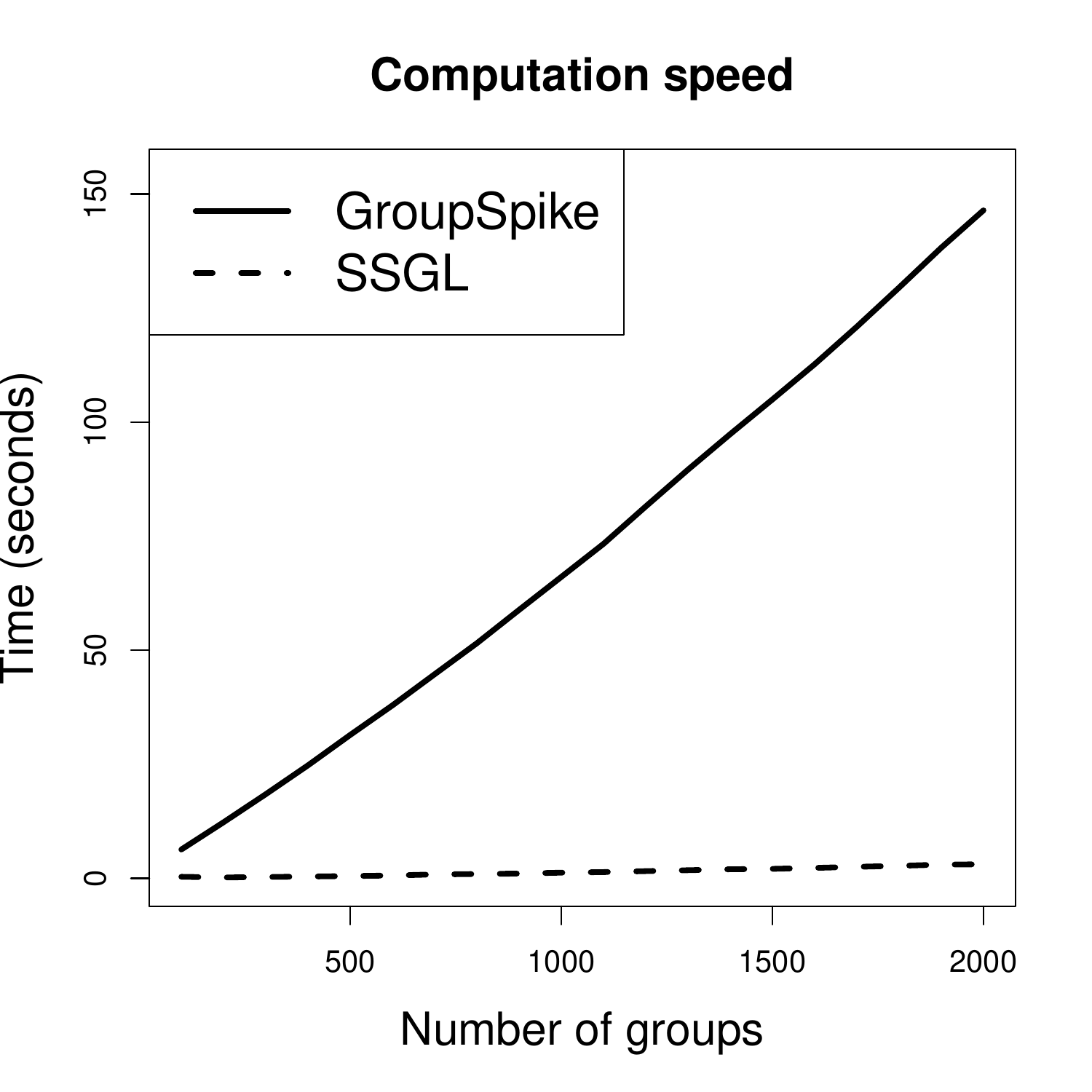}
		\caption{CPU time for the SSGL and GroupSpike approaches averaged across 1000 replications for fixed $n=300$ and different group sizes $G$.}
		\label{fig:CPU}
	\end{figure}
	
	\section{Additional Results and Discussion for Real Data Examples} \label{App:C}
	
	In this section, we perform additional data analysis of the SSGL method on benchmark datasets where $p<n$ to demonstrate that the SSGL model also works well in low-dimensional settings. We also provide additional analyses and discussion of the two real data examples analyzed in Section \ref{dataanalysis}. 
	
	\subsection{Testing Predictive Performance of the SSGL on Datasets Where $p < n$} \label{predictiveperformance} 
	We first look at three data sets which have been analyzed in a number of manuscripts, most recently in \cite{linero2018bayesian}. The tecator data set is available in the \texttt{caret} package in \textsf{R} \citep{kuhn2008building} and has three different outcomes $\Yb$ to analyze. Specifically, this data set looks at using 100 infrared absorbance spectra to predict three different features of chopped meat with a sample size of 215. The Blood-Brain data is also available in the \texttt{caret} package and aims to predict levels of a particular compound in the brain given 134 molecular descriptors with a sample size of 208. Lastly, the Wipp data set contains 300 samples with 30 features from a computer model used to estimate two-phase fluid flow \citep{storlie2011surface}. For each of these data sets, we hold out 20 of the subjects in the data as a validation sample and see how well the model predicts the outcome in the held-out data. We repeat this 1000 times and compute the root mean squared error (RMSE) for prediction. 
	
	\begin{table}[t!]
		\resizebox{.98\textwidth}{!}{
			\centering
			\begin{tabular}{lrrrrrrr}
				\hline
				Data & SSGL & GroupLasso & RandomForest & SoftBart & SuperLearner & BSGS & GroupSpike \\ 
				\hline
				Tecator 1 & 1.41 & 1.57 & 2.75 & 1.93 & 1.00 & 5.16 & 1.67 \\ 
				Tecator 2 & 1.25 & 1.58 & 2.91 & 1.97 & 1.00 & 6.77 & 1.41 \\ 
				Tecator 3 & 1.14 & 1.38 & 1.94 & 1.81 & 1.10 & 3.31 & 1.00 \\ 
				BloodBrain & 1.10 & 1.04 & 1.00 & 1.01 & 1.00 & 1.24 & 1.13 \\ 
				Wipp & 1.44 & 1.30 & 1.46 & 1.00 & 1.17 & 4.68 & 1.30 \\  
				\hline
			\end{tabular}
		}
		\caption{Standardized out-of-sample root mean squared prediction error averaged across 1000 replications for the data sets in Section \ref{predictiveperformance}. An RMSE of 1 indicates the best performance within a data set.}
		\label{tab:pred}
	\end{table}
	
	Table \ref{tab:pred} shows the results for each of the methods considered in the simulation study. The results are standardized so that for each data set, the RMSE is divided by the minimum RMSE for that data set. This means that the model with an RMSE of 1 had the best predictive performance, and all others should be greater than 1, with the magnitude indicating how poor the performance was. We see that the top performer across the data sets was SuperLearner, which is not surprising given that SuperLearner is quite flexible and averages over many different prediction models. Our simulation studies showed that SuperLearner may not work as well when $p > n$. However, the data sets considered here all have $p < n$, which could explain its improved performance here. Among the other approaches, SSGL performs quite well as it has RMSE's relatively close to 1 for all the data sets considered.
	
	\subsection{Additional Details for Bardet-Biedl Analysis}
	
	Here we present additional results for the Bardet-Biedl Syndrome gene expression analysis conducted in Section \ref{bb_subsection}. Table \ref{bb_gene_table} displays the 12 probes found by SSGL. Table \ref{gene_ontology_results} displays the terms for which SSGL was enriched in a gene ontology enrichment analysis. 
	
	\begin{table}[ht]
\centering
\begin{tabular}{llrr}
  \hline
Probe ID & Gene Symbol & SSGL Norm & Group Lasso Norm \\ 
  \hline
1374131\_at &  & 0.034 &  \\ 
  1383749\_at & Phospho1 & 0.067 & 0.088 \\ 
  1393735\_at &  & 0.033 & 0.002 \\ 
  1379029\_at & Zfp62 & 0.074 &  \\ 
  1383110\_at & Klhl24 & 0.246 &  \\ 
  1384470\_at & Maneal & 0.087 & 0.005 \\ 
  1395284\_at &  & 0.014 &  \\ 
  1383673\_at & Nap1l2 & 0.045 &  \\ 
  1379971\_at & Zc3h6 & 0.162 &  \\ 
  1384860\_at & Zfp84 & 0.008 &  \\ 
  1376747\_at &  & 0.489 & 0.002 \\ 
  1379094\_at &  & 0.220 &  \\ 
   \hline
\end{tabular}
\caption{Probes found by SSGL on the Bardet-Biedl syndrome gene expression data set. The probes which were also found by the Group Lasso have nonzero group norm values.}
\label{bb_gene_table}
\end{table}
	
	\begin{table}[ht]
\centering
\begin{tabular}{{ | m{1em} | m{11cm} | }}
  \hline
 & SSGL: enriched terms in gene ontology enrichment analysis \\ 
  \hline
1 & alpha-mannosidase activity \\ 
2 & RNA polymerase II intronic transcription regulatory region sequence-specific DNA binding \\ 
3 & mannosidase activity \\ 
4 & intronic transcription regulatory region sequence-specific DNA binding \\ 
5 & intronic transcription regulatory region DNA binding \\ 
   \hline
\end{tabular}
\caption{Table displays the terms for which SSGL was found to be enriched in a gene ontology enrichment analysis, ordered by statistical significance.}\label{gene_ontology_results}
\end{table}
	
\subsection{Additional Details for Analysis of NHANES Data}

Here we will present additional results from the NHANES data analysis in Section \ref{NHANES}. Here, the aim is to identify environmental exposures that are associated with leukocyte telomere length. In the NHANES data, we have measurements from 18 persistent organic pollutants. Persistent organic pollutants are toxic chemicals that have potential to adversely affect health. They are known to remain in the environment for long periods of time and can travel through wind, water, or even the food chain.  Our data set consists of 11 polychlorinated biphenyls (PCBs), three Dioxins, and four Furans. We want to understand the impact that these can have on telomere length, and to understand if any of these pollutants interact in their effect on humans. 

The data also contains additional covariates that we will adjust for such as age, a squared term for age, gender, BMI, education status, race, lymphocyte count, monocyte count, cotinine level, basophil count, eosinophil count, and neutrophil count. To better understand the data set, we have shown the correlation matrix between all organic pollutants and covariates in Figure \ref{fig:CorrNHANES}. We can see that the environmental exposures are all fairly positively correlated with each other. In particular, the PCBs are highly correlated among themselves. The correlation across chemical types, such as the correlation between PCBs and Dioxins or Furans are lower, though still positively correlated. The correlation between the covariates that we place into our model and the exposures is generally extremely low, and the correlation among the individual covariates is also low, with the exception of a few blood cell types as seen in the upper right of Figure \ref{fig:CorrNHANES}.

\begin{figure}[t!]
		\centering
		\includegraphics[width=0.77\linewidth]{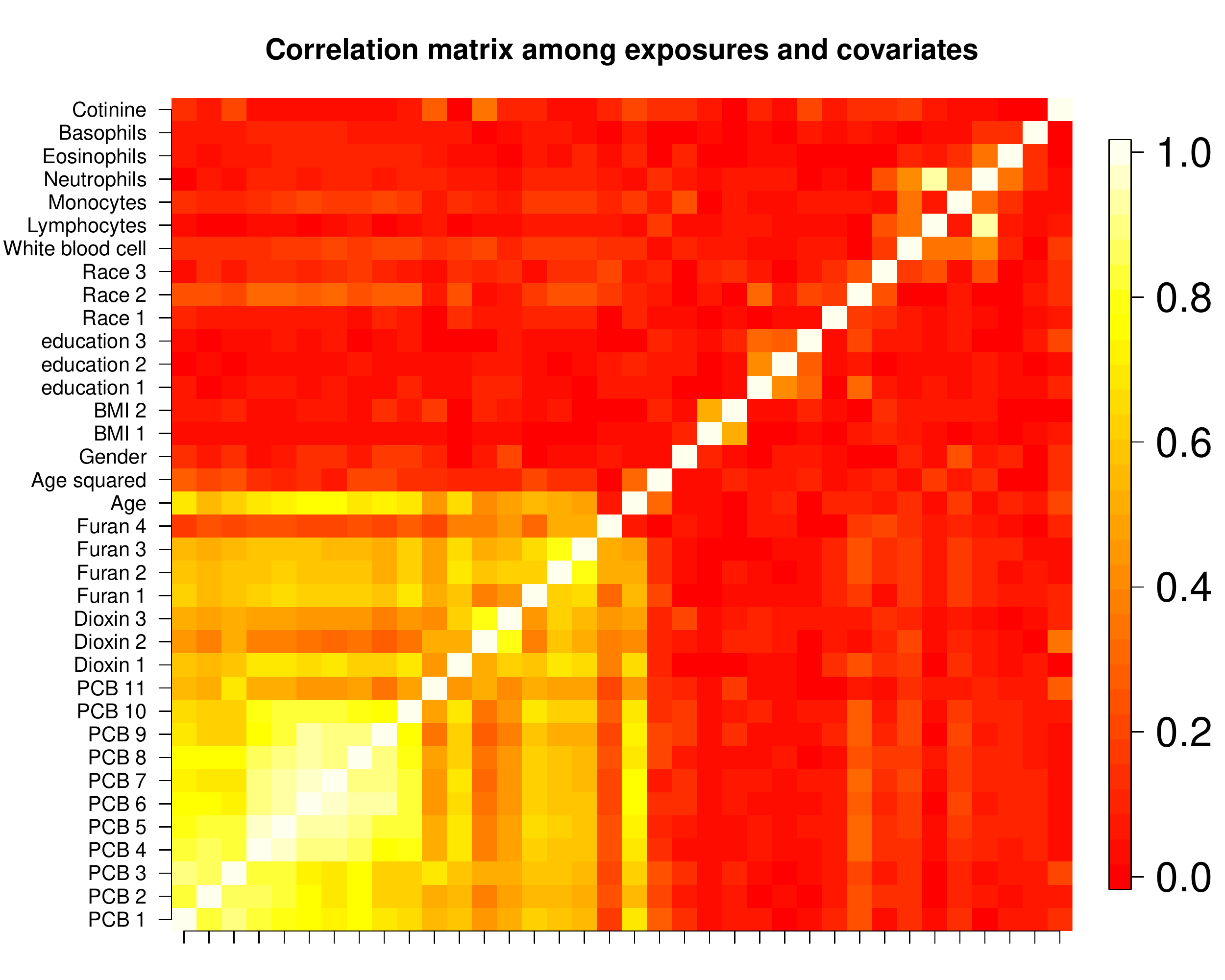}
		\caption{Correlation matrix among the 18 exposures and 18 demographic variables used in the final analysis for the NHANES study.}
		\label{fig:CorrNHANES}
\end{figure}


As discussed in Section \ref{NHANES}, when we fit the SSGL model to this data set, we identified four main effects (plotted in Figure \ref{fig:MainEffectNHANES}). Our model also identified six interactions as having nonzero parameter estimates. The identified interactions are PCB 10 - PCB 7, Dioxin 1 - PCB 11, Dioxin 2 - PCB 2, Dioxin 2 - Dioxin 1, Furan 1 - PCB 10, and Furan 4 - Furan 3. We see that there are interactions both within a certain type of pollutant (Dioxin and Dioxin, etc.) and across pollutant types (Furan and PCB).

Lastly, looking at Figure \ref{fig:MainEffectNHANES}, we can see that the exposure response curves for the four identified main effects are relatively linear, particularly for PCB 11 and Furan 1. With this in mind, we also ran our SSGL model with one degree of freedom splines for each main effect. Note that this does not require a model that handles grouped covariate structures as the main effects and interactions in this case are both single parameters. Cross-validated error from the model with one degree of freedom is nearly identical to the model with two degrees of freedom, though the linear model selects far more terms. The linear model selects six main effect terms and 20 interaction terms. As the two models provide similar predictive performance but the model with two degrees of freedom is far more parsimonious, we elect to focus on the model with two degrees of freedom. 


	\section{Proofs of Main Results} \label{App:D}
	
	\subsection{Preliminary Lemmas} \label{App:D1}
	Before proving the main results in the paper, we first prove the following lemmas.
	
	\begin{lemma} \label{auxlemma1}
		Suppose that $\betab_g \in \R^{m_g}$ follows a group lasso density indexed by $\lambda$, i.e. $\betab_g \sim \bm{\Psi} ( \betab_g \vert \lambda )$. Then
		\begin{equation*} 
		\mathbb{E}( \lVert \bm{\beta}_g \rVert_2^2 ) = \frac{m_g (m_g+1)}{\lambda^2}.
		\end{equation*}
	\end{lemma}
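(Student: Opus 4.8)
The key observation is that $\bm{\Psi}(\betab_g\mid\lambda)$ is spherically symmetric: it depends on $\betab_g$ only through the radius $r=\lVert\betab_g\rVert_2$. So the plan is to pass to spherical coordinates on $\R^{m_g}$, writing $d\betab_g = r^{m_g-1}\,dr\,d\Omega$ where $d\Omega$ is surface measure on the unit sphere $S^{m_g-1}$ with total mass $\omega_{m_g-1}=2\pi^{m_g/2}/\Gamma(m_g/2)$. Integrating out the angular part, the induced density of $R=\lVert\betab_g\rVert_2$ is proportional to $r^{m_g-1}e^{-\lambda r}$ on $(0,\infty)$; after normalization this is exactly the $\mathrm{Gamma}(m_g,\lambda)$ density with shape $m_g$ and rate $\lambda$. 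Crucially, I do not need the explicit value of $C_g$ for this step — properness of $\bm{\Psi}$ forces the normalizing constant of the radial marginal, so one may simply read off $R\sim\mathrm{Gamma}(m_g,\lambda)$.

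Given that, the conclusion is immediate: $\mathbb{E}(\lVert\betab_g\rVert_2^2)=\mathbb{E}(R^2)=\mathrm{Var}(R)+(\mathbb{E}R)^2 = m_g/\lambda^2 + (m_g/\lambda)^2 = m_g(m_g+1)/\lambda^2$, using the standard mean $m_g/\lambda$ and variance $m_g/\lambda^2$ of a gamma variable.

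If instead one prefers a direct computation keeping track of $C_g$, the plan is: write
\begin{align*}
\mathbb{E}(\lVert\betab_g\rVert_2^2) = C_g\lambda^{m_g}\,\omega_{m_g-1}\int_0^\infty r^{m_g+1}e^{-\lambda r}\,dr = C_g\lambda^{m_g}\,\omega_{m_g-1}\,\frac{\Gamma(m_g+2)}{\lambda^{m_g+2}},
\end{align*}
and separately, from $\int\bm{\Psi}=1$, that $C_g\,\omega_{m_g-1}\,\Gamma(m_g)=1$. Substituting the latter gives $\mathbb{E}(\lVert\betab_g\rVert_2^2)=\Gamma(m_g+2)/(\Gamma(m_g)\lambda^2)=m_g(m_g+1)/\lambda^2$. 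Verifying $C_g\,\omega_{m_g-1}\,\Gamma(m_g)=1$ from the stated formula $C_g = 2^{-m_g}\pi^{-(m_g-1)/2}[\Gamma((m_g+1)/2)]^{-1}$ requires the Legendre duplication formula $\Gamma(m_g/2)\Gamma((m_g+1)/2)=2^{1-m_g}\sqrt{\pi}\,\Gamma(m_g)$, which is the only mildly nontrivial ingredient.

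The main (and essentially only) obstacle is the bookkeeping around the spherical-coordinate change and the normalizing constant; this is sidestepped cleanly by the gamma-marginal argument, so I would present that as the primary proof and relegate the explicit-constant verification to a remark.
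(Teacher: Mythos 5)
Your proof is correct, but it takes a genuinely different route from the paper's. The paper invokes the scale-mixture representation of the group lasso density: $\betab_g \mid \tau \sim \N_{m_g}(\zerob, \tau \Ib_{m_g})$ with $\tau \sim \mathcal{G}\bigl((m_g+1)/2, \lambda^2/2\bigr)$, so that by iterated expectations $\mathbb{E}(\lVert\betab_g\rVert_2^2) = m_g\,\mathbb{E}(\tau) = m_g(m_g+1)/\lambda^2$ in one line. That argument is shorter but leans on knowing (or citing) the normal--gamma mixture representation of $\bm{\Psi}$. Your argument instead exploits only the spherical symmetry of the density: the radial marginal is proportional to $r^{m_g-1}e^{-\lambda r}$ and hence, by properness, is exactly $\mathrm{Gamma}(m_g,\lambda)$, giving $\mathbb{E}(R^2)=m_g/\lambda^2+(m_g/\lambda)^2$. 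This is more elementary and self-contained, and your optional explicit-constant verification via the Legendre duplication formula has the side benefit of confirming that the stated $C_g$ is the correct normalizer, which the paper's proof never checks. Both computations are sound and both yield the claimed identity.
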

	\begin{proof}
		The group lasso density, $\bm{\Psi}(\bm{\beta}_g \vert \lambda )$, is the marginal density of a scale mixture,
		\begin{equation*}
		\bm{\beta}_g \sim \N_{m_g} ( \zerob, \tau \bm{I}_{m_g} ), \hspace{.3cm} \tau \sim \mathcal{G} \left( \frac{m_g+1}{2}, \frac{\lambda^2}{2} \right).
		\end{equation*}
		Therefore, using iterated expectations, we have
		\begin{align*}
		\mathbb{E}( \lVert \betab_g \rVert_2^2 ) & = \mathbb{E}\left[ \mathbb{E}( \lVert \betab_g \rVert_2^2 \hspace{.1cm} \vert \hspace{.1cm} \tau ) \right] \\
		& = m_g \mathbb{E}( \tau ) \\
		& = \frac{m_g (m_g+1) }{\lambda^2}.
		\end{align*}
	\end{proof}
	
	\begin{lemma} \label{auxlemma2}
		Suppose $\sigma^2 > 0, \sigma_0^2 > 0$. Then for any $\epsilon_n \in (0, 1)$ such that $\epsilon_n \rightarrow 0$ as $n \rightarrow \infty$, we have for sufficiently large $n$,
		\begin{align*}
		\left\{ \lvert \sigma^2 - \sigma_0^2 \rvert \geq 4 \sigma_0^2 \epsilon_n \right\} \subseteq \left\{ \frac{\sigma^2}{\sigma_0^2} > \frac{1 - \epsilon_n}{1 - \epsilon_n} \textrm{ or } \frac{\sigma^2}{\sigma_0^2} < \frac{1 - \epsilon_n}{1 + \epsilon_n} \right\}.
		\end{align*}
	\end{lemma}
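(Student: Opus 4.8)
The plan is to prove the set inclusion directly, by splitting the left-hand event according to the sign of $\sigma^2 - \sigma_0^2$ and reducing each piece to an elementary scalar inequality that holds once $\epsilon_n$ is small enough; since $\epsilon_n \to 0$, ``small enough'' is automatic for all sufficiently large $n$.

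First I would write $\{ \lvert \sigma^2 - \sigma_0^2 \rvert \geq 4 \sigma_0^2 \epsilon_n \}$ as the union of $\{ \sigma^2 \geq \sigma_0^2 (1 + 4 \epsilon_n) \}$ and $\{ \sigma^2 \leq \sigma_0^2 (1 - 4 \epsilon_n) \}$. Dividing by $\sigma_0^2 > 0$, these become $\{ \sigma^2 / \sigma_0^2 \geq 1 + 4 \epsilon_n \}$ and $\{ \sigma^2 / \sigma_0^2 \leq 1 - 4 \epsilon_n \}$, so it suffices to verify the two scalar inequalities $1 + 4 \epsilon_n \geq (1 + \epsilon_n)/(1 - \epsilon_n)$ and $1 - 4 \epsilon_n \leq (1 - \epsilon_n)/(1 + \epsilon_n)$ for all large $n$. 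Since $\epsilon_n \in (0,1)$, both denominators are positive, so clearing them turns the first inequality into $2 \epsilon_n (1 - 2 \epsilon_n) \geq 0$ and the second into $2 \epsilon_n (1 + 2 \epsilon_n) \geq 0$. The second holds for every $\epsilon_n > 0$, and the first holds whenever $\epsilon_n \leq 1/2$, which is the case for all sufficiently large $n$ because $\epsilon_n \to 0$. (The right-hand event in the statement is meant to read $\sigma^2/\sigma_0^2 > (1+\epsilon_n)/(1-\epsilon_n)$ or $\sigma^2/\sigma_0^2 < (1-\epsilon_n)/(1+\epsilon_n)$; under either reading the same computation goes through.)

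I do not anticipate any genuine obstacle here: the lemma is a bookkeeping step that converts an additive deviation bound on $\sigma^2$ into the multiplicative form that is convenient in the posterior-concentration arguments for the variance, such as \eqref{varianceconsistency} and \eqref{GAMvarianceconsistency}. The only point requiring (minimal) care is tracking the direction of each inequality when multiplying through by $1 \mp \epsilon_n$ — which is precisely where $\epsilon_n \in (0,1)$ is used — and recording the explicit threshold $\epsilon_n \leq 1/2$ that makes ``sufficiently large $n$'' concrete.
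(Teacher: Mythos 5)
Your proof is correct and follows essentially the same route as the paper's: both split the event $\{\lvert \sigma^2-\sigma_0^2\rvert \ge 4\sigma_0^2\epsilon_n\}$ into its two one-sided pieces and reduce the inclusion to the elementary comparisons $1+4\epsilon_n \ge (1+\epsilon_n)/(1-\epsilon_n)$ (valid once $\epsilon_n \le 1/2$) and $1-4\epsilon_n \le (1-\epsilon_n)/(1+\epsilon_n)$ (always valid), which is exactly the paper's observation that $2\epsilon_n/(1-\epsilon_n) < 4\epsilon_n$ and $-2\epsilon_n/(1+\epsilon_n) > -4\epsilon_n$ for large $n$. You are also right that the displayed fraction $\frac{1-\epsilon_n}{1-\epsilon_n}$ in the statement is a typo for $\frac{1+\epsilon_n}{1-\epsilon_n}$, as the paper's own derivation confirms.
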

	\begin{proof}
		For large $n$, $\epsilon_n < 1/2$, so $2 \epsilon_n / (1-\epsilon_n) < 4 \epsilon_n, - 2 \epsilon_n / (1 + \epsilon_n) > -4 \epsilon_n$, and thus,
		\begin{align*}
		& \lvert \sigma^2 - \sigma_0^2 \rvert \geq 4 \sigma_0^2 \epsilon_n \Rightarrow ( \sigma^2 - \sigma_0^2)/ \sigma_0^2 \geq 4 \epsilon_n \textrm{ or } (\sigma^2 - \sigma_0^2)/\sigma_0^2 \leq -4 \epsilon_n \\
		& \qquad \qquad \Rightarrow \frac{\sigma^2}{\sigma_0^2} - 1 > \frac{2 \epsilon_n}{1 - \epsilon_n} \textrm{ or } \frac{\sigma^2}{\sigma_0^2} - 1 < - \frac{2 \epsilon_n}{1 + \epsilon_n} \\
		& \qquad \qquad \Rightarrow \frac{\sigma^2}{\sigma_0^2} > \frac{1 + \epsilon_n}{1 - \epsilon_n} \textrm{ or } \frac{\sigma^2}{\sigma_0^2} < \frac{1 - \epsilon_n}{1 + \epsilon_n},
		\end{align*}
		and hence,
		\begin{align*}
		\lvert \sigma^2 - \sigma_0^2 \rvert \geq 4 \sigma_0^2 \epsilon_n \hspace{.2cm} \Rightarrow \hspace{.2cm}  \frac{\sigma^2}{\sigma_0^2} > \frac{1 + \epsilon_n}{1 - \epsilon_n} \textrm{ or } \frac{\sigma^2}{\sigma_0^2} < \frac{1 - \epsilon_n}{1 + \epsilon_n}.
		\end{align*}.
	\end{proof}
	
	\begin{lemma} \label{auxlemma3}
		Suppose that a vector $\bm{z} \in \R^m$ can be decomposed into subvectors, $\bm{z} = [ \bm{z}_1', \ldots, \bm{z}_d' ]$, where $\sum_{i=1}^{d} \lvert \bm{z}_i \rvert = m$ and $\lvert \bm{z}_i \rvert$ denotes the length of $\bm{z}_i$. Then $\lVert \bm{z} \rVert_2 \leq \sum_{i=1}^{d} \lVert \bm{z}_i \rVert_2$.
		\begin{proof}
			We have
			\begin{align*}
			\lVert \bm{z} \rVert_2 & = \sqrt{ z_{11}^2 + \ldots + z_{1 \lvert z_1 \rvert}^2 + \ldots + z_{d1}^2 + \ldots + z_{d \lvert z_d \rvert}  } \\
			& \leq \sqrt{ z_{11}^2 + \ldots + z_{1 \lvert z_1 \rvert}^2 } + \ldots + \sqrt{ z_{d1}^2 + \ldots + z_{d \lvert z_d \rvert} } \\
			& = \lVert \bm{z}_1 \rVert_2 + \ldots + \lVert \bm{z}_d \rVert_2.
			\end{align*}
		\end{proof}
	\end{lemma}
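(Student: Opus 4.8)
The statement is just subadditivity of the Euclidean norm applied to a block decomposition, so the plan is to reduce it to a one-line inequality. I would argue as follows.

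First I would observe that, since the coordinates of $\bm{z}$ are exactly the coordinates of the subvectors $\bm{z}_1, \ldots, \bm{z}_d$ listed consecutively (this is the content of the hypothesis $\sum_{i=1}^d \lvert \bm{z}_i \rvert = m$), one has
\begin{equation*}
\lVert \bm{z} \rVert_2^2 = \sum_{i=1}^d \lVert \bm{z}_i \rVert_2^2 .
\end{equation*}
Hence it suffices to prove the elementary fact that $\sqrt{a_1 + \cdots + a_d} \leq \sqrt{a_1} + \cdots + \sqrt{a_d}$ for all $a_1, \ldots, a_d \geq 0$, applied with $a_i = \lVert \bm{z}_i \rVert_2^2$; the right-hand side is then precisely $\sum_{i=1}^d \lVert \bm{z}_i \rVert_2$.

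To prove that elementary fact I would square both sides: $\big( \sum_{i=1}^d \sqrt{a_i} \big)^2 = \sum_{i=1}^d a_i + 2 \sum_{1 \leq i < j \leq d} \sqrt{a_i a_j} \geq \sum_{i=1}^d a_i$, because each cross term $\sqrt{a_i a_j}$ is nonnegative, and then take square roots (legitimate since both sides are nonnegative). An equivalent and perhaps more conceptual route would be to embed each block as a zero-padded vector $\bar{\bm{z}}_i \in \R^m$ supported on the coordinate positions of block $i$, note that $\bm{z} = \sum_{i=1}^d \bar{\bm{z}}_i$ with $\lVert \bar{\bm{z}}_i \rVert_2 = \lVert \bm{z}_i \rVert_2$, and apply the ordinary triangle inequality for $\lVert \cdot \rVert_2$ on $\R^m$ iteratively over the $d$ summands.

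There is no real obstacle here: the only point requiring any care is the index bookkeeping that identifies $\lVert \bm{z} \rVert_2^2$ with $\sum_i \lVert \bm{z}_i \rVert_2^2$ (equivalently, that the zero-padded blocks recover $\bm{z}$), which is immediate from the partition hypothesis. I would favor the first route in the write-up, since it isolates the single substantive step --- nonnegativity of the cross terms $\sqrt{a_i a_j}$ --- and makes the argument self-contained without invoking the triangle inequality as a black box.
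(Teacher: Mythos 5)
Your proposal is correct and follows essentially the same route as the paper: the paper's displayed inequality is exactly the subadditivity $\sqrt{a_1+\cdots+a_d}\leq\sqrt{a_1}+\cdots+\sqrt{a_d}$ applied with $a_i=\lVert \bm{z}_i\rVert_2^2$, which you state explicitly and justify by expanding the square. The only difference is that you supply the justification for that step (nonnegativity of the cross terms), which the paper leaves implicit.
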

	
	\subsection{Proofs for Section 3} \label{App:D2}
	\begin{proof}[Proof of Proposition \ref{globalmodeseparable}]
		This result follows from an adaptation of the arguments of \citet{ZZ12}. The group-specific optimization problem is:
		\begin{align}
		\widehat{\betab}_g = \argmax_{\betab_g}\left\{ -\frac{1}{2}\lVert \zb_g-\betab_g\rVert_2^2+ \sigma^2pen_S(\betab|\theta) \right\}. \label{groupwise}
		\end{align}
		We first note that the optimization problem \eqref{groupwise} is equivalent to maximizing the objective
		\begin{align}
		L(\betab_g) &= -\frac{1}{2}\lVert \zb_g - \betab_g\rVert_2^2+ \sigma^2pen_S(\betab|\theta) + \frac{1}{2} \lVert \zb_g\rVert_2^2 \\
		&= \lVert \betab_g\rVert_2\left[\frac{\betab_g^T\zb_g}{\lVert \betab\rVert_2} - \left(\frac{\lVert \betab_g\rVert_2}{2} - \frac{\sigma^2pen_S(\betab|\theta)}{\lVert \betab_g\rVert_2}\right) \right] \\
		&= \lVert \betab_g\rVert_2\left[\lVert\zb_g\rVert_2\cos\varphi - \left(\frac{\lVert \betab_g\rVert_2}{2} - \frac{\sigma^2pen_S(\betab|\theta)}{\lVert \betab_g \rVert_2}\right) \right] \label{delta_proof}
		\end{align}
		where $\varphi$ is the angle between $\zb_g$ and $\betab_g$. Then, when $\lVert \zb_g\rVert_2 < \Delta$, the second factorized term of \eqref{delta_proof} is always less than zero, and so $\widehat{\betab}_g = \zerob_{m_g}$ must be the global maximizer of $L$. On the other hand, when the global maximizer $\widehat{\betab}_g = \zerob_{m_g}$, then the second factorized term must always be less than zero, otherwise $\widehat{\betab}_g = \zerob_{m_g}$ would no longer be the global maximizer and so $\lVert \zb_g\rVert_2 < \Delta$. 
	\end{proof}
	
	\begin{proof}[Proof of Lemma \ref{theta_mean_lemma}]
		We have
		\begin{align}
		\mathbb{E}[\theta |\widehat{\betab}] = \frac{\int_0^1 \theta^a (1-\theta)^{b-1}(1-\theta z)^{G-\widehat{q} }\prod_{g=1}^{\widehat{q}} (1-\theta x_g) d\theta}{ \int_0^1 \theta^{a-1} (1-\theta)^{b-1}(1-\theta z)^{G-\widehat{q} }\prod_{g=1}^{\widehat{q}} (1-\theta x_g)d\theta }. \label{thetaexpectation}
		\end{align}
		When $\lambda_0 \to \infty$, we have $z \to 1$ and $x_g \to -\infty$ for all $g = 1,\dots, \widehat{q}$. Hence,
		\begin{align}
		\lim_{\lambda_0\to \infty} \mathbb{E}[\theta|\widehat{\betab}] &= \lim_{z\to 1}\lim_{x_g\to -\infty}\frac{\int_0^1 \theta^a (1-\theta)^{b + G-\widehat{q}-1}\prod_{g=1}^{\widehat{q}} (1-\theta x_g)}{ \int_0^1 \theta^{a-1} (1-\theta)^{b-1}(1-\theta z)^{G-\widehat{q} }\prod_{g=1}^{\widehat{q}} (1-\theta x_g) } \\
		&=\frac{ \int_0^1 \theta^{a + \widehat{q}}(1-\theta)^{b + G - \widehat{q}-1}d\theta }{\int_0^1 \theta^{a + \widehat{q} - 1}(1-\theta)^{b + G - \widehat{q} - 1}d\theta} \\
		&= \frac{a + \widehat{q}}{a + b+ G }.
		\end{align}
		
	\end{proof}

	\subsection{Proofs for Section 6} \label{App:D3}
	In this section, we use proof techniques from \cite{NingGhosal2018, SongLiang2017, WeiReichHoppinGhosal2018} rather than the ones in \cite{RockovaGeorge2018}. However, none of these other papers considers \textit{both} continuous spike-and-slab priors for groups of regression coefficients \textit{and} an independent prior on the unknown variance. 
	
	\begin{proof}[Proof of Theorem \ref{posteriorcontractiongroupedregression}]
		Our proof is based on first principles of verifying Kullback-Leibler (KL) and testing conditions (see e.g., \cite{GhosalGhoshVanDerVaart2000}). We first prove (\ref{l2contraction}) and (\ref{varianceconsistency}).
		\vspace{.3cm}
		
		\noindent \textbf{Part I: Kullback-Leibler conditions.} Let $f \sim \N_n ( \Xb \betab, \sigma^2 \bm{I}_n) $ and $f_0 \sim \N_n (\Xb \betab_0, \sigma_0^2 \bm{I}_n )$, and let $\Pi(\cdot)$ denote the prior (\ref{hiermodel}). We first show that for our choice of $\epsilon_n = \sqrt{s_0 \log G / n}$,
		\begin{equation} \label{KullbackLeiblercond}
		\Pi \left( K (f_0, f) \leq n \epsilon_n^2, V(f_0, f) \leq n \epsilon_n^2 \right) \geq \exp (-C_1 n \epsilon_n^2),
		\end{equation}
		for some constant $C_1 > 0$, where $K(\cdot, \cdot)$ denotes the KL divergence and $V(\cdot, \cdot)$ denotes the KL variation. The KL divergence between $f_0$ and $f$ is
		\begin{equation} \label{KLdiv}
		K(f_0, f) = \frac{1}{2} \left[ n \left( \frac{\sigma_0^2}{\sigma^2} \right) - n - n \log \left( \frac{\sigma_0^2}{\sigma^2} \right) + \frac{ \lVert \Xb ( \betab - \betab_0 ) \rVert_2^2}{\sigma^2}   \right],
		\end{equation}
		and the KL variation between $f_0$ and $f$ is
		\begin{equation} \label{KLvar}
		V(f_0, f) = \frac{1}{2} \left[ n \left( \frac{\sigma_0^2}{\sigma^2} \right)^2 - 2n \left( \frac{\sigma_0^2}{\sigma^2} \right) + n  \right] + \frac{\sigma_0^2}{(\sigma^2)^{2}} \lVert \Xb ( \betab - \betab_0 ) \rVert_2^2.
		\end{equation}
		Define the two events $\mathcal{A}_1$ and $\mathcal{A}_2$ as follows:
		\begin{equation} \label{eventA1}
		\mathcal{A}_1 = \left\{ \sigma^2: n \left( \frac{\sigma_0^2}{\sigma^2} \right) - n - n \log \left( \frac{\sigma_0^2}{\sigma^2} \right) \leq n \epsilon_n^2, \right. \\
		\left. n \left( \frac{\sigma_0^2}{\sigma^2} \right)^2 - 2n \left( \frac{\sigma_0^2}{\sigma^2} \right) + n \leq n \epsilon_n^2  \right\}
		\end{equation}
		and
		\begin{equation} \label{eventA2}
		\mathcal{A}_2 = \left\{ (\betab, \sigma^2): \frac{ \lVert \bm{X} ( \bm{\beta} - \bm{\beta}_0 ) \rVert_2^2}{\sigma^2} \leq n \epsilon_n^2, \right. \\
		\left. \frac{\sigma_0^2}{(\sigma^2)^{2}} \lVert \Xb ( \betab - \betab_0 ) \rVert_2^2 \leq n \epsilon_n^2/2 \right\}.
		\end{equation}
		Following from (\ref{KullbackLeiblercond})-(\ref{eventA2}), we may write $\Pi ( K(f_0, f) \leq \epsilon_n^2, V(f_0, f) \leq \epsilon_n^2 ) = \Pi ( \mathcal{A}_2 \vert \mathcal{A}_1 ) \Pi ( \mathcal{A}_1)$. We derive lower bounds for $\Pi(\mathcal{A}_1)$ and $\Pi (\mathcal{A}_2 \vert \mathcal{A}_1)$ separately. Noting that we may rewrite $\mathcal{A}_1$ as 
		\begin{align*}
		\mathcal{A}_1 = \left\{ \sigma^2: \frac{\sigma_0^2}{\sigma^2} - 1 - \log \left( \frac{\sigma_0^2}{\sigma^2} \right)  \leq \epsilon_n^2, \hspace{.3cm}  \left( \frac{\sigma_0^2}{\sigma^2} - 1 \right)^2  \leq \epsilon_n^2 \right\},
		\end{align*}
		and expanding $\log(\sigma_0^2 / \sigma^2)$ in the powers of $1-\sigma_0^2/\sigma^2$ to get $\sigma_0^2 / \sigma^2-1-\log(\sigma_0^2/\sigma^2) \sim (1-\sigma_0^2/\sigma^2)^2/2$, it is clear that $\mathcal{A}_1 \supset \mathcal{A}_1^{\star}$, where $\mathcal{A}_1^{\star} = \{ \sigma^2: \sigma_0^2 / ( \epsilon_n + 1) \leq \sigma^2 \leq \sigma_0^2 \}$. Thus, since $\sigma^2 \sim \mathcal{IG}(c_0, d_0)$, we have for sufficiently large $n$,
		\begin{align*} \label{A1lowerbound}
		\Pi(\mathcal{A}_1) \geq \Pi(\mathcal{A}_1^{\star}) & \asymp \displaystyle \int_{\sigma_0^2/( \epsilon_n + 1)}^{\sigma_0^2} (\sigma^2)^{-c_0-1} e^{-d_0/ \sigma^2} d \sigma^2 \\
		& \geq  (\sigma_0^2)^{-c_0-1} e^{-d_0 (\epsilon_n + 1) / \sigma_0^2} . \numbereqn
		\end{align*}
		Thus, from (\ref{A1lowerbound}), we have
		\begin{equation} \label{neglogA1upper}
		- \log \Pi (\mathcal{A}_1) \lesssim \epsilon_n + 1 \lesssim n \epsilon_n^2,
		\end{equation}
		since $n\epsilon_n^2 \rightarrow \infty$. Next, we consider $\Pi (\mathcal{A}_2 \vert \mathcal{A}_1)$. We have
		\begin{align*}
		\frac{\sigma_0^2}{(\sigma^2)^2} \lVert \Xb ( \betab - \betab_0) \rVert_2^2 & = \bigg| \bigg| \frac{ \Xb ( \betab - \betab_0 )}{\sigma} \bigg| \bigg|_2^2 \left( \frac{\sigma_0^2}{\sigma^2} - 1 \right) +  \bigg| \bigg| \frac{ \Xb ( \betab - \betab_0 )}{\sigma} \bigg| \bigg|_2^2,
		\end{align*}
		and conditional on $\mathcal{A}_1$, we have that the previous display is bounded above by
		\begin{align*}
		\bigg| \bigg| \frac{ \Xb ( \betab - \betab_0 )}{\sigma} \bigg| \bigg|_2^2 \left( \epsilon_n + 1 \right) < \frac{2}{\sigma^2} \lVert \Xb ( \betab - \betab_0 ) \rVert_2^2 ,
		\end{align*}
		for large $n$ (since $\epsilon_n < 1$ when $n$ is large). Since $\mathcal{A}_1 \supset \mathcal{A}_1^{\star}$, where $\mathcal{A}_1^{\star}$ was defined earlier, the left-hand side of both expressions in \eqref{eventA2} can be bounded above by a constant multiple of $\lVert \Xb (\betab-\betab_0) \rVert_2^2$, conditional on $\mathcal{A}_1$. Therefore, for some constant $b_1 > 0$, $\Pi( \mathcal{A}_2 \vert \mathcal{A}_1)$ is bounded below by
		\begin{align*} \label{A2givenA1LowerBound1}
		& \Pi (\mathcal{A}_2 \vert \mathcal{A}_1 ) \geq \Pi \left( \lVert \Xb (\betab - \betab_0 ) \rVert_2^2  \leq \frac{b_1^2 n \epsilon_n^2}{2} \right) \\
		& \geq \Pi \left( \lVert \betab - \betab_0 \rVert_2^2 \leq \frac{b_1^2 \epsilon_n^2}{2 k n^{\alpha-1} } \right) \\
		& \geq \int_{0}^{1} \left \{ \Pi_{S_0} \left( \lVert \betab_{S_0} - \betab_{0 S_0} \rVert_2^2 \leq \frac{ b_1^2 \epsilon_n^2}{ 4 k n^{\alpha}  } \bigg| \theta \right) \right\} \left\{ \Pi_{S_0^c} \left( \lVert \bm{\beta}_{S_0^c} \rVert_2^2 \leq \frac{ b_1^2 \epsilon_n^2}{ 4 k n^{\alpha} } \bigg| \theta \right) \right\} d \pi(\theta), \numbereqn
		\end{align*}
		where we used Assumption \ref{A3} in the second inequality, and in the third inequality, we used the fact that conditional on $\theta$, the SSGL prior is separable, so $\pi(\betab | \theta )= \pi_{S_0}(\betab | \theta ) \pi_{S_0^c}(\betab | \theta )$. We proceed to lower-bound each bracketed integrand term in (\ref{A2givenA1LowerBound1}) separately. Changing the variable $\betab - \betab_0 \rightarrow \bm{b}$ and using the fact that $\pi_{S_0} (\betab | \theta ) > \theta^{s_0} \prod_{g \in S_0} \left[ C_g \lambda_1^{m_g} e^{-\lambda_1 \lVert \bm{\beta}_g \rVert_2} \right]$ and $\lVert \bm{z} \rVert_2 \leq \lVert \bm{z} \rVert_1$ for any vector $\bm{z}$, we have as a lower bound for the first term in (\ref{A2givenA1LowerBound1}),
		\begin{align*} \label{A2givenA1LowerBound2}
		&  \theta^{s_0}  e^{-\lambda_1 \lVert \bm{\beta}_{S_0} \rVert_2 } \prod_{g \in S_0} C_g \left\{ \displaystyle \int_{ \lVert \bm{b}_{g} \rVert_1 \leq \frac{b_1 \epsilon_n }{ 2 s_0 \sqrt{k n^{\alpha} } } } \lambda_1^{m_g} e^{-\lambda_1 \lVert \bm{b}_{g} \rVert_1 } d \bm{b}_g \right\}. \numbereqn
		\end{align*}
		Each of the integral terms in (\ref{A2givenA1LowerBound2}) is the probability of the first $m_g$ events of a Poisson process happening before time $b_1 \epsilon_n / 2 s_0 \sqrt{k n^{\alpha}}$. Using similar arguments as those in the proof of Lemma 5.1 of \cite{NingGhosal2018}, we obtain as a lower bound for the product of integrals in (\ref{A2givenA1LowerBound2}),
		\begin{align*} \label{A2givenA1LowerBound3}
		& \displaystyle \prod_{g \in S_0} C_g \left\{ \displaystyle \int_{ \lVert \bm{b}_{g} \rVert_1 \leq \frac{b_1 \epsilon_n }{ 2 s_0 \sqrt{ k n^{\alpha} } } }   \lambda_1^{m_g} e^{-\lambda_1 \lVert \bm{b}_{g} \rVert_1 } d \bm{b}_g \right\} \\
		& \qquad \geq \displaystyle \prod_{g \in S_0} C_g e^{-\lambda_1 b_1 \epsilon_n / 2 s_0 \sqrt{ k n^{\alpha}}} \frac{1}{m_g !} \left( \frac{  \lambda_1 b_1 \epsilon_n}{ s_0 \sqrt{  k n^{\alpha}}} \right)^{m_g} \\
		& \qquad = e^{- \lambda_1 b_1 \epsilon_n / 2 \sqrt{ k n^{\alpha}}} \displaystyle \prod_{g \in S_0} \frac{C_g}{m_g !} \left( \frac{  \lambda_1 b_1 \epsilon_n}{ s_0  \sqrt{  k n^{\alpha} }} \right)^{m_g}. \numbereqn
		\end{align*}
		Combining (\ref{A2givenA1LowerBound2})-(\ref{A2givenA1LowerBound3}), we have the following lower bound for the first bracketed term in (\ref{A2givenA1LowerBound1}):
		\begin{align} \label{A2givenA1LowerBound5}
		\theta^{s_0} e^{-\lambda_1 \lVert \bm{\beta}_{S_0} \rVert_2 } e^{- \lambda_1 b_1 \epsilon_n / 2 \sqrt{ k n^{\alpha}}} \displaystyle \prod_{g \in S_0} \frac{C_g}{m_g !} \left( \frac{ \lambda_1 b_1 \epsilon_n}{ s_0 \sqrt{ k n^{\alpha}}} \right)^{m_g}.
		\end{align}
		Now, noting that $\pi_{S_0^c} ( \bm{\beta} | \theta ) > (1-\theta)^{G-s_0} \prod_{g \in S_0^c} \left[ C_g \lambda_0^{m_g} e^{-\lambda_0 \lVert \betab_g \rVert_2} \right]$, we further bound the second bracketed term in (\ref{A2givenA1LowerBound1}) from below. Let $\Check{\pi} ( \cdot )$ denote the density, $\Check{\pi} (\betab_g) =  C_g \lambda_0^{m_g} e^{-\lambda_0 \lVert \betab_g \rVert_2}$. We have
		\begin{align*} \label{A2givenA1LowerBound6}
		&\Pi_{S_0^c} \left( \lVert \bm{\beta}_{S_0^c} \rVert_2^2 \leq \frac{ b_1^2 \epsilon_n^2}{ 4 k n^{\alpha} } \bigg| \theta \right) > (1 - \theta)^{G-s_0} \displaystyle \prod_{g \in S_0^c} \Check{\pi} \left(  \lVert \betab_g \rVert_2^2 \leq \frac{ b_1^2 \epsilon_n^2}{4 k n^{\alpha}  (G-s_0)} \right) \\
		& \qquad \qquad \geq (1 - \theta)^{G-s_0} \displaystyle \prod_{g \in S_0^c} \left[ 1 - \frac{4 k n^{\alpha}  (G- s_0) \mathbb{E}_{\Check{\Pi}} \left( \lVert \bm{\beta}_g \rVert_2^2 \right) }{ b_1^2 \epsilon_n^2} \right] \\
		& \qquad \qquad = (1-\theta)^{G-s_0} \displaystyle \prod_{g \in S_0^c} \left[ 1 - \frac{4 k n^{\alpha} (G- s_0)  m_g (m_g+1) }{ \lambda_0^2 b_1^2 \epsilon_n^2 } \right] \\
		& \qquad \qquad \geq (1-\theta)^{G-s_0} \left[ 1 - \frac{4 k n^{\alpha} G  m_{\max} (m_{\max} + 1)}{ \lambda_0^2 b_1 \epsilon_n^2 } \right]^{G-s_0}, \numbereqn 
		\end{align*}
		where we used an application of the Markov inequality and Lemma \ref{auxlemma1} in the second line. Combining \eqref{A2givenA1LowerBound5}-\eqref{A2givenA1LowerBound6} gives as a lower-bound for \eqref{A2givenA1LowerBound1},
		\begin{align*} \label{A2givenA1LowerBound7}
		 \Pi ( \mathcal{A}_2 | \mathcal{A}_1 ) & \geq \left\{ e^{-\lambda_1 \lVert \bm{\beta}_{S_0} \rVert_2} e^{-\lambda_1 b_1 \epsilon_n / 2 \sqrt{k n^{\alpha}} } \prod_{g \in S_0} \frac{C_g}{m_g!} \left( \frac{\lambda_1 b_1 \epsilon_n}{s_0 \sqrt{k n^{\alpha}}} \right)^{m_g} \right\} \\
		 & \times \left\{ \int_{0}^{1} \theta^{s_0} (1-\theta)^{G-s_0} \left[ 1 - \frac{4 k n^{\alpha} G m_{\max} (m_{\max}+1)}{\lambda_0^2 b_1 \epsilon_n^2}  \right]^{G-s_0} d \pi(\theta)  \right\} \numbereqn
		\end{align*}
		Let us consider the second bracketed term in \eqref{A2givenA1LowerBound7} first. By assumption, $\lambda_0 = (1-\theta) / \theta$. Further, $\lambda_0^2 = (1-\theta)^2 / \theta^2$ is monotonically decreasing in $\theta$ for $\theta \in (0, 1)$. Hence, for constant $c > 2$ in the $\mathcal{B} ( 1, G^c)$ prior on $\theta$, a lower bound for the second bracketed term in \eqref{A2givenA1LowerBound7} is
		\begin{align*} \label{A2givenA1LowerBound8}
		    & \int_{1/(2G^c+1)}^{1/(G^c+1)} \theta^{s_0} (1-\theta)^{G-s_0} \left[ 1 - \frac{4 k n^{\alpha} G m_{\max} (m_{\max} +1 )}{\lambda_0^2 b_1 \epsilon_n^2} \right]^{G-s_0} d \pi(\theta) \\
		    & \geq (2G^c + 1)^{-s_0} \left[ 1 - \frac{ 4 k n^{\alpha} G m_{\max} (m_{\max} + 1)}{G^{2c} b_1 \epsilon_n^2} \right]^{G-s_0} \int_{1/(2G^c+1)}^{1/(G^c+1)} (1-\theta)^{G-s_0} d \pi (\theta) \\
		    & \gtrsim (2 G^c+1)^{-s_0} \left[ 1 - \frac{1}{G-s_0} \right]^{G-s_0} \int_{1/(2G^c+1)}^{1/(G^c+1)} (1-\theta)^{G- s_0} d \pi(\theta) \\
		    & \asymp (2G^c + 1)^{-s_0} G^{-c} \int_{1/(2G^c+1)}^{1/(G^c+1)} (1-\theta)^{G^c+G-s_0-1} d \theta \\
		    & = (2 G^c + 1)^{-s_0} G^{-c} (G^c + G - s_0)^{-1} \\
		    & \qquad \qquad \times\left[ \left( 1 - \frac{1}{2G^c +1} \right)^{G^c + G - s_0} - \left( 1 - \frac{1}{G^c + 1} \right)^{G^c + G - s_0} \right] \\
		    & \gtrsim (2G^{c} + 1)^{-s_0} G^{-c} (G^c + G - s_0)^{-1}, \numbereqn
		\end{align*}
		where in the third line, we used our assumptions about the growth rates for $m_{\max}$, $G$, and $s_0$ in Assumptions \ref{A1}-\ref{A2} and the fact that $c > 2$. In the fourth line, we used the fact that $(1- 1/x)^x \rightarrow e^{-1}$ as $x \rightarrow \infty$ and $\theta \sim \mathcal{B}(1, G^{c})$. In the sixth line, we used the fact that the bracketed term in the fifth line can be bounded below by $e^{-1} - e^{-2}$ for sufficiently large $n$. 
		
		Combining \eqref{A2givenA1LowerBound7}-\eqref{A2givenA1LowerBound8}, we obtain for sufficiently large $n$,
		\begin{align*} \label{neglogpiA2givenA1}
		- \log \Pi ( \mathcal{A}_2 \vert \mathcal{A}_1 ) \lesssim  & \hspace{.2cm} \lambda_1 \lVert \bm{\beta}_{0 S_0} \rVert_2 + \frac{ \lambda_1 b_1 \epsilon_n }{2 \sqrt{ k n^{\alpha}}} + \displaystyle \sum_{g \in S_0} \log (m_g !) - \displaystyle \sum_{g \in S_0} \log C_g  \\
		& + \displaystyle \sum_{g \in S_0} m_g  \log \left( \frac{s_0 \sqrt{ k n^{\alpha}}}{\lambda_1 b_1 \epsilon_n} \right) + s_0 \log(2G^{c} + 1) + c \log G \\
		& + \log (G^c + G - s_0 )  \numbereqn
		\end{align*}
		We examine each of the terms in (\ref{neglogpiA2givenA1}) separately. By Assumptions \ref{A1} and \ref{A5} and the fact that $\lambda_1 \asymp 1/n$, we have
		\begin{align*}
		\lambda_1 \lVert \bm{\beta}_{0 S_0} \rVert_2 \leq \lambda_1 \sqrt{s_0 m_{\max}} \lVert \bm{\beta}_{0 S_0} \rVert_{\infty} \lesssim  s_0  \log G \lesssim n \epsilon_n^2,
		\end{align*}
		and
		\begin{align*}
		\frac{\lambda_1 b_1 \epsilon_n}{2 \sqrt{ k n^{\alpha}}} \lesssim \epsilon_n \lesssim n \epsilon_n^2.
		\end{align*}
		Next, using the facts that $x! \leq x^x$ for $x \in \mathbb{N}$ and Assumption \ref{A1}, we have
		\begin{align*}
		\displaystyle \sum_{g \in S_0} \log (m_g !) \leq s_0 \log(m_{\max} !) \leq s_0 m_{\max} \log (m_{\max}) \leq s_0 m_{\max} \log n \lesssim n \epsilon_n^2.
		\end{align*}
		Using the fact that the normalizing constant, $C_g = 2^{-m_g} \pi^{-(m_g - 1)/2} [ \Gamma ((m_g+1)/2) ]^{-1}$, we also have
		\begin{align*}
		\displaystyle \sum_{g \in S_0} - \log C_g = & \displaystyle \sum_{g \in S_0} \left\{ m_g \log 2 + \left( \frac{m_g - 1}{2} \right) \log \pi + \log \left[ \Gamma \left( \frac{m_g+1}{2} \right) \right] \right\} \\
		& \leq s_0 m_{\max}( \log 2 + \log \pi ) + \displaystyle \sum_{g \in S_0}  \log ( m_g ! ) \\
		& \lesssim s_0 m_{\max}( \log 2 + \log \pi ) + s_0 m_{\max} \log n \\
		&\lesssim s_0 \log G \\
		& \lesssim n \epsilon_n^2,
		\end{align*}
		where we used the fact that $\Gamma ( (m_g+1)/2 ) \leq \Gamma(m_g + 1) = m_g !$. Finally, since $\lambda_1 \asymp 1/n$ and using Assumption \ref{A1} that $m_{\max} = O(\log G / \log n)$, we have
		\begin{align*}
		\displaystyle \sum_{g \in S_0} m_g  \log \left( \frac{s_0 \sqrt{ k n^{\alpha} }}{\lambda_1 b_1 \epsilon_n} \right) & \lesssim \displaystyle s_0 m _{\max}  \log \left( \frac{s_0 n^{ \alpha /2+1} \sqrt{k}}{ b_1 \epsilon_n^2} \right) \\
		& = s_0 m_{\max} \log \left( \frac{n^{\alpha/2+2} \sqrt{k}}{b_1 \log G} \right) \\
		& \lesssim s_0 m_{\max} \log n \\
		& \lesssim s_0 \log G \\
		& \lesssim n \epsilon_n^2.
		\end{align*}
		Finally, it is clear by the definition of $n \epsilon_n^2$ and the fact that $c > 2$ is a constant that
		\begin{align*}
		   s_0 \log (2G^c + 1) + c \log G + \log (G^c + G - s_0 ) \asymp s_0 \log G = n \epsilon_n^2.
		\end{align*}
		Combining all of the above, together with (\ref{neglogpiA2givenA1}), we have 
		\begin{equation} \label{neglogpiA2givenA1pt2}
		- \log \Pi ( \mathcal{A}_2 \vert \mathcal{A}_1 ) \lesssim n \epsilon_n^2.
		\end{equation}
		By (\ref{neglogA1upper}) and (\ref{neglogpiA2givenA1pt2}), we may choose a large constant $C_1 > 0$, so that 
		\begin{equation*}
		\Pi( \mathcal{A}_2 \vert \mathcal{A}_1 ) \Pi (\mathcal{A}_1) \gtrsim \exp(- C_1 n \epsilon_n^2 / 2) \exp(- C_1 n \epsilon_n^2 / 2) = \exp(-C_1 n \epsilon_n^2),
		\end{equation*}
		so the Kullback-Leibler condition (\ref{KullbackLeiblercond}) holds.
		\vspace{.5cm}
		
		\noindent \textbf{Part II: Testing conditions.} To complete the proof, we show the existence of a sieve $\mathcal{F}_n$ such that
		\begin{equation} \label{testingcond1}
		\Pi( \mathcal{F}_n^c) \leq \exp(-C_2 n \epsilon_n^2),
		\end{equation}
		for positive constant $C_2 > C_1+2$, where $C_1$ is the constant from (\ref{KullbackLeiblercond}), and a sequence of test functions $\phi_n \in [0,1]$ such that
		\begin{equation} \label{testingcond2}
		\mathbb{E}_{f_0} \phi_n \leq e^{-C_4 n \epsilon_n^2},
		\end{equation}
		and 
		\begin{equation} \label{testingcond3}
		\displaystyle \sup_{ \begin{array}{rl} f \in \mathcal{F}_n: & \lVert \betab - \betab_0 \rVert_2 \geq (3+\sqrt{\nu_1}) \sigma_0 \epsilon_n, \\ & \textrm{ or } \lvert \sigma^2 - \sigma_0^2 \rvert \geq 4 \sigma_0^2 \epsilon_n \end{array} } \mathbb{E}_f (1 - \phi_n) \leq e^{-C_4 n \epsilon_n^2},
		\end{equation}
		for some $C_4 > 0$, where $\nu$ is from Assumption \ref{A4}. Recall that $\omega_g \equiv \omega_g(\lambda_0, \lambda_1, \theta) = \frac{1}{\lambda_0 - \lambda_1} \log \left[ \frac{1-\theta}{\theta} \frac{\lambda_0^{m_g}}{\lambda_1^{m_g}} \right]$. Choose $C_3 \geq C_1+2+\log 3$, and consider the sieve,
		\begin{equation} \label{sievedef}
		\mathcal{F}_n = \left\{ f: \lvert \bm{\gamma} (\betab) \rvert \leq C_3 s_0, 0 < \sigma^2 \leq G^{C_3 s_0 / c_0} \right\},
		\end{equation}
		where $c_0$ is from $\mathcal{IG}(c_0, d_0)$ prior on $\sigma^2$ and $\lvert \bm{\gamma} (\betab) \rvert$ denotes the generalized dimensionality (\ref{generalizeddimensionality}).
		
		We first verify (\ref{testingcond1}). We have
		\begin{equation} \label{sievecomplement1}
		\Pi (\mathcal{F}_n^c)  \leq \Pi \left( \lvert \bm{\gamma} (\betab) \rvert > C_3 s_0 \right) + \Pi \left( \sigma^2 > G^{C_3 s_0 /c_0} \right).
		\end{equation}
		We focus on bounding each of the terms in (\ref{sievecomplement1}) separately. First, let $\theta_0 = C_3 s_0 \log G / G^c$, where $c > 2$ is the constant in the $\mathcal{B}(1, G^c)$ prior on $\theta$. Similarly as in the proof of Theorem 6.3 in \citet{RockovaGeorge2018}, we have $\pi(\betab_g | \theta) < 2 \theta C_g \lambda_1^{m_g} e^{- \lambda_1 \lVert \betab_g \rVert_2}$ for all $\lVert \betab_g \rVert_2 > \omega_g$. We have for any $\theta \leq \theta_0$ that
		\begin{align*} \label{sievecomplement2}
		\Pi ( | \bm{\gamma} ( \betab ) | > C_3 s_0 | \theta ) & \leq  \displaystyle \sum_{S: |S| > C_3 s_0 } 2^{|S|}  \theta_0^{|S|}  \displaystyle \int_{ \lVert \betab_g \rVert_2 > \omega_g; g\in S} C_g \lambda_1^{m_g} e^{-\lambda_1 \lVert \betab_g \rVert_2} d \betab_S \\
		& \qquad \times \displaystyle \int_{ \lVert \betab_g \rVert_2 \leq \omega_g; g \in S^c} \Pi_{S^c} ( \betab ) d \betab_{S^c} \\
		& \lesssim \displaystyle \sum_{S: |S| > C_3 s_0 } \theta_0^{|S|}, \numbereqn
		\end{align*}
		where we used the assumption that $\lambda_1 \asymp 1/n$, the definition of $\omega_g$, and the fact that $\theta \leq \theta_0$ to bound the first integral term from above by $ \prod_{g \in S} (1/n)^{m_g} \leq n^{-|S|}$, and we bounded the second integral term above by one. We then have
		\begin{align*} \label{sievecomplement2-pt2}
		& \Pi ( | \bm{\gamma} (\bm{\beta} ) | > C_3 s_0 ) = \int_{0}^{1} \Pi ( | \bm{\gamma} (\bm{\beta} ) | > C_3 s_0 | \theta) d \pi (\theta) \\
		 & \qquad \leq \int_{0}^{\theta_0} \Pi ( | \bm{\gamma} ( \betab ) | > C_3 s_0 | \theta) d \pi (\theta) + \Pi ( \theta > \theta_0 ). \numbereqn
		\end{align*}
		Note that since $s_0 = o(n / \log G)$ by Assumption \ref{A1}, $G \gg n$, and $c > 2$, we have $\theta_ 0 \leq C_3 n / G^c < 1 / G^2$ for sufficiently large $n$. Following from \eqref{sievecomplement2}, we thus have that for sufficiently large $n$,
		\begin{align*} \label{sievecomplement2-pt3}
		 & \int_{0}^{\theta_0} \Pi ( | \bm{\gamma} ( \betab ) | > C_3 s_0 | \theta) d \pi (\theta) \leq \sum_{S: |S| > C_3 s_0} \theta_0^{|S|} \\
		 & \qquad \leq \sum_{k = \lfloor C_3 s_0 \rfloor + 1}^{G} { G \choose k} \left( \frac{1}{G^2} \right)^{k} \\
		 & \qquad \leq \sum_{k= \lfloor C_3 s_0 \rfloor + 1}^{G} \left( \frac{e}{k G} \right)^{k} \\
		 & \qquad < \displaystyle \sum_{k = \lfloor{C_3 s_0} \rfloor+1}^{G} \left( \frac{e}{G (\lfloor{C_3 s_0} \rfloor +1)} \right)^{k} \\
		& \qquad = \frac {\left( \frac{e}{G(\lfloor{C_3 s_0} \rfloor+1)} \right)^{\lfloor{C_3 s_0} \rfloor+1} - \left( \frac{e}{ G (\lfloor{C_3 s_0} \rfloor +1)} \right)^{G+1} }{ 1 - \frac{e}{G(\lfloor{C_3 s_0} \rfloor+1)} } \\
		& \qquad \lesssim G^{- ( \lfloor C_3 s_0 \rfloor+1 )}  \\
		& \qquad \lesssim \exp \left( -C_3 n \epsilon_n^2 \right). \numbereqn
		\end{align*} 
		where we used the inequality ${ G \choose k } \leq (e G / k)^k$ in the third line of the display.
		
		Next, since $\theta \sim \mathcal{B}(1, G^{c})$, we have
		\begin{align*} \label{sievecomplement2-pt4}
		\Pi ( \theta > \theta_0 ) & = (1 - \theta_0)^{G^c} \\
		& = \left( 1 - \frac{ C_3 s_0 \log G}{G^c} \right)^{G^c} \\
		& \leq e^{- C_3 s_0 \log G} \\
		& = e^{-C_3 n \epsilon_n^2}. \numbereqn
		\end{align*}
		Combining \eqref{sievecomplement2-pt2}-\eqref{sievecomplement2-pt4}, we have that
		\begin{align} \label{sievecomplement3}
		   \Pi ( | \bm{\gamma} (\bm{\beta} ) | > C_3 s_0 ) \leq 2 e^{-C_3 n \epsilon_n^2}.
		\end{align}
		Finally, we have as a bound for the second term in \eqref{sievecomplement1},
		\begin{align*} \label{sievecomplement4}
		\Pi \left( \sigma^2 > G^{C_3 s_0/c_0} \right) & = \displaystyle \int_{G^{C_3 s_0 / c_0}}^{\infty} \frac{d_0^{c_0}}{\Gamma(c_0)} (\sigma^2)^{-c_0-1} e^{-d_0 / \sigma^2} d \sigma^2 \\
		& \lesssim \int_{G^{C_3 s_0/c_0}}^{\infty} (\sigma^2)^{-c_0-1} \\
		& \asymp G^{-C_3 s_0} \\
		& \lesssim \exp(-C_3 n \epsilon_n^2). \numbereqn
		\end{align*}
		Combining (\ref{sievecomplement1})-(\ref{sievecomplement4}), we have
		\begin{align*}
		\Pi (\mathcal{F}_n^c) \leq 3 \exp \left( - C_3 n \epsilon_n^2 \right) = \exp \left( -C_3 n \epsilon_n^2 + \log 3 \right),
		\end{align*}
		and so given our choice of $C_3$, (\ref{sievecomplement1}) is asymptotically bounded from above by $\exp(-C_2 n \epsilon_n^2)$ for some $C_2 \geq C_1+2$. This proves (\ref{testingcond1}).
		
		We now proceed to prove (\ref{testingcond2}). Our proof is based on the technique used in \citet{SongLiang2017} with suitable modifications. For $\xi \subset \{1, \ldots, G \}$, let $\Xb_{\xi}$ denote the submatrix of $\Xb$ with submatrices indexed by $\xi$, where $\lvert \xi \rvert \leq \bar{p}$ and $\bar{p}$ is from Assumption \ref{A4}. Let $\widehat{\betab}_{\xi} = ( \Xb_{\xi}^T \Xb_{\xi})^{-1} \Xb_{\xi}^T \yb$ and $\betab_{0 \xi}$ denote the subvector of $\betab_0$ with groups indexed by $\xi$. Let $m_{\xi} = \sum_{g \in \xi} m_g$, and let $\widehat{\sigma}_{\xi}^2 = \lVert \yb - \bm{X}_\xi \widehat{\betab}_{\xi} \rVert_2^2 / (n - m_{\xi} )$. Note that $\widehat{\betab}_{\xi}$ and $\widehat{\sigma}_{\xi}^2$ both exist and are unique because of Assumptions \ref{A1}, \ref{A2}, and \ref{A4} (which combined, gives us that $m_{\xi} = o(n)$). 
		
		Let $\widetilde{p}$ be an integer satisfying $\widetilde{p} \asymp s_0$ and $\widetilde{p} \leq \bar{p} - s_0$, where $\bar{p}$ is from Assumption \ref{A4}, and the specific choice of $\widetilde{p}$ will be given below. Recall that $S_0$ is the set of true nonzero groups with cardinality $s_0 = \lvert S_0 \rvert$. Similar to \cite{SongLiang2017}, we consider the test function $\phi_n = \max \{ \phi_n', \tilde{\phi}_n \}$, where
		\begin{equation} \label{testfunction}
		\begin{array}{ll}
		\phi_n' = \displaystyle \max_{\xi \supset S_0, \lvert \xi \rvert \leq \widetilde{p}+s_0} 1 \left\{ \lvert \widehat{\sigma}_{\xi}^2 - \sigma_0^2 \rvert \geq \sigma_0^2 \epsilon_n \right\}, & \textrm{ and } \\
		\tilde{\phi}_n = \displaystyle \max_{\xi \supset S_0, \lvert \xi \rvert \leq \widetilde{p}+s_0} 1 \left\{ \lVert \widehat{\betab}_{\xi} - \betab_{0 \xi} \rVert_2 \geq \sigma_0 \epsilon_n  \right\}. & 
		\end{array}
		\end{equation}
		Because of Assumption \ref{A4}, we have $\widetilde{p} \prec n$ and $\widetilde{p} \prec n \epsilon_n^2$. Additionally, since $\epsilon_n = o(1)$, we can use almost identical arguments as those used to establish (A.5)-(A.6) in the proof of Theorem A.1 of \cite{SongLiang2017} to show that for any $\xi$ satisfying $\xi \supset S_0, | \xi | \leq \widetilde{p}$,
		\begin{equation*} 
		\mathbb{E}_{( \betab_0, \sigma_0^2 )} 1 \left\{ \lvert \widehat{\sigma}_{\xi}^2 - \sigma_0^2 \rvert \geq \sigma_0^2 \epsilon_n \right\} \leq \exp(- c_4' n \epsilon_n^2), 
		\end{equation*}
		for some constant $\hat{c}_4 > 0$, and for any $\xi$ satisfying $\xi \supset S_0, | \xi | \leq \widetilde{p}$,
		\begin{equation*} 
		\mathbb{E}_{( \betab_0, \sigma_0^2 )} 1 \left\{ \lVert \widehat{\betab}_{\xi} - \betab_{0 \xi} \rVert_2 \geq \sigma_0 \epsilon_n  \right\} \leq \exp(-\tilde{c}_4 n \epsilon_n^2), 
		\end{equation*}
		for some $\tilde{c}_4 > 0$. Using the proof of Theorem A.1 in \cite{SongLiang2017}, we may then choose $\widetilde{p} = \lfloor \min \{ c_4', \tilde{c}_4 \} n \epsilon_n^2 / (2 \log G) \rfloor$, and then 
		\begin{equation} \label{upperboundfirsttesting}
		\mathbb{E}_{f_0} \phi_n \leq \exp ( - \check{c}_4 n \epsilon_n^2 ),
		\end{equation} 
		for some $\check{c}_4 > 0$. Next, define the set,
		\begin{align*}
		\begin{array}{ll}
		\mathcal{C} & = \left\{ \lVert \betab - \betab_0 \rVert_2 \geq (3+\sqrt{\nu_1}) \sigma_0 \epsilon_n \textrm{ or } \sigma^2 / \sigma_0^2 > (1+\epsilon_n)/(1-\epsilon_n) \right. \\
		& \qquad \left.\textrm{ or } \sigma^2 / \sigma_0^2 < (1-\epsilon_n)/(1+\epsilon_n ) \right\}.
		\end{array}
		\end{align*}
		By Lemma \ref{auxlemma2}, we have
		\begin{equation} \label{upperboundsecondtesting1}
		\displaystyle \sup_{ \begin{array}{rl} f \in \mathcal{F}_n: & \lVert \betab - \betab_0 \rVert_2 \geq  (3+\sqrt{\nu_1}) \sigma_0 \epsilon_n , \\ & \textrm{ or } \lvert \sigma^2 - \sigma_0^2 \rvert \geq 4 \sigma_0^2 \epsilon_n \end{array}} \mathbb{E}_f (1-\phi_n) \leq \displaystyle \sup_{ f \in \mathcal{F}_n: (\betab, \sigma^2) \in \mathcal{C}} \mathbb{E}_f (1 - \phi_n). \numbereqn
		\end{equation}
		Similar to \cite{SongLiang2017}, we consider $\mathcal{C} \subset \widehat{\mathcal{C}} \cup \widetilde{\mathcal{C}}$, where
		\begin{align*}
		& \widehat{\mathcal{C}} = \{ \sigma^2/\sigma_0^2 > (1+\epsilon_n)/(1-\epsilon_n) \textrm{ or } \sigma^2 / \sigma_0^2 < (1-\epsilon_n)/(1+\epsilon_n) \}, \\
		& \tilde{\mathcal{C}} = \{  \lVert \betab - \betab_0 \rVert \geq (3 + \sqrt{\nu_1}) \sigma_0 \epsilon_n \textrm{ and } \sigma^2 = \sigma_0^2 \},
		\end{align*}
		and so an upper bound for (\ref{upperboundsecondtesting1}) is
		\begin{align*} \label{upperboundsecondtesting2}
		& \displaystyle \sup_{f\in \mathcal{F}_n: (\betab, \sigma^2) \in \mathcal{C}} \mathbb{E}_f (1-\phi_n) = \displaystyle \sup_{f \in \mathcal{F}_n: (\betab, \sigma^2) \in \mathcal{C}} \mathbb{E}_f \min \{ 1-\phi_n', 1-\tilde{\phi}_n \} \\
		& \qquad \leq \max \left\{ \displaystyle \sup_{f \in \mathcal{F}_n: (\betab, \sigma^2) \in \hat{\mathcal{C}}} \mathbb{E}_f (1-\phi_n'), \displaystyle \sup_{f \in \mathcal{F}_n: (\betab, \sigma^2) \in \tilde{\mathcal{C}}} \mathbb{E}_f (1-\tilde{\phi}_n) \right\}. \numbereqn
		\end{align*}
		Let $\tilde{\xi} = \{ g: \lVert \betab_g \rVert_2 > \omega_g \} \cup S_0$, $m_{\tilde{\xi}} = \sum_{g \in \tilde{\xi}} m_g$, and $\tilde{\xi}^c = \{1, \ldots, G \} \setminus \tilde{\xi}$. For any $f \in \mathcal{F}_n$ such that $(\betab, \sigma^2) \in \hat{\mathcal{C}} \cup \tilde{\mathcal{C}}$, we must have then that $\lvert \tilde{\xi} \rvert \leq C_3 s_0 + s_0 \leq \bar{p}$, by Assumption \ref{A4}. By \eqref{sievecomplement2-pt4}, the prior puts exponentially vanishing probability on values of $\theta > \theta_0$ where $\theta_0 = C_3 s_0 \log G / G^c < 1/(G^2+1)$ for large $G$. Since $\lambda_0 = (1-\theta )/ \theta$ is monotonic decreasing in $\theta$, we have that with probability greater than $1 - e^{-C_3 n \epsilon_n^2}$, $\lambda_0 \geq G^2$. Combining this fact with Assumption \ref{A3} and using $\mathcal{F}_n$ in \eqref{sievedef}, we have that for any $f \in \mathcal{F}_n, (\betab, \sigma^2) \in \hat{C} \cup \tilde{C}$ and sufficiently large $n$,
		\begin{align*} \label{upperboundsecondtesting3}
		\lVert \Xb_{\tilde{\xi}^c} \betab_{\tilde{\xi}^c} \rVert_2 
		& \leq \sqrt{k n^{\alpha}} \lVert \betab_{\tilde{\xi}^c} \rVert_2 \\
		& \leq \sqrt{k n^{\alpha} } \left[ (G - \lvert \tilde{\xi} \rvert ) \displaystyle \max_{g \in \tilde{\xi}^c} \omega_g \right] \\
		& \leq  \sqrt{ k n^{\alpha} }  \left\{ \frac{G}{\lambda_0-\lambda_1} \log \left[ \frac{1-\theta}{\theta} \left( \frac{\lambda_0}{\lambda_1} \right)^{m_{\max}} \right] \right\} \\
		& \lesssim \min \{ \sqrt{k}, 1 \} \times \sqrt{\nu_1}\sqrt{n} \sigma_0 \epsilon_n, \numbereqn
		\end{align*}
		where $\nu$ is from Assumption \ref{A4}. In the above display, we used Lemma \ref{auxlemma3} in the second inequality, while the last inequality follows from our assumptions on $(\theta, \lambda_0, \lambda_1)$ and $m_{\max}$, so one can show that the bracketed term in the third line is asymptotically bounded above by $D \sqrt{\nu_1 } \sqrt{n^{1 -\alpha}} \sigma_0 \epsilon_n$ for large $n$ and any constant $D > 0$. Thus, using nearly identical arguments as those used to prove Part I of Theorem A.1 in \cite{SongLiang2017}, we have
		\begin{align*} \label{upperboundsecondtesting4}
		& \displaystyle \sup_{f \in \mathcal{F}_n: (\betab, \sigma^2) \in \hat{\mathcal{C}}} \mathbb{E}_f (1-\phi_n') \\
		& \qquad\leq \sup_{f \in \mathcal{F}_n: (\betab, \sigma^2) \in \hat{\mathcal{C}}} \Pr \left( \lvert \chi^2_{n-m_{\tilde{\xi}}} (\zeta) - (n - m_{\tilde{\xi}} ) \rvert \geq (n- m_{\tilde{\xi}} ) \epsilon_n \right) \\
		& \qquad \leq \exp( - \hat{c}_4 n \epsilon_n^2), \numbereqn
		\end{align*}
		where the noncentrality parameter $\zeta$ satisfies $\zeta \leq n \epsilon_n^2 \nu_1 \sigma_0^2  / 16 \sigma^2 $, and the last inequality follows from the fact that the noncentral $\chi^2$ distribution is subexponential and Bernstein's inequality (see Lemmas A.1 and A.2 in \cite{SongLiang2017}).
		
		Using the arguments in Part I of the proof of Theorem A.1 in \cite{SongLiang2017}, we also have that for large $n$, 
		\begin{align*}\label{upperboundsecondtesting5}
		& \displaystyle \sup_{f \in \mathcal{F}_n: (\betab, \sigma^2) \in \tilde{\mathcal{C}}} \mathbb{E}_f (1-\tilde{\phi}_n) \\ 
		& \qquad \leq \displaystyle \sup_{f \in \mathcal{F}_n: (\betab, \sigma^2) \in \hat{\mathcal{C}}} \Pr \left( \lVert ( \Xb_{\tilde{\xi}}^T \Xb_{\tilde{\xi}})^{-1} \Xb_{\tilde{\xi}}^T \bm{\varepsilon} \rVert_2 \geq \left[ \lVert \betab_{\tilde{\xi}} - \betab_{0 \tilde{\xi}} \lVert_2 - \sigma_0 \epsilon_n - \right. \right. \\
		& \left. \left.  \qquad \qquad \qquad \qquad \qquad  \lVert ( \Xb_{\tilde{\xi}}^T \Xb_{\tilde{\xi}} )^{-1} \Xb_{\tilde{\xi}}^T \Xb_{\tilde{\xi}^c} \betab_{\tilde{\xi}^c} \rVert_2 \right] / \sigma \right) \\
		& \qquad \leq  \displaystyle \sup_{f \in \mathcal{F}_n: (\betab, \sigma^2) \in \hat{\mathcal{C}}} \Pr \left( \lVert ( \Xb_{\tilde{\xi}}^T \Xb_{\tilde{\xi}})^{-1} \Xb_{\tilde{\xi}}^T \bm{\varepsilon} \rVert_2 \geq  \epsilon_n  \right) \\
		& \qquad \leq \sup_{f \in \mathcal{F}_n: (\betab, \sigma^2) \in \hat{\mathcal{C}}} \Pr ( \chi_{\lvert \widetilde{\xi} \rvert}^2 \geq n \nu_1 \epsilon_n^2 ) \\
		& \qquad \leq \exp ( - \tilde{c}_4 n \epsilon_n^2),  \numbereqn
		\end{align*}
		where the second inequality in the above display holds since $\lVert \betab_{\tilde{\xi}} - \betab_{0 \tilde{\xi}} \rVert_2 \geq \lVert \betab - \betab_0 \rVert_2 - \lVert \betab_{ \tilde{\xi}^c} \rVert_2 $, and since (\ref{upperboundsecondtesting3}) can be further bounded from above by $\sqrt{k n^{\alpha}} \sqrt{\nu_1} \sigma_0 \epsilon_n$ and thus $\lVert \betab_{\tilde{\xi}^c} \rVert \leq \sqrt{ \nu_1} \sigma_0 \epsilon_n$. Therefore, we have for $f \in \mathcal{F}_n, (\betab, \sigma^2) \in \tilde{C}$,
		\begin{align*}
		\lVert \betab_{\tilde{\xi}} - \betab_{0 \tilde{\xi}} \rVert_2  \geq (3 + \sqrt{\nu_1}) \sigma_0 \epsilon_n - \sqrt{\nu_1} \sigma_0 \epsilon_n = 3 \sigma_0 \epsilon_n,
		\end{align*}
		while by Assumption \ref{A4} and (\ref{upperboundsecondtesting3}), we also have
		\begin{align*}
		& \lVert ( \Xb_{\tilde{\xi}}^T \Xb_{\tilde{\xi}})^{-1} \Xb_{\tilde{\xi}}^T \Xb_{\tilde{\xi}^c} \betab_{\tilde{\xi}^c} \rVert_2 \leq \sqrt{ \lambda_{\max} \left( (\Xb_{\tilde{\xi}}^T \Xb_{\tilde{\xi}})^{-1} \right) } \lVert \Xb_{\tilde{\xi}^c} \betab_{\tilde{\xi}^c} \rVert_2 \\
		& \qquad \leq \left( \sqrt{1/n \nu_1} \right) \left( \sqrt{n \nu_1} \sigma_0 \epsilon_n \right) =  \sigma_0 \epsilon_n,
		\end{align*}
		and then we used the fact that on the set $\tilde{C}$, $\sigma = \sigma_0$. The last three inequalities in (\ref{upperboundsecondtesting5}) follow from Assumption \ref{A4}, the fact that $\lvert \widetilde{\xi} \rvert \leq \bar{p} \prec n \epsilon_n^2$, and the fact that for all $m>0$, $\Pr(\chi^2_m \geq x) \leq \exp(-x/4)$ whenever $x \geq 8m$. Altogether, combining (\ref{upperboundsecondtesting1})-(\ref{upperboundsecondtesting5}), we have that
		\begin{equation} \label{upperboundsecondtesting6}
		\displaystyle \sup_{ \begin{array}{rl} f \in \mathcal{F}_n: & \lVert \betab - \betab_0 \rVert_2 \geq (3+\sqrt{\nu}) \sigma_0 \epsilon_n, \\ & \textrm{ or } \lvert \sigma^2 - \sigma_0^2 \rvert \geq 4 \sigma_0^2 \epsilon_n \end{array} } \mathbb{E}_f (1 - \phi_n) \leq \exp \left( - \min \{ \hat{c}_4, \tilde{c}_4 \} n \epsilon_n^2 \right),
		\end{equation}
		where $\hat{c}_4 > 0$ and $\tilde{c}_4 > 0$ are the constants from (\ref{upperboundsecondtesting4}) and (\ref{upperboundsecondtesting5}). 
		
		Now set $C_4 =  \min \{ \hat{c}_4, \tilde{c}_4, \check{c}_4 \}$, where $\check{c}_4$ is the constant from (\ref{upperboundfirsttesting}). By and (\ref{upperboundfirsttesting}) and (\ref{upperboundsecondtesting6}), this choice of $C_4$ will satisfy both testing conditions (\ref{testingcond2}) and (\ref{testingcond3}).
		
		Since we have verified (\ref{KullbackLeiblercond}) and (\ref{testingcond1})-(\ref{testingcond3}) for $\epsilon_n = \sqrt{s_0 \log G / n}$, we have
		\begin{align*}
		\Pi \left( \betab : \lVert \betab - \betab_0 \rVert_2 \geq (3+\sqrt{\nu}) \sigma_0 \epsilon_n \bigg| \yb \right) \rightarrow 0 \textrm{ a.s. } \mathbb{P}_0 \textrm{ as } n, G \rightarrow \infty,
		\end{align*}
		and 
		\begin{align*} 
		\Pi \left( \sigma^2: \lvert \sigma^2 - \sigma_0^2 \rvert \geq 4 \sigma_0^2 \epsilon_n  \vert \yb \right) \rightarrow 0 \textrm{ as } n \rightarrow \infty, \textrm{ a.s. } \mathbb{P}_0 \textrm{ as } n, G \rightarrow \infty,
		\end{align*}
		i.e. we have proven (\ref{l2contraction}) and (\ref{varianceconsistency}).
		\vspace{.4cm}
		
		\noindent \textbf{Part III. Posterior contraction under prediction error loss.}
		The proof is very similar to the proof of (\ref{l2contraction}). The only difference is the testing conditions. We use the same sieve $\mathcal{F}_n$ as that in (\ref{sievedef}) so that (\ref{testingcond1}) holds, but now, we need to show the existence of a different sequence of test functions $\tau_n \in [0,1]$ such that
		\begin{equation} \label{testingcond2prediction}
		\mathbb{E}_{f_0} \tau_n \leq e^{-C_4 n \epsilon_n^2},
		\end{equation}
		and 
		\begin{equation} \label{testingcond3prediction}
		\displaystyle \sup_{ \begin{array}{rl} f \in \mathcal{F}_n: & \lVert \Xb \betab - \Xb \betab_0 \rVert_2 \geq M_2 \sigma_0 \sqrt{n} \epsilon_n, \\ & \textrm{ or } \lvert \sigma^2 - \sigma_0^2 \rvert \geq 4 \sigma_0^2 \epsilon_n \end{array} } \mathbb{E}_f (1 - \tau_n) \leq e^{-C_4 n \epsilon_n^2}.
		\end{equation}
		Let $\widetilde{p}$ be the same integer from (\ref{testfunction}) and consider the test function $\tau_n = \max \{ \tau_n', \tilde{\tau}_n \}$, where
		\begin{equation} \label{testfunctionprediction}
		\begin{array}{ll}
		\tau_n' = \displaystyle \max_{\xi \supset S_0, \lvert \xi \rvert \leq \widetilde{p}+s_0} 1 \left\{ \lvert \widehat{\sigma}_{\xi}^2 - \sigma_0^2 \rvert \geq \sigma_0^2 \epsilon_n \right\}, & \textrm{ and } \\
		\tilde{\tau}_n = \displaystyle \max_{\xi \supset S_0, \lvert \xi \rvert \leq \widetilde{p}+s_0} 1 \left\{ \lVert \Xb_{\xi} \widehat{\betab}_{\xi} - \Xb_{\xi} \betab_{0 \xi} \rVert_2 \geq \sigma_0 \sqrt{n} \epsilon_n  \right\}. & 
		\end{array}
		\end{equation}
		Using Assumption \ref{A4} that for any $\xi \subset \{1, \ldots, G \}$ such that $\lvert \xi \rvert \leq \bar{p}$, $\lambda_{\max} ( \Xb_{\xi}^T \Xb_{\xi}) \leq n \nu_2$ for some $\nu_2 > 0$, we have that
		\begin{equation*}
		\lVert \Xb_{\xi} \widehat{\betab}_{\xi} - \Xb_{\xi} \betab_{0 \xi} \lVert_2 \leq \sqrt{n \nu_2 }  \lVert \widehat{\betab}_{\xi} - \betab_{0 \xi} \lVert_2,
		\end{equation*}
		and so
		\begin{equation*}
		\Pr \left(  \lVert \Xb_{\xi} \widehat{\betab}_{\xi} - \Xb_{\xi} \betab_{0 \xi} \lVert_2 \geq \sigma_0 \sqrt{n} \epsilon_n \right) \leq \Pr \left( \lVert \widehat{\betab}_{\xi} - \betab_{0 \xi} \lVert_2 \geq \nu_2^{-1/2} \sigma_0 \epsilon_n  \right).
		\end{equation*}
		Therefore, using similar steps as those in Part II of the proof, we can show that our chosen sequence of tests $\tau_n$ satisfies (\ref{testingcond2prediction}) and (\ref{testingcond3prediction}). We thus arrive at
		\begin{align*}
		\Pi \left( \betab : \lVert \betab - \betab_0 \rVert_2 \geq M_2 \sigma_0 \epsilon_n \bigg| \yb \right) \rightarrow 0 \textrm{ a.s. } \mathbb{P}_0 \textrm{ as } n, G \rightarrow \infty,
		\end{align*}
		i.e. we have proven (\ref{predictioncontraction}).
	\end{proof}
	
	\begin{proof}[Proof of Theorem \ref{dimensionalitygroupedregression}]
		According to Part I of the proof of Theorem \ref{posteriorcontractiongroupedregression}, we have that for $\epsilon_n = \sqrt{s_0 \log G / n}$,
		\begin{align*}
		\Pi \left( K(f_0, f) \leq n \epsilon_n^2, V(f_0, f) \leq n \epsilon_n^2 \right) \geq \exp \left( -C n \epsilon_n^2 \right) 
		\end{align*}
		for some $C>0$. Thus, by Lemma 8.10 of \cite{GhosalVanDerVaart2017}, there exist positive constants $C_1$ and $C_2$ such that the event,
		\begin{equation} \label{eventEn}
		E_n = \left\{ \displaystyle \int \displaystyle \int \frac{f (\yb) }{f_0 (\yb)} d \Pi (\betab) d \Pi(\sigma^2) \geq e^{-C_1 n \epsilon_n^2} \right\},
		\end{equation}
		satisfies 
		\begin{equation} \label{probEncomplement}
		\mathbb{P}_0 ( E_n^c) \leq e^{-(1+C_2) n \epsilon_n^2}.
		\end{equation}
		Define the set $\mathcal{T} = \{ \betab : \lvert \bm{\gamma} (\betab) \rvert \leq C_3 s_0 \}$, where we choose $C_3 > 1+C_2$. We must show that $\mathbb{E}_0 \Pi ( \mathcal{T}^c \vert \yb ) \rightarrow 0$ as $n \rightarrow \infty.$ The posterior probability $\Pi ( \mathcal{T}^c \vert \yb)$ is given by
		\begin{equation} \label{posteriorprobTcomplement}
		\Pi ( \mathcal{T}^c \vert \yb ) = \frac{ \displaystyle \int \int_{\mathcal{T}^c} \frac{f (\yb)}{f_0 (\yb)} d \Pi (\betab) d \Pi ( \sigma^2)}{ \displaystyle \int \int \frac{f (\yb) }{f_0 (\yb)} d \Pi (\betab) d \Pi (\sigma^2)}.
		\end{equation}
		By (\ref{probEncomplement}), the denominator of (\ref{posteriorprobTcomplement}) is bounded below by $e^{-(1+C_2) n \epsilon_n^2}$. For the numerator of (\ref{posteriorprobTcomplement}), we have as an upper bound,
		\begin{equation} \label{numeratorupperbound}
		\mathbb{E}_0 \left( \displaystyle \int \int_{\mathcal{T}^c} \frac{f (\yb)}{f_0 (\yb)}  d \Pi (\betab) \Pi (\sigma^2) \right) \leq \displaystyle \int_{\mathcal{T}^c} d \Pi (\betab) = \Pi \left( \lvert \bm{\gamma} ( \betab ) \rvert > C_3 s_0 \right). 
		\end{equation}
		Using the same arguments as (\ref{sievecomplement2})-(\ref{sievecomplement3}) in the proof of Theorem \ref{posteriorcontractiongroupedregression}, we can show that 
		\begin{equation} \label{numeratorupperbound2}
		\Pi \left( \lvert \bm{\gamma} ( \betab ) \rvert > C_3 s_0 \right) \prec e^{-C_3 n \epsilon_n^2}.
		\end{equation}
		Combining (\ref{eventEn})-(\ref{numeratorupperbound}), we have that
		\begin{align*}
		\mathbb{E}_0 \Pi \left( \mathcal{T}^c \vert \yb \right) &  \leq \mathbb{E}_0 \Pi ( \mathcal{T}^c \vert \yb ) 1_{E_n} + \mathbb{P}_0 (E_n^c) \\
		& < \exp \left( (1+C_2) n \epsilon_n^2 - C_3 n \epsilon_n^2 \right) + o(1) \\
		& \rightarrow 0 \textrm{ as } n, G \rightarrow \infty,
		\end{align*}
		since $C_3 > 1+C_2$. This proves (\ref{posteriorcompressibility}).
	\end{proof}
	
	\begin{proof}[Proof of Theorem \ref{contractionGAMs}]
		Let $f_{0j}(\bm{X}_j)$ be an $n \times 1$ vector with $i$th entry equal to $f_{0j}(X_{ij})$. Note that proving posterior contraction with respect to the empirical norm \eqref{empiricalcontractionGAM} is equivalent to proving that
		\begin{equation} \label{predictioncontractionGAM}
		\Pi \left( \bm{\beta}: \lVert \widetilde{\bm{X}} \bm{\beta} - \sum_{j=1}^{p} f_{0j} (\bm{X}_j) \rVert_2 \geq \widetilde{M}_1 \sqrt{n} \epsilon_n \bigg| \yb \right) \rightarrow 0 \textrm{ a.s. } \widetilde{\mathbb{P}}_0 \textrm{ as } n, p \rightarrow \infty,
		\end{equation}
		so to prove the theorem, it suffices to prove \eqref{predictioncontractionGAM}. Let $f \sim \mathcal{N}_n( \widetilde{\Xb} \betab, \sigma^2 \bm{I}_n)$ and $f_0 \sim \mathcal{N}_n ( \widetilde{\Xb} \betab_{0} + \bm{\delta}, \sigma_0^2 \bm{I}_n )$, and let $\Pi(\cdot)$ denote the prior (\ref{hiermodel}). Similar to the proof for Theorem \ref{posteriorcontractiongroupedregression}, we show that for our choice of $\epsilon_n^2 = s_0 \log p /n + s_0 n^{-2 \kappa / (2 \kappa + 1)}$ and some constant $C_1 > 0$,
		\begin{equation} \label{KullbackLeiblercondGAMs}
		\Pi \left( K (f_0, f) \leq n \epsilon_n^2, V(f_0, f) \leq n \epsilon_n^2 \right) \geq \exp (-C_1 n \epsilon_n^2),
		\end{equation}
		and the existence of a sieve $\mathcal{F}_n$ such that 
		\begin{equation} \label{testingGAMcond1}
		\Pi ( \mathcal{F}_n^c) \leq \exp(-C_2 n \epsilon_n^2),
		\end{equation}
		for positive constant $C_2 > C_1+2$, and a sequence of test functions $\phi_n \in [0,1]$ such that
		\begin{equation} \label{testingGAMcond2}
		\mathbb{E}_{f_0} \phi_n \leq e^{-C_4 n \epsilon_n^2},
		\end{equation}
		and 
		\begin{equation} \label{testingGAMcond3}
		\displaystyle \sup_{ \begin{array}{rl} f \in \mathcal{F}_n: & \lVert \widetilde{\Xb} \betab - \sum_{j=1}^{p} f_{0j} ( \Xb_{j} ) \rVert_2 \geq \tilde{c}_0 \sigma_0 \sqrt{n} \epsilon_n, \\ & \textrm{ or } \lvert \sigma^2 - \sigma_0^2 \rvert \geq 4 \sigma_0^2 \epsilon_n \end{array} } \mathbb{E}_f (1 - \phi_n) \leq e^{-C_4 n \epsilon_n^2},
		\end{equation}
		for some $C_4 > 0$ and $\tilde{c}_0 > 0$.
		
		We first verify (\ref{KullbackLeiblercondGAMs}). The KL divergence between $f_0$ and $f$ is
		\begin{equation} \label{KLdivGAM}
		K(f_0, f) = \frac{1}{2} \left[ n \left( \frac{\sigma_0^2}{\sigma^2} \right) - n - n \log \left( \frac{\sigma_0^2}{\sigma^2} \right) + \frac{ \lVert \widetilde{\Xb} ( \betab - \betab_0 ) - \bm{\delta} \rVert_2^2}{\sigma^2}   \right],
		\end{equation}
		and the KL variation between $f_0$ and $f$ is
		\begin{equation} \label{KLvarGAM}
		V(f_0, f) = \frac{1}{2} \left[ n \left( \frac{\sigma_0^2}{\sigma^2} \right)^2 - 2n \left( \frac{\sigma_0^2}{\sigma^2} \right) + n  \right] + \frac{\sigma_0^2}{(\sigma^2)^{2}} \lVert \widetilde{\Xb} ( \betab - \betab_0 ) - \bm{\delta} \rVert_2^2.
		\end{equation}
		Define the two events $\widetilde{\mathcal{A}}_1$ and $\widetilde{\mathcal{A}}_2$ as follows:
		\begin{equation} \label{eventA1GAM}
		\widetilde{\mathcal{A}}_1 = \left\{ \sigma^2: n \left( \frac{\sigma_0^2}{\sigma^2} \right) - n - n \log \left( \frac{\sigma_0^2}{\sigma^2} \right) \leq n \epsilon_n^2, \right. \\
		\left. n \left( \frac{\sigma_0^2}{\sigma^2} \right)^2 - 2n \left( \frac{\sigma_0^2}{\sigma^2} \right) + n \leq n \epsilon_n^2  \right\}
		\end{equation}
		and
		\begin{equation} \label{eventA2GAM}
		\widetilde{\mathcal{A}}_2 = \left\{ (\betab, \sigma^2): \frac{ \lVert \widetilde{\bm{X}} ( \bm{\beta} - \bm{\beta}_0 ) - \bm{\delta} \rVert_2^2}{\sigma^2} \leq n \epsilon_n^2, \right. \\
		\left. \frac{\sigma_0^2}{(\sigma^2)^{2}} \lVert \widetilde{\Xb} ( \betab - \betab_0 ) - \bm{\delta} \rVert_2^2 \leq n \epsilon_n^2/2 \right\}.
		\end{equation}
		Following from (\ref{KLdivGAM})-(\ref{eventA2GAM}), we have $\Pi ( K(f_0, f) \leq n \epsilon_n^2, V(f_0, f) \leq n \epsilon_n^2) = \Pi (\widetilde{\mathcal{A}}_2 \vert \widetilde{\mathcal{A}}_1 ) \Pi (\widetilde{\mathcal{A}}_1)$. Using the steps we used to prove (\ref{neglogA1upper}) in part I of the proof of Theorem \ref{posteriorcontractiongroupedregression}, we have
		\begin{equation} \label{A1upperGAM}
		\Pi ( \widetilde{\mathcal{A}}_1 ) \gtrsim \exp (- C_1 n \epsilon_n^2 / 2),
		\end{equation}
		for some sufficiently large $C_1 > 0$. Following similar reasoning as in the proof of Theorem \ref{posteriorcontractiongroupedregression}, we also have for some $b_2 > 0$,
		\begin{equation} \label{lowerboundA2givenA1GAM}
		\Pi \left( \widetilde{A}_2 \vert \widetilde{A}_1 \right) \geq \Pi \left( \lVert \widetilde{\Xb} (\betab - \betab_{0}) - \bm{\delta} \rVert_2^2 \leq \frac{b_2^2 n \epsilon_n^2}{2} \right). 
		\end{equation}
		Using Assumptions \ref{B3} and \ref{B6}, we then have
		\begin{align*}
		\lVert \widetilde{\Xb} ( \betab - \betab_{0}) - \bm{\delta} \rVert_2^2 & \leq \left( \lVert \widetilde{\Xb} ( \betab - \betab_0) \rVert_2 + \lVert \bm{\delta} \rVert_2 \right)^2 \\
		& \leq 2 \lVert \widetilde{\Xb} ( \betab - \betab_0 )  \rVert_2^2 + 2 \lVert \bm{\delta} \rVert_2^2 \\
		& \lesssim 2 \left( n k_1 \lVert \betab - \betab_0 \rVert_2^2 + \frac{ k_1 b_2^2 n s_0 d^{-2 \kappa}}{4} \right) \\ 
		& \asymp 2n \left( \lVert \betab - \betab_0 \rVert_2^2 + \frac{ b_2^2 s_0 d^{-2 \kappa}}{4} \right),
		\end{align*}
		and so (\ref{lowerboundA2givenA1GAM}) can be asymptotically lower bounded by
		\begin{align*} \label{lowerboundA2givenA1GAM2}
		& \Pi \left(  \lVert \betab - \betab_0 \rVert_2^2 + \frac{b_2^2 s_0 d^{-2 \kappa}}{4} \leq \frac{b_2^2 \epsilon_n^2}{4 } \right) \\
		& = \Pi \left( \lVert \betab - \betab_0 \rVert_2^2 \leq \frac{b_2^2}{4} \left( \epsilon_n^2 - s_0 n^{- 2 \kappa / (2 \kappa + 1)} \right) \right),
		\end{align*}
		where we used Assumption \ref{B1} that $d \asymp n^{1 / (2 \kappa + 1)}$. Using very similar arguments as those used to prove (\ref{neglogpiA2givenA1pt2}), this term can also be lower bounded by  $\exp (- C_1 n \epsilon_n^2 /2 )$. Altogether, we have
		\begin{equation} \label{lowerboundA2givenA1GAM3}
		\Pi( \widetilde{A}_2 \vert \widetilde{A}_1 ) \gtrsim \exp ( -  C_1 \epsilon_n^2 / 2).
		\end{equation}
		Combining (\ref{A1upperGAM}) and (\ref{lowerboundA2givenA1GAM3}), we have that (\ref{KullbackLeiblercondGAMs}) holds. To verify (\ref{testingGAMcond1}), we choose $C_3 \geq C_1 + 2 + \log 3$ and use the same sieve $\mathcal{F}_n$ as the one we employed in the proof of Theorem \ref{posteriorcontractiongroupedregression} (eq. (\ref{sievedef})), and then (\ref{testingGAMcond1}) holds for our choice of $\mathcal{F}_n$.
		
		Finally, we follow the recipe of  \citet{WeiReichHoppinGhosal2018} and \citet{SongLiang2017} to construct our test function $\phi_n$ which will satisfy both (\ref{testingGAMcond2}) and (\ref{testingGAMcond3}). For $\xi \subset \{1, \ldots, p \}$, let $\widetilde{\Xb}_{\xi}$ denote the submatrix of $\widetilde{\Xb}$ with submatrices indexed by $\xi$, where $\lvert \xi \rvert \leq \bar{p}$ and $\bar{p}$ is from Assumption \ref{B4}. Let $\widehat{\betab}_{\xi} = ( \widetilde{\Xb}_{\xi}^T \widetilde{\Xb}_{\xi})^{-1} \widetilde{\Xb}_{\xi}^T \yb$ and $\betab_{0 \xi}$ denote the subvector of $\betab_0$ with basis coefficients appearing in $\xi$. Then the total number of elements in $\widehat{\betab}_{\xi}$ is $d \lvert \xi \rvert $. Finally, let $\widehat{\sigma}_{\xi}^2 = \yb^T ( \bm{I}_n - \bm{H}_{\xi} ) \yb / (n - d \lvert \xi \rvert  )$, where $\bm{H}_{\xi} = \widetilde{\Xb}_{\xi} ( \widetilde{\Xb}_{\xi}^T \widetilde{\Xb}_{\xi} )^{-1} \widetilde{\Xb}_{\xi}^T$ is the hat matrix for the subgroup $\xi$. 
		
		Let $\widetilde{p}$ be an integer satisfying $\widetilde{p} \asymp s_0$ and $\widetilde{p} \leq \bar{p} - s_0$, where $\bar{p}$ is from Assumption \ref{B4} and the specific choice for $\widetilde{p}$ will be given later. Recall that $S_0$ is the set of true nonzero groups with cardinality $s_0 = \lvert S_0 \rvert$. Similar to \cite{WeiReichHoppinGhosal2018}, we consider the test function, $\phi_n = \max \{ \phi_n', \tilde{\phi}_n \}$, where
		\begin{equation} \label{testfunctionGAM}
		\begin{array}{ll}
		\phi_n' = \displaystyle \max_{\xi \supset S_0, \lvert \xi \rvert \leq \widetilde{p}+s_0} 1 \left\{ \lvert \widehat{\sigma}_{\xi}^2 - \sigma_0^2 \rvert \geq c_0 ' \sigma_0^2 \epsilon_n \right\}, & \textrm{ and } \\
		\tilde{\phi}_n = \displaystyle \max_{\xi \supset S_0, \lvert \xi \rvert \leq \widetilde{p}+s_0} 1 \left\{ \bigg| \bigg| \widetilde{\Xb} \widehat{\betab}_{\xi} - \displaystyle \sum_{j \in \xi} f_{0j} ( \Xb_{j} ) \bigg| \bigg|_2 \geq \tilde{c}_0 \sigma_0 \sqrt{n} \epsilon_n  \right\}, & 
		\end{array}
		\end{equation}
		for some positive constants $c_0 '$ and $\tilde{c}_0$. Using Assumptions \ref{B1} and \ref{B4}, we have that for any $\xi$ in our test $\phi_n$, $d \lvert \xi \rvert \leq d (\widetilde{p}+s_0) \leq d \bar{p} \prec n \epsilon_n^2$. Using essentially the same arguments as those in the proof for Theorem 4.1 in \cite{WeiReichHoppinGhosal2018}, we have that for any $\xi$ which satisfies $\xi \supset S_0$ so that $\lvert \xi \rvert \leq \widetilde{p} + s_0$,
		\begin{equation} \label{upperboundfirsttestingGAM1}
		\mathbb{E}_{(\betab_0, \sigma_0^2)} 1 \left\{ \lvert \widehat{\sigma}_{\xi}^2 - \sigma_0^2 \rvert \geq c_0 ' \epsilon_n \right\} \leq \exp(-c_4' n \epsilon_n^2),
		\end{equation}
		for some $c_0 '' > 0$. By Assumption \ref{B3}, we also have
		\begin{align*}
		\bigg| \bigg| \widetilde{\Xb} \widehat{\betab} - \displaystyle \sum_{j=1}^{p} f_{0j} ( \Xb_{j} ) \bigg| \bigg|_2 & = \lVert \widetilde{\Xb} (\widehat{\betab} - \betab_0 ) - \bm{\delta} \rVert_2 \\
		& \leq \sqrt{n k_1} \lVert \widehat{\betab} - \betab_0 \rVert_2 + \lVert \bm{\delta} \rVert_2,
		\end{align*}
		and using the fact that $ \lVert \bm{\delta} \rVert_2 \lesssim \sqrt{n s_0} d^{-\kappa} \lesssim \tilde{c}_0 \sigma_0 \sqrt{n} \epsilon_n / 2$ (by Assumptions \ref{B1} and \ref{B6}), we have that for any $\xi$ such that $\xi \supset S_0, \lvert \xi \rvert \leq \widetilde{p} + s_0$,
		\begin{align*}
		& \mathbb{E}_{(\betab_0, \sigma_0^2)} 1 \left\{ \bigg| \bigg| \widetilde{\Xb} \widehat{\betab} - \displaystyle \sum_{j=1}^{p} f_{0j} ( \Xb_j ) \bigg| \bigg|_2 \geq \tilde{c}_0 \sigma_0 \sqrt{n} \epsilon_n \right\} \\
		& \qquad \leq \mathbb{E}_{(\betab_0, \sigma_0^2)} \left\{ \lVert \widehat{\betab} - \betab_0 \rVert_2 \geq \tilde{c}_0 \sigma_0 \epsilon_n / 2 \sqrt{k_1} \right\} \\
		& \qquad \leq \exp ( -\tilde{c}_4 n \epsilon_n^2), 
		\end{align*}
		for some $\tilde{c}_4 > 0$, where we used the proof of Theorem A.1 in \cite{SongLiang2017} to arrive at the final inequality. Again, as in the proof of Theorem A.1 of \cite{SongLiang2017}, we choose $\widetilde{p} = \lfloor \min \{ c_4', \tilde{c}_4 \} n \epsilon_n^2 / (2 \log p) \rfloor$, and then
		\begin{equation} \label{upperboundfirsttestingGAM2}
		\mathbb{E}_{f_0} \phi_n \leq \exp( - \check{c}_4 n \epsilon_n^2),
		\end{equation}
		for some $\check{c}_4 > 0$. Next, we define the set,
		\begin{align*}
		\begin{array}{ll}
		\mathcal{C} & = \left\{ \lVert \widetilde{\Xb} \betab - \sum_{j=1}^{p} f_{0j} ( \Xb_j ) \rVert_2 \geq \tilde{c}_0 \sigma_0 \sqrt{n} \epsilon_n \textrm{ or } \sigma^2 / \sigma_0^2 > (1+\epsilon_n)/(1-\epsilon_n) \right. \\
		& \qquad \left.\textrm{ or } \sigma^2 / \sigma_0^2 < (1-\epsilon_n)/(1+\epsilon_n ) \right\}
		\end{array}.
		\end{align*}
		By Lemma \ref{auxlemma2}, we have
		\begin{align*} \label{upperboundsecondtesting1GAM}
		& \displaystyle \sup_{ \begin{array}{rl} f \in \mathcal{F}_n: & \lVert \widetilde{\Xb} \betab - \sum_{j=1}^{p} f_{0j} ( \Xb_j ) \rVert_2 \geq \tilde{c}_0 \sigma_0 \sqrt{n} \epsilon_n , \\ & \textrm{ or } \lvert \sigma^2 - \sigma_0^2 \rvert \geq 4 \sigma_0^2 \epsilon_n \end{array}} \mathbb{E}_f (1-\phi_n) \\
		&\qquad \qquad \leq \displaystyle \sup_{ f \in \mathcal{F}_n: (\betab, \sigma^2) \in \mathcal{C}} \mathbb{E}_f (1 - \phi_n). \numbereqn
		\end{align*}
		Similar to \cite{SongLiang2017}, we consider $\mathcal{C} \subset \widehat{\mathcal{C}} \cup \widetilde{\mathcal{C}}$, where
		\begin{align*}
		& \widehat{\mathcal{C}} = \{ \sigma^2/\sigma_0^2 > (1+\epsilon_n)/(1-\epsilon_n) \textrm{ or } \sigma^2 / \sigma_0^2 < (1-\epsilon_n)/(1+\epsilon_n) \}, \\
		& \tilde{\mathcal{C}} = \{  \lVert \widetilde{\Xb} \betab - \sum_{j=1}^{p} f_{0j} ( \Xb_j ) \rVert_2 \geq \tilde{c}_0 \sigma_0 \epsilon_n \textrm{ and } \sigma^2 = \sigma_0^2 \},
		\end{align*}
		and so an upper bound for (\ref{upperboundsecondtesting1GAM}) is
		\begin{align*} \label{upperboundsecondtesting2GAM}
		& \displaystyle \sup_{f\in \mathcal{F}_n: (\betab, \sigma^2) \in \mathcal{C}} \mathbb{E}_f (1-\phi_n) = \displaystyle \sup_{f \in \mathcal{F}_n: (\betab, \sigma^2) \in \mathcal{C}} \mathbb{E}_f \min \{ 1-\phi_n', 1-\tilde{\phi}_n \} \\
		& \qquad \leq \max \left\{ \displaystyle \sup_{f \in \mathcal{F}_n: (\betab, \sigma^2) \in \hat{\mathcal{C}}} \mathbb{E}_f (1-\phi_n'), \displaystyle \sup_{f \in \mathcal{F}_n: (\betab, \sigma^2) \in \tilde{\mathcal{C}}} \mathbb{E}_f (1-\tilde{\phi}_n) \right\}. \numbereqn
		\end{align*}
		Using very similar arguments as those used to prove (\ref{upperboundsecondtesting6}) in Theorem \ref{posteriorcontractiongroupedregression} and using Assumptions \ref{B1} and \ref{B6}, so that the bias $\lVert \bm{\delta} \rVert_2^2 \lesssim n s_0 d^{-2 \kappa} \lesssim n \epsilon_n^2 $, we can show that  (\ref{upperboundsecondtesting2GAM}) can be further bounded from above as
		\begin{align*}  \label{upperboundsecondtesting3GAM}
		& \displaystyle \sup_{ \begin{array}{rl} f \in \mathcal{F}_n: & \lVert \widetilde{\Xb} \betab - \sum_{j=1}^{p} f_{0j} ( \Xb_j ) \rVert_2 \geq \tilde{c}_0 \sigma_0 \sqrt{n} \epsilon_n , \\ & \textrm{ or } \lvert \sigma^2 - \sigma_0^2 \rvert \geq 4 \sigma_0^2 \epsilon_n \end{array}} \mathbb{E}_f (1-\phi_n) \\
		& \qquad \qquad \leq \exp \left( - \min \{ \hat{c}_4, \tilde{c}_4 \} n \epsilon_n^2 \right), \numbereqn
		\end{align*}
		where $\hat{c}_4 > 0$ and $\tilde{c}_4 > 0$ are the constants from (\ref{upperboundfirsttestingGAM1}) and (\ref{upperboundfirsttestingGAM2}). 
		
		Choose $C_4 = \min \{ \check{c}_4, \hat{c}_4, \tilde{c}_4 \}$, and we have from (\ref{upperboundfirsttestingGAM2}) and (\ref{upperboundsecondtesting3GAM}) that (\ref{testingGAMcond2}) and (\ref{testingGAMcond3}) both hold. 
		
		Since we have verified (\ref{KullbackLeiblercondGAMs}) and (\ref{testingGAMcond1})-(\ref{testingGAMcond3})  for our choice of $\epsilon_n^2 = s_0 \log p / n + s_0 n^{-2 \kappa / (2 \kappa + 1)}$, it follows that
		\begin{align*}
		\Pi \left( \betab : \bigg| \bigg| \widetilde{\Xb} \betab - \displaystyle \sum_{j=1}^{p} f_{0j} ( \bm{X}_j ) \bigg| \bigg|_2 \geq \tilde{c}_0 \sigma_0 \sqrt{n} \epsilon_n \vert \yb \right) \rightarrow 0 \textrm{ a.s. } \widetilde{\mathbb{P}}_0 \textrm{ as } n, p \rightarrow \infty,
		\end{align*}
		and 
		\begin{align*} 
		\Pi \left( \sigma^2: \lvert \sigma^2 - \sigma_0^2 \rvert \geq 4 \sigma_0^2 \epsilon_n  \vert \yb \right) \rightarrow 0 \textrm{ as } n \rightarrow \infty, \textrm{ a.s. } \widetilde{\mathbb{P}}_0 \textrm{ as } n, p \rightarrow \infty,
		\end{align*}
		i.e. we have proven (\ref{predictioncontractionGAM}), or equivalently, (\ref{empiricalcontractionGAM}) and (\ref{GAMvarianceconsistency}).
		
	\end{proof}
	
	\begin{proof}[Proof of Theorem \ref{dimensionalityGAM}]
		The proof is very similar to the proof of Theorem \ref{dimensionalitygroupedregression} and is thus omitted.
	\end{proof}
	
\end{appendix}

\end{document}